\definecolor{webgreen}{rgb}{0,.5,0}
\definecolor{webbrown}{rgb}{.6,0,0}
\newcommand{\seqnum}[1]{\href{http://oeis.org/#1}{\underline{#1}}}
\DeclareMathOperator{\Fac}{Fac}
\begin{document}

\theoremstyle{plain}
\newtheorem{theorem}{Theorem}
\newtheorem{corollary}[theorem]{Corollary}
\newtheorem{lemma}[theorem]{Lemma}
\newtheorem{proposition}[theorem]{Proposition}
\newtheorem{openquestion}[theorem]{Open Question}

\theoremstyle{definition}
\newtheorem{definition}[theorem]{Definition}
\newtheorem{example}[theorem]{Example}
\newtheorem{conjecture}[theorem]{Conjecture}

\theoremstyle{remark}
\newtheorem{remark}[theorem]{Remark}

\def\g{{\mathbf g}}
\def\u{{\mathbf u}}
\def\U{{\mathbf U}}
\def\v{{\mathbf v}}
\def\w{{\mathbf w}}
\def\t{{\mathbf t}}
\def\T{{\mathbf T}}
\def\f{{\mathbf f}}
\def\F{{\mathbf F}}
\def\G{{\mathbf G}}
\def\ext{{\sf ext}}
\def\cwd{{\sf cwd}}
\def\cwk{{\sf cwk}}
\def\rext{{\sf rext}}
\def\Spec{{\sf Spec}}

\title{Subword complexity and power avoidance}

\author{Jeffrey Shallit\\
School of Computer Science\\
University of Waterloo \\
Waterloo, ON  N2L 3G1 \\
Canada \\
{\tt shallit@uwaterloo.ca} \\
\and 
Arseny Shur\\
Dept. of Algebra and Fundamental Informatics\\
Ural Federal University \\
620000 Ekaterinburg \\
Russia \\
{\tt arseny.shur@urfu.ru} 
}

\maketitle

\begin{abstract}

We begin a systematic study of the relations between subword complexity of infinite words and their power avoidance. Among other things, we show that\\
-- the Thue-Morse word has the minimum possible subword complexity over all overlap-free binary words and all $(\frac 73)$-power-free binary words, but not over all $(\frac 73)^+$-power-free binary words;\\
-- the twisted Thue-Morse word has the maximum possible subword complexity over all overlap-free binary words, but no word has the maximum subword complexity over all $(\frac 73)$-power-free binary words;\\
-- if some word attains the minimum possible subword complexity over all square-free ternary words, then one such word is the ternary Thue word;\\
-- the recently constructed 1-2-bonacci word has the minimum possible subword complexity over all \textit{symmetric} square-free ternary words.

\medskip\noindent \textbf{Keywords}: combinatorics on words, subword complexity, power-free word, critical exponent, Thue-Morse word
\end{abstract}

\section{Introduction}

Two major themes in combinatorics on words are {\it power avoidance} and {\it subword complexity} (also called {\it factor complexity\/} or just {\it complexity}).  In power avoidance, the main goals are to construct infinite words avoiding various kinds of repetitions (see, e.g., \cite{Rampersad&Shallit:2016}), and to count or estimate the number of length-$n$ finite words avoiding these repetitions (see, e.g., \cite{Shur:2012}).  In subword complexity, the main goal is to find explicit formulas for, or estimate, the number of distinct blocks of length $n$ appearing in a given infinite word (see, e.g., \cite{Ferenczi:1999}). In this paper we combine these two themes, beginning a systematic study of infinite binary and ternary words satisfying power avoidance restrictions. We follow two interlaced lines of research. First, given a power avoidance restriction, we study the set of infinite words satisfying this restriction, focusing on upper and lower bounds on their subword complexity, and examples of words of ``large'' and ``small'' complexity. Second, given a subword complexity restriction, we seek lower bounds on the powers avoided by infinite words of restricted complexity, and for words attaining these bounds. We also tried to cover the remaining white spots with open questions and conjectures. Most of the results are gathered in Table~\ref{tab1}; precise definitions are given below in Section~\ref{ss:def}. 

\begin{table}[!htb] \label{tab1}
\caption{Infinite power-avoiding binary and ternary words of small and large complexity. Question marks indicate conjectured results.}
\begin{center}
\def\arraystretch{1.2}
\begin{tabular}{|l|l|l|l|l|}
\hline
\multicolumn{1}{|c|}{Avoidance restriction}&\multicolumn{1}{|c|}{Small complexity}&\multicolumn{1}{|c|}{Large complexity}\\
\hline
\multicolumn{3}{|c|}{Binary words}\\
\hline
\rule{0pt}{4ex}
all overlap-free&\begin{minipage}{4.7cm} minimum:  Thue-Morse\\ word\end{minipage}&\begin{minipage}{4.5cm} maximum: twisted\\ Thue-Morse word \end{minipage}\\
\hline
\rule{0pt}{4ex}
symmetric overlap-free&\begin{minipage}{4.7cm} minimum:  Thue-Morse\\ word\end{minipage}&
\begin{minipage}{4.5cm} maximum: Thue-Morse\\ word \end{minipage} \\
\hline
\rule{0pt}{4ex}
$(\frac 73)$-power-free&\begin{minipage}{4.7cm} minimum:  Thue-Morse\\ word\end{minipage}& 
\begin{minipage}{4.5cm} maximum: no;\\ upper bound: $<4n$ \end{minipage}\\
\hline
\rule{0pt}{4ex}
symmetric $(\frac 73)$-power-free&\begin{minipage}{4.7cm} minimum:  Thue-Morse\\ word\end{minipage}&
\begin{minipage}{4.5cm} maximum: Thue-Morse\\ word \end{minipage} \\
\hline
\rule{0pt}{4ex}
$(\frac 73)^+$-power-free& \begin{minipage}{4.7cm} Thue-Morse word is not\\ of minimum complexity\end{minipage}& exponential\\
\hline
\rule{0pt}{4ex}
$(\frac 52)^+$-power-free& \begin{minipage}{4.7cm}minimum(?): $2n$\\ (new word)\end{minipage} & \\
\hline
\rule{0pt}{4ex}
$(\frac {5+\sqrt{5}}2)$-power-free& \begin{minipage}{4.7cm}minimum: $n+1$\\ (Fibonacci word)\end{minipage}&  \\
\hline
\multicolumn{3}{|c|}{Ternary words}\\
\hline
\rule{0pt}{4ex}
symmetric $(\frac 74)^+$-power-free& \begin{minipage}{4.7cm}minimal(?) growth:\\ $12n{+}O(1)$ (Arshon word)\end{minipage}&\\
\hline
\rule{0pt}{4ex}
symmetric $(\frac {5+\sqrt{5}}4)$-power-free& \begin{minipage}{4.7cm}minimal\,growth: $6n{+}O(1)$\\ (1-3-bonacci word)\end{minipage}&\\
\hline
\rule{0pt}{4ex}
square-free& \begin{minipage}{4.7cm}minimum(?): ternary\\ Thue word\end{minipage}&exponential\\
\hline
\rule{0pt}{4ex}
symmetric square-free& \begin{minipage}{4.7cm}minimum: $6n-6$\\ (1-2-bonacci word)\end{minipage}&\\
\hline
\rule{0pt}{4ex}
$(\frac 52)$-power-free& \begin{minipage}{4.7cm}minimum(?): $2n+1$\\ (new word)\end{minipage}&\\
\hline
\end{tabular}
\end{center}
\end{table}

The paper is organized as follows. After definitions, we study $\alpha$-power-free infinite binary words for $\alpha\le 7/3$ in Section~2. The number of distinct blocks of length $n$ in this case is quite small and all infinite words are densely related to the Thue-Morse word. We provide answers for most (but not all) questions about complexity of these words. After that, in Section~3 we study low-complexity $\alpha$-power-free infinite binary and ternary words for the case where $\alpha$ is big enough to provide sets of length-$n$ blocks of size exponential in $n$. Here we leave more white spots, but still obtain quite significant results, especially about square-free ternary words. Finally, in Section~4 we briefly consider high-complexity infinite words and relate the existence of words of ``very high'' complexity to an old problem by Restivo and Salemi.

\subsection{Definitions and notation} \label{ss:def}
Throughout we let $\Sigma_k$ denote the $k$-letter alphabet $\{ 0,1,\ldots, k{-}1 \}$. By $\Sigma_k^*$ we mean the set of all finite words over $\Sigma_k$, including the empty word $\varepsilon$. By $\Sigma_k^\omega$ we mean the set of all one-sided right-infinite words over $\Sigma_k$; throughout the paper they are referred to as ``infinite words''.  The length of a finite word $w$ is denoted by $|w|$. 

If $x = uvw$, for possibly empty words $u,v,w,x$, then we say that $u$ is a {\it prefix\/} of $x$, $w$ is a {\it suffix\/} of $x$, and $v$ is a {\it subword\/} (or {\it factor}) of $x$. A prefix (resp., suffix, factor) $v$ of $x$ is {\it proper} if $v \neq x$. A factor $v$ of $x$ can correspond to different factorizations of $x$: $x=u_1vw_1=u_2vw_2=\cdots$. Each factorization corresponds to an \textit{occurrence} of $v$ in $x$; the \textit{position} of an occurrence is the length of the prefix of $x$ preceding $v$. Thus occurrences of $v$ (and, in particular, occurrences of letters) are linearly ordered by their positions, and we may speak about the ``first'' or ``next'' occurrence. An infinite word is \textit{recurrent} if every factor has infinitely many occurrences.

Any map $f: \Sigma_k\to \Sigma_m^*$ ($k,m\ge 1$) can be uniquely extended to all finite and infinite words over $\Sigma_k$ by putting $f(a_0a_1\cdots)=f(a_0)f(a_1)\cdots$, where $a_i\in\Sigma_k$ for all $i$. Such extended maps are called \textit{morphisms}. A morphism is a \textit{coding} if it maps letters to letters.

The Thue-Morse word 
$${\t} = t_0 t_1 t_2 \cdots = 0110100110010110\cdots$$ 
is a well-studied infinite word with many equivalent definitions (see, e.g., \cite{Allouche&Shallit:1999}). The first is that $t_i$ counts the number of $1$'s, modulo $2$, in the binary expansion of $i$.  The second is that $\t$ is the fixed point, starting with $0$, of the morphism $\mu$ sending $0$ to $01$ and $1$ to $10$. 

A {\it language} is a set of finite words over $\Sigma_k$. A language $L$ is said to be {\it factorial\/} if $x \in L$ implies that every factor of $x$ is also in $L$.   If $\u$ is a one-sided or two-sided infinite word, then by $\Fac({\u})$ we mean the factorial language of all finite factors of $\u$. We call a language $L\subseteq \Sigma_k^*$ \textit{symmetric} if $f(L)=L$ for any bijective coding $f: \Sigma_k^*\to \Sigma_k^*$. An infinite word $\u$ is \textit{symmetric} if $\Fac({\u})$ is symmetric. For example, the Thue-Morse word is symmetric.

The so-called {\it subword complexity\/} or {\it factor complexity\/} of an infinite word $\bf x$ is the function $p_{\bf x} (n)$ that maps $n$ to the number of distinct subwords (factors) of length $n$ in $\bf x$.  If the context is clear, we just write $p(n)$. A more general notion is the \textit{growth function} (or \textit{combinatorial complexity}, or \textit{census function}) of a language $L$; it is the function $p_L(n)$ counting the number of words in $L$ of length $n$. Thus, $p_{\bf x}(n)=p_{\Fac({\bf x})}(n)$. These complexity functions can be roughly classified by their \textit{growth rate} $\limsup_{n\to\infty} (p(n))^{1/n}$; for factorial languages, the $\limsup$ can be replaced by $\lim$.  Exponential (resp., subexponential) words and languages have growth rate $>1$ (resp., 1); growth rate 0 implies a finite language. Infinite words constructed by some regular procedure (e.g., those generated by morphisms) usually have small, often linear, complexity (see, e.g., \cite{AlSh03}). 

We say that an infinite word has \textit{minimum} (resp., \textit{maximum}) subword complexity in a set of words $S$ if $p_{\u}(n)\le p_{\v}(n)$ (resp., $p_{\u}(n)\ge p_{\v}(n)$) for every word $\v\in S$ and every $n\ge 0$.

An integer power of a word $x$ is a word of the form $x^n = \overbrace{xx\cdots x}^n$.  If $x$ is nonempty then by $x^\omega$ we mean the infinite word $xxx\cdots$. Integer powers can be generalized to fractional powers as follows:  by $x^\alpha$, for a real number $\alpha \geq 1$, we mean the prefix of length $\lceil \alpha |x| \rceil$ of the infinite word $x^\omega$. If $u$ is a finite word and $x$ is the shortest word such that $u$ is a prefix of $x^\omega$, then the ratio $|u|/|x|$ is called the \textit{exponent} of $u$. The \textit{critical} (or \textit{local}) \textit{exponent} of a finite or infinite word $\u$ is the supremum of the exponents of its factors. Thus, for example, the French word {\tt contentent} (as in {\it ils se contentent}) has a suffix that is the $(3+\sqrt{5})/2 = 2.61803 \cdots$'th power of the word {\tt nte}, as well as the $(8/3)$'th power of this word; the exponent of the word {\tt contentent} is 1, and its critical exponent is $8/3$.

We say a finite or infinite word \textit{is $\alpha$-power-free} or \textit{avoids $\alpha$-powers} if it has no factors that are $\beta$-powers for $\beta \geq \alpha$. Similarly, a finite or infinite word \textit{is $\alpha^+$-power-free} or \textit{avoids $\alpha^+$-powers}  if it has no factors that are $\beta$-powers for $\beta > \alpha$. In what follows, we use only the term ``$\alpha$-power", assuming that $\alpha$ is either a number or  a ``number with a +''. We write $L_{k,\alpha}$ for the language of all finite $k$-ary $\alpha$-power-free words. The growth functions of these power-free languages were studied in a number of papers; see the survey \cite{Shur:2012} for details. For  our study, we need the following rough classification of growth functions of infinite power-free languages \cite{Restivo&Salemi:1985a, Karhumaki&Shallit:2004, KoRa11, ShGo10, TuSh12}: the languages $L_{2,\alpha}$ for $2^+\le\alpha\le 7/3$ have polynomial growth functions, while all other power-free languages are conjectured to have exponential growth functions.  This conjecture has been proved for all power-free languages over the alphabets of even size up to 10 and of odd size up to 101. The ``polynomial plateau'' of binary power-free languages possesses several distinctive properties due to their intimate connection to the Thue-Morse word (see, e.g.,  \cite[Section 2.2]{Shur:2012}).

An infinite word is called \textit{periodic} if it has a suffix $x^\omega$ for some nonempty word $x$; otherwise, it is called \textit{aperiodic}. Obviously, all power-free words are aperiodic. The minimum subword complexity of an aperiodic word is $n+1$, reached by the class of \textit{Sturmian words} \cite{MoHe40}.

A finite word $x$ from a language $L$ is \emph{right extendable} in $L$ if for every integer $n$ there is a word $v$ such that $|v|>n$ and $xv\in L$. Left extendability is defined in a dual way. Further, $x$ is \emph{two-sided extendable} in $L$ if for every integer $n$ there are words $u,v$ such that $|u|,|v|>n$ and $uxv\in L$. We write
$$
\begin{array} {ll}
\rext(L)&=\{x\in L\mid x \text{ is right extendable in } L\}\\
\ext(L)&=\{x\in L\mid x \text{ is two-sided extendable in } L\}
\end{array}
$$
Note that all factors of an infinite word $\u$ are right extendable in $\Fac(\u)$. It is known \cite{Shur08b} that for every language $L$ the languages $\ext(L)$ and $\rext(L)$ have the same growth rate as $L$.

For a word $u$ over $\Sigma_2 = \{ 0, 1 \}$, we say that we \emph{flip} a letter $a$
we replace it with $1-a$. The word $\overline{u}$ obtained from
$u$ by flipping all letters is the \emph{complement} of $u$.





\section{Minimum and maximum subword complexity in small languages}

The $2^+$-power-free words are commonly called \textit{overlap-free} due to the following equivalent characterization: \textit{a word $w$ contains an $\alpha$-power with $\alpha>2$ if and only if two different occurrences of some factor in $w$ overlap}. It is known since Thue \cite{Thue12} that the Thue-Morse word $\t$ is overlap-free. The morphism $\mu$ satisfies the following very strong property.

\begin{lemma}[\cite{Shur00}] \label{l:tmexp}
For every real $\alpha>2$, an arbitrary word $u$ avoids $\alpha$-powers iff the word $\mu(u)$ does.
\end{lemma}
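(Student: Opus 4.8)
The plan is to prove the two implications separately, since one of them is routine while the other carries all of the content. Throughout I would write $v=\mu(u)$ and record the defining coordinate identities $v_{2i}=u_i$ and $v_{2i+1}=\overline{u_i}$, so that $v_{2i}\neq v_{2i+1}$ for every $i$; equivalently, an equality $v_n=v_{n+1}$ of consecutive letters can occur only at an odd index $n$. This single parity observation about images of $\mu$ is the engine of the whole argument.

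For the easy direction I would show that if $u$ contains an $\alpha$-power then so does $v$, and then take the contrapositive. If a factor $w$ of $u$ is a $\beta$-power ($\beta\ge\alpha$) of a word $y$, then $w$ is a prefix of $y^\omega$, so $\mu(w)$ is a factor of $v$ and a prefix of $\mu(y)^\omega=(\mu(y))^\omega$. Because $\mu$ doubles lengths, $|\mu(w)|=2|w|\ge\beta\cdot 2|y|=\beta\,|\mu(y)|$, whence $\mu(w)$ is a $\beta'$-power with $\beta'\ge\beta\ge\alpha$. This yields ``$v$ avoids $\alpha$-powers $\Rightarrow u$ avoids $\alpha$-powers.''

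For the hard direction I again argue the contrapositive: assuming $v$ contains a factor $w$ of exponent $\ge\alpha$, I produce a factor of $u$ of exponent $\ge\alpha$. Let $q$ be the minimal period of $w$, fix an occurrence of $w$ in $v$ at position $s$, and note that $\alpha>2$ forces $|w|\ge\alpha q>2q$. The first real step is to prove that $q$ is even. Suppose instead $q$ is odd. For even indices $j$ inside $w$ the inequality $v_j\neq v_{j+1}$ is immediate from the parity identity. For odd indices I would combine the period relation $v_j=v_{j+q}$ pushed \emph{forward} near the left end of $w$ with the relation $v_j=v_{j-q}$ pushed \emph{backward} near the right end; because $|w|>2q$, the two ranges overlap and together cover every odd index, so $v_j\neq v_{j+1}$ holds throughout $w$. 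But then $w$ is alternating with period $2$ (or $1$), contradicting the minimality of the odd period $q\ge 3$. Calibrating exactly how far the forward and backward arguments reach, and checking that they overlap, is the step I expect to be the main obstacle.

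Once $q=2r$ is even, the remainder is bookkeeping. Writing the start position as either $s=2a$ or $s=2a+1$ and applying the period relation only at the even-indexed positions of $w$, the coordinate identities turn it into the relation $u_{a+k}=u_{a+k+r}$ (or the same equality with both sides complemented, which is equivalent) over a range of $k$ of length roughly $|w|/2-r$. Hence the factor $z=u_a u_{a+1}\cdots$ of $u$ has period $r$, and a short length count shows its exponent is at least $|w|/(2r)=|w|/q\ge\alpha$, so $u$ contains an $\alpha$-power. The only delicate points here are the floor in the length estimate and the verification of the odd-$s$ alignment, both of which are routine. Combining the two directions gives the claimed equivalence.
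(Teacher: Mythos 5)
The paper does not prove this lemma at all --- it is quoted from \cite{Shur00} as a known result --- so there is no in-paper argument to compare yours against; what you have written is a self-contained proof along the classical lines (going back to Thue) based on the parity property $v_{2i}\neq v_{2i+1}$ of $\mu$-images. Your argument is essentially correct. The easy direction is fine (and does not need $\alpha>2$). In the hard direction, the proof that the minimal period $q$ must be even is the real content, and your forward/backward overlap works exactly because $|w|>2q$: the uncovered odd indices would have to satisfy $s+|w|-q-1\le j\le s+q-1$, which is empty when $|w|>2q$; you should also note that the resulting alternation kills $q=1$ as well as odd $q\ge 3$. The one place where your write-up as stated is actually lossy is the final length count: if you use the period relation \emph{only} at even positions, then for an occurrence starting at an odd position with $|w|$ odd you recover only $\lfloor |w|/2\rfloor$ letters of $u$, and $\lfloor |w|/2\rfloor/r$ can dip below $\alpha$ when $\alpha q<|w|<\alpha q+1$ (e.g.\ $q=2$, $|w|=7$, $\alpha=3.5$, odd start). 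The fix is the one you half-gesture at in your parenthetical: the odd positions give the complemented identity $\overline{u_m}=\overline{u_{m+r}}$, which is the same relation, so using \emph{all} positions $j$ (taking $m=\lfloor j/2\rfloor$) yields a factor of $u$ of length at least $\lceil |w|/2\rceil$ with period $r$, hence exponent at least $|w|/q\ge\alpha$. With that adjustment the proof is complete.
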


Moreover, all binary $(7/3)$-power-free (in particular, overlap-free) words can be expressed in terms of the morphism $\mu$. Below is the ``infinite'' version of a well-known result proved by Restivo and Salemi \cite{Restivo&Salemi:1985a} for overlap-free words and extended by Karhum{\"a}ki and Shallit \cite{Karhumaki&Shallit:2004} to all  $(\frac 73)$-power-free words. The second statement of this lemma was proved in \cite{RSS11}.

\begin{lemma} \label{l:73fact}
Let $\u$ be an infinite $(7/3)$-power-free binary word, $k\ge 0$ be an integer. Then $\u$ is uniquely representable in the form
\begin{equation} \label{e:73fact}
\u = x_o \mu (x_1\mu(\cdots x_k\mu(\v)\cdots))= x_0\mu(x_1)\cdots \mu^k(x_k)\mu^{k+1}(\v),
\end{equation}
where $\v$ is also an infinite $(7/3)$-power-free binary word, $x_0,\ldots,x_k \in \{ \varepsilon, 0, 1, 00, 11 \} $. Moreover, for every $i\ge 1$ the condition $|x_i|=2$ implies either $|x_{i-1}|=0$, or $|x_{i-1}|=\cdots =|x_0|=1$, or $|x_{i-1}|=\cdots =|x_j|=1$, $|x_{j-1}|=0$ for some $j\in\{1,\ldots,i{-}1\}$.
\end{lemma}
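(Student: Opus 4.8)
The plan is to reduce everything to a single desubstitution step and then iterate. Call a position $i$ a \emph{repeat} of an infinite word $w$ if $w_i=w_{i+1}$. Since $\mu(0)=01$ and $\mu(1)=10$, a word $w$ equals $\mu(\v)$ for some $\v$ if and only if $w_{2j}\ne w_{2j+1}$ for all $j$, i.e.\ if and only if every repeat of $w$ occurs at an odd position. So the heart of the matter is the claim that, after deleting a prefix of length at most $2$, all repeats of an infinite $(7/3)$-power-free word $\u$ land on a single parity. I would prove this by analyzing the gap $g=s'-s$ between consecutive repeats $s<s'$. As $000,111$ are cubes, every maximal block has length $\le 2$, so consecutive repeats are separated by an alternating run; a short computation shows that an odd gap forces either a factor $01010$ or $10101$ (exponent $5/2$, when $g\ge 5$) or, using the forced one-letter extensions on both sides, a factor $(001)^{7/3}=1001001$ or $(110)^{7/3}$ (when $g=3$), all of which are forbidden — except that a single gap-$3$ repeat may sit at the very start of $\u$, where the left extension is unavailable. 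Hence at most one parity change occurs, it can be absorbed into a prefix $x_0$ of length $\le 2$, and $\u=x_0\mu(\v)$ follows; when $|x_0|=2$ the prefix is a repeat, so $x_0\in\{00,11\}$, and $x_0$ is never a full $\mu$-block. Finally $\mu(\v)$ is a suffix of the $(7/3)$-power-free word $\u$, hence $(7/3)$-power-free, so $\v$ is $(7/3)$-power-free by Lemma~\ref{l:tmexp}, and $\v$ is infinite because $\mu(\v)$ is.

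For uniqueness of this step, note that $\u$ has infinitely many repeats (otherwise a suffix is $(01)^\omega$, containing $01010$), so the eventual parity $\pi$ of its repeats is an invariant of $\u$; this fixes the parity of $|x_0|$, and within that parity the value is pinned down by $u_0u_1$ (offset $0$ exactly when $u_0\ne u_1$, giving $x_0=\varepsilon$; offset $2$ exactly when $u_0=u_1$, giving $x_0\in\{00,11\}$; offset $1$ giving $x_0\in\{0,1\}$). Iterating, I would write $\u=x_0\mu(\u^{(1)})$, then $\u^{(1)}=x_1\mu(\u^{(2)})$, and so on; applying $\mu^{i}$ to the $i$-th identity and concatenating yields \eqref{e:73fact} with $\v=\u^{(k+1)}$, unique because each step is unique.

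It remains to prove the length constraint, which (after unwinding the three cases) is equivalent to the statement that the sequence $\big(|x_0|,|x_1|,\ldots\big)$ contains no block of the shape $2\,1^{*}\,2$, that is, no two length-$2$ offsets separated solely by length-$1$ offsets. Combining the uniqueness analysis with the parity discussion, one checks that $|x_m|=2$ holds \emph{exactly} when $\u^{(m)}$ begins with $00100$ or $11011$ (the prefixes that create an initial wrong-parity repeat). By the $0\leftrightarrow1$ symmetry assume $\u^{(i)}$ begins with $00100$, so that $\mu(\u^{(i)})$ begins with $0101$. The crux is that this ``$0101$-prefix'' is the right invariant to push backward: if $\mu(\u^{(m)})$ begins with $0101$, then since $\u^{(m-1)}=x_{m-1}\mu(\u^{(m)})$ is $(7/3)$-power-free, $x_{m-1}$ can neither end in $1$ (else $10101$ appears) nor equal $00$ (else $000$ appears), so $x_{m-1}\in\{\varepsilon,0\}$; and if $x_{m-1}=0$ then $\u^{(m-1)}$ begins with $00101$, whence $\mu(\u^{(m-1)})$ again begins with $0101$ and the invariant survives. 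Running this backward from $x_i$, each step traverses a length-$1$ offset (necessarily $0$) while keeping the invariant alive, and the run can only terminate at a length-$0$ offset — the invariant is destroyed precisely when some $x_{m-1}=\varepsilon$ — or upon reaching $x_0$. In particular a length-$2$ offset is never met along the way, which is exactly conditions (a)--(c).

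The main obstacle, and the part I would write most carefully, is the sharp forbidden-factor bookkeeping that underlies both halves: verifying that an odd gap between repeats always produces $01010$ or $(001)^{7/3}$ (so that parity is eventually constant), and the equivalence $|x_m|=2\iff\u^{(m)}\text{ begins with }00100\text{ or }11011$ together with the two-line check that the $0101$-prefix invariant forces $x_{m-1}\in\{\varepsilon,0\}$. Once these exact computations are in hand, the existence, uniqueness, iteration, and the $2\,1^{*}\,2$-avoidance are all routine.
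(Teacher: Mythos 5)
The paper does not actually prove Lemma~\ref{l:73fact}: it is quoted from Restivo--Salemi, Karhum\"aki--Shallit, and (for the ``Moreover'' clause) RSS11, so there is no in-paper proof to compare against. Judged on its own, your argument is correct and self-contained. The existence and uniqueness of the single desubstitution step via the parity of ``repeat'' positions is sound: odd gaps between consecutive repeats force $01010$/$10101$ (gap $\ge 5$) or the $7/3$-power $1001001$/$0110110$ (gap $3$, away from the origin), so the only parity defect is an initial $00100$/$11011$, which is exactly the $|x_0|=2$ case; infinitude of repeats (else a suffix $(01)^\omega$) makes the eventual parity a genuine invariant, and the case split by parity and by $u_0$ versus $u_1$ pins down $x_0$ uniquely. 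Iterating and invoking Lemma~\ref{l:tmexp} gives \eqref{e:73fact}. Your handling of the ``Moreover'' clause is the nicest part: the equivalence $|x_m|=2\iff \u^{(m)}$ begins with $00100$ or $11011$, plus the backward-propagating ``$\mu(\u^{(m)})$ begins with $0101$'' invariant forcing $x_{m-1}\in\{\varepsilon,0\}$, cleanly yields the $2\,1^*\,2$-avoidance, which matches the three listed cases. This is in the same spirit as the classical finite-word factorization proofs (case analysis on how blocks of equal letters align with $\mu$-blocks), but your one-sided infinite, repeat-parity formulation is arguably tidier. Two cosmetic slips only: $(001)^{7/3}$ denotes $0010010$, so the forced factor should be written $(100)^{7/3}=1001001$; and in the invariant step the word whose $\mu$-image begins with $0101$ is $\u^{(m-1)}$ as seen from level $m-2$, which you do use correctly but phrase confusingly.
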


The factorization \eqref{e:73fact} implies that an infinite $(7/3)$-power-free binary word contains the words $\mu^k(0)$ and $\mu^k(1)$ as factors, for all $k\ge 0$. So we immediately get two corollaries of Lemma~\ref{l:73fact}.

\begin{corollary} \label{c:73TM}
Every $(7/3)$-power-free infinite binary word contains, as factors, all elements of $\Fac(\t)$.  In particular, this is true of every overlap-free word.
\end{corollary}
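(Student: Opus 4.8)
The plan is to build directly on the fact, already noted just after Lemma~\ref{l:73fact}, that every infinite $(7/3)$-power-free binary word $\u$ contains $\mu^k(0)$ and $\mu^k(1)$ as factors for all $k\ge 0$. The crucial observation I would add is that each $\mu^k(0)$ is not merely some factor of $\t$, but is precisely its prefix of length $2^k$: since $\t=\lim_{k\to\infty}\mu^k(0)$, we have $t_0t_1\cdots t_{2^k-1}=\mu^k(0)$. This is what converts a statement about two specific factors into a statement about \emph{all} factors of $\t$.

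First I would reduce the claim to a single factor: it suffices to show that an arbitrary $w\in\Fac(\t)$ is a factor of $\u$. So I fix such a $w$ with $|w|=n$, and let $i$ be a position at which $w$ occurs in $\t$, so that $w=t_i t_{i+1}\cdots t_{i+n-1}$. Next I would choose $k$ large enough that $2^k\ge i+n$. Then the occurrence of $w$ lies entirely inside the prefix $t_0\cdots t_{2^k-1}=\mu^k(0)$, so $w$ is a factor of $\mu^k(0)$. Since $\u$ contains $\mu^k(0)$ as a factor and ``is a factor of'' is transitive, $w$ is a factor of $\u$. As $w$ was arbitrary, $\Fac(\t)\subseteq\Fac(\u)$, which is the first assertion.

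For the ``in particular'' clause I would simply note that overlap-free is the same as $2^+$-power-free, and since $2^+\le 7/3$, every overlap-free word is $(7/3)$-power-free; hence the first assertion applies to it verbatim.

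I do not expect a genuine obstacle here, since the substantive content is entirely supplied by Lemma~\ref{l:73fact} and its immediate consequence. The only step requiring any care is recognizing that $\mu^k(0)$ is a \emph{prefix} of $\t$ rather than an arbitrary factor, as this is exactly what guarantees that every factor of $\t$ is eventually captured inside some $\mu^k(0)$ once $k$ is chosen large enough.
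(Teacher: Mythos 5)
Your argument is correct and is essentially the paper's own: the authors derive the corollary directly from the observation (immediately following Lemma~\ref{l:73fact}) that every such word contains $\mu^k(0)$ and $\mu^k(1)$ for all $k$, the point being exactly the one you make, that every factor of $\t$ sits inside the prefix $\mu^k(0)$ for $k$ large enough. You have merely written out the details the paper leaves implicit.
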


\begin{corollary} \label{c:TM}
The Thue-Morse word $\t$ has the minimum subword complexity among all
binary $(7/3)$-power-free (in particular, overlap-free) infinite words.
\end{corollary}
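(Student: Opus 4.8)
The plan is to derive the statement almost immediately from Corollary~\ref{c:73TM}, which already carries the essential combinatorial content. Recall that by definition $p_{\u}(n)$ is the cardinality of the set $\Fac(\u) \cap \Sigma_2^n$ of length-$n$ factors of $\u$. Hence comparing subword complexities of two words reduces to comparing their factor sets, and the natural route is to establish an inclusion of factor languages and then count.

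First I would verify that $\t$ itself belongs to the class under consideration, so that the minimum is genuinely \emph{attained} and not merely a lower bound. This is the only point that is not purely formal: by Thue's theorem $\t$ is overlap-free, i.e.\ $2^+$-power-free, and avoiding every $\beta$-power with $\beta > 2$ in particular avoids every $\beta$-power with $\beta \ge 7/3$; therefore $\t$ is $(7/3)$-power-free and lies in the set.

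Next, let $\u$ be an arbitrary infinite binary $(7/3)$-power-free word. Corollary~\ref{c:73TM} asserts that every element of $\Fac(\t)$ occurs as a factor of $\u$, that is, $\Fac(\t) \subseteq \Fac(\u)$. Intersecting both sides with $\Sigma_2^n$ gives $\Fac(\t) \cap \Sigma_2^n \subseteq \Fac(\u) \cap \Sigma_2^n$, and passing to cardinalities yields
\[
p_{\t}(n) = |\Fac(\t) \cap \Sigma_2^n| \le |\Fac(\u) \cap \Sigma_2^n| = p_{\u}(n)
\]
for every $n \ge 0$. Since both $\u$ and $n$ were arbitrary, this is precisely the assertion that $\t$ has minimum subword complexity in the class, according to the definition of ``minimum subword complexity'' given in Section~\ref{ss:def}.

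The ``hard part'' is in fact external to this corollary: all the substantive work resides in Lemma~\ref{l:73fact} and its Corollary~\ref{c:73TM}, whose factorization argument guarantees that the blocks $\mu^k(0)$ and $\mu^k(1)$---and hence all factors of $\t$---appear in every $(7/3)$-power-free word. Granting that, the present statement is a one-line monotonicity-of-counting argument, and I anticipate no genuine obstacle beyond confirming the membership of $\t$ in the class.
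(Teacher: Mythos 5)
Your proposal is correct and follows exactly the route the paper intends: the corollary is stated as an immediate consequence of Corollary~\ref{c:73TM}, via the inclusion $\Fac(\t)\subseteq\Fac(\u)$ and counting length-$n$ factors, together with the (standard) fact that $\t$ itself is overlap-free and hence $(7/3)$-power-free. Nothing is missing.
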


The subword complexity $p_{\bf t}$ of the Thue-Morse sequence $\t$ has been known
since the independent work of Brlek \cite{Brlek:1989}, de Luca and Varricchio \cite{deLuca&Varricchio:1989a}, and Avgustinovich \cite{Avgustinovich:1994}.  For $n \geq 2$ it is as follows:
\begin{equation} \label{e:complTM}
p_{\t} (n+1) = \begin{cases}
	4n - 2^i  , & \text{if $2^i \leq n \leq 3 \cdot 2^{i-1}$}; \\
	2n + 2^{i+1} , & \text{if $3 \cdot 2^{i-1} \leq n \leq 2^{i+1}$}.
	\end{cases}
\end{equation}
We now consider the analogue of the Thue-Morse sequence, where for $n \geq 0$ we count the number of $0$'s, mod 2, (instead of the number of $1$'s, mod 2) in the binary representation of $n$. By convention, we assume that the binary expansion of 0 is $\varepsilon$. We call this word 
\begin{multline} \label{e:twisted}
\t' = 001001101001011001101001100101101001011\cdots = \\ 00\mu(1)\mu^2(0)\cdots\mu^{2n}(0)\mu^{2n+1}(1)\cdots
\end{multline}
the \emph{twisted Thue-Morse word}. The word $\t'$ was mentioned in \cite{Shallit:2011} and has appeared previously in the study of overlap-free and $(7/3)$-power-free words \cite{DSS15}.  It is the image, under the coding
$\{0, 2\} \rightarrow 0$, $1 \rightarrow 1$, of the
fixed point of the morphism
$0 \rightarrow 02$,
$1 \rightarrow 21$,
$2 \rightarrow 12$, and
is known to be overlap-free.

We now state one of our main results.  The proof follows after a series of preliminary statements.

\begin{theorem} \label{t:twisted}
The twisted Thue-Morse word $\t'$ has maximum subword complexity among all overlap-free infinite binary words, and is the unique word with this property, up to complement. 
\end{theorem}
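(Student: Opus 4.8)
The plan is to reduce everything to the self-similar structure of overlap-free words given by Lemma~\ref{l:73fact} and to count factors scale by scale. First I would record a synchronization fact: for $n\ge 5$ every overlap-free factor of length $n$ contains $00$ or $11$, since a factor avoiding both is alternating and already $01010$ is a $\tfrac52$-power. Within any $\mu$-image the blocks $01=\mu(0)$ and $10=\mu(1)$ contain no $00$ or $11$, so $00$ and $11$ occur only straddling a junction between two blocks; hence an occurrence of $00$ or $11$ pins a long factor to the $\mu$-block grid. Writing $\u=x_0\mu(\u_1)$ as in Lemma~\ref{l:73fact} (with $k=0$), every factor $w$ of length $n\ge 5$ then has a unique parse $w=s\,\mu(z)\,p$ with $s,p\in\{\varepsilon,0,1\}$ and $z\in\Fac(\u_1)$, giving the recursion
$$p_{\u}(2m)=p_{\u_1}(m)+p_{\u_1}(m+1)+\delta_{2m},\qquad p_{\u}(2m+1)=2\,p_{\u_1}(m+1)+\delta_{2m+1},$$
for $m$ large, where $\delta_n\in\{0,1,2\}$ counts the length-$n$ factors meeting the prefix $x_0$ that do not reoccur inside $\mu(\u_1)$. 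For Thue--Morse $x_0=\varepsilon$, so $\delta\equiv 0$ and one recovers the recursion behind \eqref{e:complTM}. By Lemma~\ref{l:tmexp} a word is overlap-free iff its $\mu$-image is, so $\u_1$ ranges over all overlap-free words, and the recursion is monotone in $p_{\u_1}$.

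Next I would compute $p_{\t'}$ from \eqref{e:twisted}. Desubstituting $\t'=x_0\mu(x_1)\mu^2(x_2)\cdots$ one reads off $x_0=00$ and $x_i$ alternating $1,0,1,0,\dots$ for $i\ge 1$; using $\overline{\mu(w)}=\mu(\overline w)$ this identifies the one-step desubstitution of $\t'$ with a shift of $\overline{\t'}$, so the whole desubstitution chain is eventually $2$-periodic between a word and its complement. Since complementation preserves complexity, this self-similarity turns the recursion into a closed recursion for $p_{\t'}$, which together with the base values for $n\le 4$ determines $p_{\t'}$ explicitly. The decisive structural feature is that $\t'$ carries a nontrivial prefix $x_i\neq\varepsilon$ at \emph{every} scale $i$, so the boundary term $\delta$ contributes maximally at each of the $\approx\log_2 n$ levels of the recursion.

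For the upper bound I would prove $p_\u(n)\le p_{\t'}(n)$ by strong induction on $n$. The base case $n\le 4$ is immediate: by Corollary~\ref{c:73TM} every overlap-free word already realizes all factors of length $\le 4$, so $p_\u(n)=p_{\t'}(n)$ there. In the inductive step the monotonicity of the recursion handles the $\u_1$-contribution, and the real work is to bound $\delta_n$ \emph{simultaneously} with the inherited surplus $p_{\u_1}-p_{\t'_1}$: the content is that a word cannot have both a maximally complex desubstitution and a maximally productive prefix seam beyond what $\t'$ achieves. This is the step I expect to be the main obstacle, precisely because the excess $p_\u(n)-p_\t(n)$ is not bounded but grows like $\log_2 n$, so the surpluses from all scales must be summed and compared; the induction therefore has to carry exact information about the boundary contribution as a function of the length rather than a mere inequality between complexities.

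Finally, for uniqueness I would trace the equality case back through the recursion. Equality $p_\u(n)=p_{\t'}(n)$ for all $n$ forces equality in the boundary bound at every scale, which pins down each $x_i$ in the factorization $\u=x_0\mu(x_1)\mu^2(x_2)\cdots$ of Lemma~\ref{l:73fact}: $x_0$ must be one of the length-$2$ words $00,11$ (the two choices related by complementation), and the constraint on consecutive $x_i$ in Lemma~\ref{l:73fact}, together with maximality, forces the alternating single-letter pattern thereafter. Since the factorization data determine $\u$, this yields $\u\in\{\t',\overline{\t'}\}$; as the complement has the same complexity, this completes the proof.
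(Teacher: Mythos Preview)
Your recursive strategy is natural, but the step you flag as the ``main obstacle'' is a genuine gap, and it does not close in the form you set up. The problem is this: applying the inductive hypothesis to the desubstitution $\u_1$ yields $p_{\u_1}(m)\le p_{\t'}(m)$, whereas the recursion for $p_{\t'}(2m)$ involves $p_{\t'_1}(m)$ with $\t'_1$ the desubstitution of $\t'$. But $\t'_1$ is only a \emph{proper shift} of $\overline{\t'}$ (you lose the leading letter), and since the prefix $\overline{r_0}=11011$ of $\overline{\t'}$ never reoccurs (Lemma~\ref{l:rk}), one has $p_{\t'_1}(m)<p_{\t'}(m)$ for all $m\ge 5$. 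Hence the bound $p_{\u}(2m)\le p_{\t'}(m)+p_{\t'}(m+1)+2$ coming from the induction hypothesis strictly exceeds $p_{\t'}(2m)=p_{\t'_1}(m)+p_{\t'_1}(m+1)+\delta_{\t'}(2m)$, and the induction does not close. To repair this you would need to propagate, through every scale, the exact deficit $p_{\t'}-p_{\t'_1}$ and match it against the boundary terms of $\u$; this is essentially the full content of the theorem, and you do not indicate how to do it. The uniqueness argument inherits the same defect: without the sharp inequality at each scale there is nothing that forces the $x_i$.

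The paper avoids the recursion entirely. It works with right-special factors: since $\Fac(\t)\subseteq\Fac(\u)$, the difference $p_{\u}-p_{\t}$ is governed by the \emph{irregular} $\u$-special factors (special for $\u$ but not for $\t$). Using the classification of minimal forbidden words for $\Fac(\t)$ (Lemma~\ref{l:forbTM}) and the key positional bound of Lemma~\ref{l:rk} (each $r_k$ can occur only once, within distance $2^k$ of the origin), one shows that at each position of $\u$ at most one irregular special factor has its first occurrence, while in $\t'$ there is exactly one at every position, of explicitly computed minimal length. This yields a length-nonincreasing injection $\Spec(\u)\hookrightarrow\Spec(\t')$, and summing first differences gives $p_{\u}\le p_{\t'}$; uniqueness falls out because the list of lengths of irregular special factors determines the word. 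If you want to salvage your approach, the missing ingredient is precisely an analogue of Lemma~\ref{l:rk} that controls, scale by scale, which seam factors are genuinely new.
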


\begin{remark}
The word $\t'$ has linear subword complexity, as is proved below, and so contains, as factors, only a small fraction of all right-extendable overlap-free words: the number of such words has superlinear growth (see \cite{Kob88}). This fact is explained in a broader context in Theorem~\ref{t:73max}.
\end{remark}

\begin{remark}
The uniqueness of $\t'$, stated in Theorem~\ref{t:twisted}, differs strikingly from the situation with minimum complexity, described by Corollary~\ref{c:TM}. Namely, the language of factors $\Fac(\t)$, and thus the subword complexity $p_{\t}(n)$, is shared by a continuum of infinite words. The explicit construction of all such words can be found in \cite{Shur00}. (Precisely, Section 2 of \cite{Shur00} describes two-sided infinite words with the language of factors $\Fac(\t)$, but all their suffixes have exactly the same factors.)
\end{remark}

A word $w$ is \emph{minimal forbidden} for a factorial language $L$ if $w\notin L$, while all proper factors of $w$ belong to $L$. Every overlap-free infinite word $\u$ having a factor not in $\Fac(\t)$ contains a minimal forbidden word $x$ for $\Fac(\t)$; moreover, every such word $x$ appearing in $\u$ is right extendable in the language $L_{2,2^+}$. These forbidden words are classified in the following lemma.

\begin{lemma}[\cite{Shur05}] \label{l:forbTM}
Let $a_k$ be the last letter of $\mu^k(0)$. The minimal forbidden words for $\Fac(\t)$ are exactly the following words and their complements:
\begin{itemize}
\item[(a)] $000$;
\item[(b)] $r_k = a_k \mu^k(010) 0$, $k \geq 0$;
\item[(c)] $s_k = a_k \mu^k(101) 0$, $k \geq 0$.
\end{itemize}
Among these, only the words $r_k$ (and their complements) are right extendable for $L_{2,2^+}$.  
\end{lemma}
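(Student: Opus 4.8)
The plan is to prove the two inclusions between the set of minimal forbidden words for $\Fac(\t)$ and the listed family, and then to settle the right-extendability dichotomy, throughout exploiting the \emph{synchronization} of $\t$. By Lemma~\ref{l:73fact} every factor of $\t$ of length $\ge 4$ has a unique decomposition into the blocks $\mu(0)=01$ and $\mu(1)=10$ up to partial end-blocks of length $\le 1$; and since $\t$ avoids the overlaps $01010$ and $10101$, every factor of length $\ge 5$ contains $00$ or $11$, each of which occurs only as the centre of $1001$, resp.\ $0110$, and therefore pins a block boundary. I will use repeatedly the resulting equivalence $z\in\Fac(\t)\iff\mu(z)\in\Fac(\t)$, and the elementary facts that $\mu^k(b)$ ends in $a_k$ iff $b=0$ and begins with $0$ iff $b=0$.

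\emph{The listed words are minimal forbidden.} The words $000,111$ are immediate, since $\t$ is overlap-free while $00\in\Fac(\t)$. For $r_k,s_k$ I would first record the base cases by hand: $00100\notin\Fac(\t)$ because, $00$ occurring only as the centre of $1001$, one checks that every occurrence of $0010$ is followed by $1$; and $01010$ is itself an overlap. For the general case, note $\mu(r_{k-1})=a_{k-1}\,r_k\,1$ and $\mu(s_{k-1})=a_{k-1}\,s_k\,1$ (using $\overline{a_{k-1}}=a_k$ and $\mu(c)=c\,\overline c$). Hence an occurrence of $r_k$ (resp.\ $s_k$) in $\t$ would, by synchronization, be bordered so as to complete these blocks, giving $\mu(r_{k-1})\in\Fac(\t)$ (resp.\ $\mu(s_{k-1})$) and thus $r_{k-1}$ (resp.\ $s_{k-1}$) $\in\Fac(\t)$; iterating reaches $00100$ or $01010$, a contradiction. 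Minimality follows by exhibiting occurrences of the two maximal proper factors: $a_k\mu^k(010)$ is a suffix of $\mu^k(0010)\in\Fac(\t)$ and $\mu^k(010)\,0$ a prefix of $\mu^k(0100)\in\Fac(\t)$, and analogously for $s_k$ with $0101$ and $1010$.

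\emph{There are no others.} Here I would induct on $|w|$ for a minimal forbidden word $w$. A finite check settles $|w|\le 5$, yielding exactly $000,111,r_0,s_0$ and the complements $\overline{r_0},\overline{s_0}$. For $|w|=m\ge 6$, the two length-$(m-1)$ factors obtained by deleting the first, resp.\ last, letter of $w$ lie in $\Fac(\t)$ and, being of length $\ge 5$, have a unique block parity; these agree on the shared length-$(m-2)$ core, so $w$ admits a consistent parsing into $\mu$-blocks with partial end-blocks of length $\le 1$. In the generic case both ends are partial, so $w=c\,\mu(y)\,d$ is the core of $\mu(w')$ for $w'=\overline c\,y\,d$; then $w\notin\Fac(\t)$ forces $w'\notin\Fac(\t)$, the deleted boundary letters show $w'$ is again minimal forbidden, and the induction hypothesis together with $r_k=\text{core}\,\mu(r_{k-1})$, $s_k=\text{core}\,\mu(s_{k-1})$ forces $w$ into the list. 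The main obstacle is precisely this step: one must dispose of the non-generic alignments (a full block boundary at an end) and, crucially, verify that the shorter preimage is \emph{minimal} forbidden rather than merely forbidden, thereby ruling out any sporadic minimal forbidden word whose obstruction fails to sit at a block boundary.

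\emph{Right-extendability.} Finally, $000$ and $111$ are not overlap-free, hence not in $L_{2,2^+}$. For $s_k$ I would argue by descent via Lemma~\ref{l:tmexp}: an arbitrarily long overlap-free right extension of $s_k$ would desubstitute to one of $s_{k-1}$, down to $s_0=01010$, which is not overlap-free, so no $s_k$ is right-extendable. For $r_k$ I would first check that $r_0=00100$ is right-extendable in $L_{2,2^+}$ (e.g.\ by exhibiting an infinite overlap-free word containing it), and then lift: by Lemma~\ref{l:tmexp}, $\mu$ sends overlap-free right extensions of $r_{k-1}$ to overlap-free right extensions of $\mu(r_{k-1})=a_{k-1}\,r_k\,1$, so $r_k$, and by complementation $\overline{r_k}$, is right-extendable for every $k$.
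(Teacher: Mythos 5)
The paper does not prove this lemma at all: it is imported from \cite{Shur05}, with a remark that the classification can also be verified mechanically with the {\tt Walnut} prover. So there is no in-paper argument to compare against; your synchronization/desubstitution route is the natural direct proof, and its overall architecture is sound. Your identities $\mu(r_{k-1})=a_{k-1}r_k1$ and $\mu(s_{k-1})=a_{k-1}s_k1$ check out, the base cases are right, and the lifting argument for right-extendability of $r_k$ via Lemma~\ref{l:tmexp} (seeded by $r_0=00100$, a prefix of the overlap-free $\t'$) is correct.

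The one place where you stop short is the step you yourself flag in the completeness direction, and it does need to be closed, but it closes with the tools you already have. For minimality of the desubstituted word $w'=\overline{c}\,y\,d$: the two maximal proper factors of $w$ are $c\,\mu(y)$ and $\mu(y)\,d$, both in $\Fac(\t)$ and of length $\ge 5$, so synchronization extends their occurrences in $\t$ to $\overline{c}c\,\mu(y)=\mu(\overline{c}y)$ and $\mu(y)d\overline{d}=\mu(yd)$ respectively; desubstituting gives $\overline{c}y,\ yd\in\Fac(\t)$, which are exactly the two maximal proper factors of $w'$, so $w'$ is minimal forbidden and the induction applies. For the non-generic alignments (a full block boundary at an end of $w$, say $w=\mu(y)d$): the same reasoning shows $yd$ is minimal forbidden, hence in the list by induction, but then $w=\mu(y)d$ properly contains the forbidden word $r_{j+1}$ (resp.\ $s_{j+1}$, $01010$, \dots), contradicting minimality of $w$; so these alignments simply do not occur. (Also note that for $|w|=6$ the length-$4$ core need not synchronize on its own --- e.g.\ $1010$ admits two parsings --- so the agreement of the two parsings there requires a separate finite check; it is harmless but should be said.) Finally, the non-extendability of $s_k$ is thinner than you present it: for $k\ge 1$ the word $s_k$ is not itself an overlap, and extending it by one letter does not immediately create one, so ``desubstitute down to $s_0$'' needs the same kind of alignment-and-descent argument as Lemma~\ref{l:rk} (any long overlap-free word $s_kv$ factors as $x_0\mu(\w)$ with $\w$ a long overlap-free extension of $s_{k-1}$, bounded by induction). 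With these points written out, your proof is complete.
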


\begin{remark}
This claim is also easy to prove using the {\tt Walnut} prover \cite{M16}.
\end{remark}

\begin{lemma} \label{l:rk}
Let $\u$ be an infinite overlap-free binary word and let $k\ge 0$. If $\u = x r_k \cdots$ or $\u =x \bar r_k \cdots$ for some word $x$, then $|x| \leq 2^k - 1$. In particular, the words $r_k, \bar r_k$ have, in total, at most one occurrence in $\u$.   
\end{lemma}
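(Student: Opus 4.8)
The plan is to induct on $k$, stripping off one application of $\mu$ at each step and using Lemma~\ref{l:73fact} to control the stripping. Throughout, the case of $\bar r_k$ reduces to that of $r_k$ by complementation, so I treat only $r_k$.

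For the base case $k=0$ we have $r_0=00100$. If $r_0$ occurred at a position $\ge 1$, I would look at the letter immediately to its left: if it is $0$ we get the cube $000$, and if it is $1$ we get the factor $100100$, which cannot be extended to the right inside an overlap-free word (appending $0$ creates $000$, appending $1$ creates the $(100)^{7/3}$-power $1001001$). Since $\u$ is infinite, $r_0$ is followed by at least one more letter, so the left letter cannot be $1$ either; hence $r_0$ occurs only at position $0=2^0-1$.

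For the inductive step ($k\ge 1$) suppose $r_k=a_k\mu^k(010)0$ occurs at position $p=|x|$. I would first write $\u=x_0\,\mu(\v)$ with $x_0\in\{\varepsilon,0,1,00,11\}$ and $\v$ an infinite overlap-free word (Lemma~\ref{l:73fact} with parameter $0$). The core $\mu^k(010)$ is a $\mu$-image and, for $k\ge 1$, contains a factor $00$ or $11$; since in any image $\mu(\cdot)$ such a factor straddles a block boundary, the occurrence of the core must be aligned with the block structure of $x_0\mu(\v)$, which pins the parity $p\equiv|x_0|+1\pmod 2$. Peeling off $\mu$ then identifies the letter $a_k$ preceding the core as the second letter of the block of $\mu(\v)$ it ends (forcing the corresponding letter of $\v$ to equal $a_{k-1}$) and the trailing $0$ of $r_k$ as the first letter of the next block (forcing the corresponding letter of $\v$ to be $0$). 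Thus $\v$ contains $r_{k-1}=a_{k-1}\mu^{k-1}(010)0$ at position $(p-1-|x_0|)/2$, and the induction hypothesis gives $(p-1-|x_0|)/2\le 2^{k-1}-1$, i.e. $p\le 2^k-1+|x_0|$.

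The main obstacle is to remove the slack $|x_0|$, that is, to show the short prefix in the $\mu$-factorization is empty. The mechanism is cleanest at the $\mu^k$-block level: the two letters flanking the three blocks making up $\mu^k(010)$ are forced to be $0$-blocks (the left one is the last letter of a $\mu^k(0)$-block, the right one the first letter of another), so the block sequence of $\u$ reads $00100$ around the core. This block sequence is itself overlap-free (Lemma~\ref{l:tmexp}), so applying the base case to it pins the core to the first nontrivial block and forces the block sequence to open with the square $001001$; pulling back through $\mu^k$, the word $\u$ opens with $\mu^k(001)\mu^k(001)$, a square of period $3\cdot 2^k$. Any nonempty admissible prefix then either lengthens this square beyond exponent $2$ or, for the opposite parity and value of $x_0$, creates a short forbidden power such as $(10)^{5/2}$; a finite case analysis over $x_0\in\{0,1,00,11\}$, restricted by the parity constraint and by the ``Moreover'' clause of Lemma~\ref{l:73fact}, rules out every nonempty $x_0$. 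The delicate point — and the technical heart — is to carry enough information about this forced opening down the induction (and to dispatch the boundary situation where $r_k$ sits so close to the start that the peeling yields no genuine $r_{k-1}$, which only makes $p$ smaller) so that the case analysis is available at every level. This yields $|x_0|=0$ and hence $p\le 2^k-1$.

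Finally, the ``in particular'' clause follows from the bound. Every occurrence of $r_k$ or $\bar r_k$ starts in the window $\{0,1,\ldots,2^k-1\}$, whose length $2^k$ is far below $|r_k|=3\cdot 2^k+2$. Two occurrences of the \emph{same} word would lie at a distance $d\le 2^k-1<|r_k|$, producing a factor of length $d+|r_k|>2d$ with period $d$ and hence exponent $1+|r_k|/d>4$, an overlap — impossible since $r_k$ is overlap-free. For one occurrence of $r_k$ and one of $\bar r_k$ I would instead invoke the rigidity established above: each occurrence forces the prefix of $\u$ into a single determined word up to complementation, and that word contains only one of $r_k,\bar r_k$ within the window, so the two cannot both occur.
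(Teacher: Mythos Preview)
Your inductive skeleton matches the paper's: peel off one $\mu$ via Lemma~\ref{l:73fact}, locate $r_{k-1}$ inside the peeled word, and invoke the hypothesis. The base case is essentially identical to the paper's.

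The gap is exactly where you flag it: removing the slack $|x_0|$. Your sketch---pass to the $\mu^k$-block sequence, apply the base case there to force it to begin with $001001$, then case-check $x_0$---does not close. The ``block sequence'' you speak of is the word $\v$ in a factorization $\u = x_0\mu(x_1)\cdots\mu^{k-1}(x_{k-1})\mu^k(\v)$; knowing that $\v$ begins with $001001$ says nothing about the intermediate prefix $x_0\mu(x_1)\cdots\mu^{k-1}(x_{k-1})$, and in particular does not place $x_0$ adjacent to a square of period $3\cdot 2^k$, so the promised ``lengthens this square beyond exponent $2$'' step is unavailable. Nor does ``carry enough information down the induction'' have content as stated: your hypothesis yields only the numerical bound on the inner position, not structural data about the opening of $\u'$. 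As written, you obtain $p\le 2^k-1+|x_0|\le 2^k+1$, which is not the claim.

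The paper handles this step by a clean saturation argument that you are missing. It assumes $|x|\ge 2^k$ for contradiction. From $|x|=|x_0|+2|x'|+1$, $|x_0|\le 2$, and the inductive bound $|x'|\le 2^{k-1}-1$, one is \emph{forced} to $|x'|=2^{k-1}-1$ exactly. Saturation of the bound means neither $0\u'$ nor $1\u'$ is overlap-free, so $\u'$ has a square prefix ending in $0$ and another ending in $1$. Applying $\mu$ turns these into square prefixes of $\mu(\u')$ ending in $1$ and in $0$ respectively; hence neither $0\mu(\u')$ nor $1\mu(\u')$ is overlap-free, forcing $x_0=\varepsilon$ and thus $|x|=2|x'|+1=2^k-1$, the desired contradiction. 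This ``non-left-extendability lifts through $\mu$'' observation is the idea your attempt lacks.
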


\begin{proof}
By induction on $k$. First, consider the case $k = 0$.  We have $r_0 = 00100$. Since $\u$ is overlap free, the letter following $xr_0$ in $\u$ is 1. On the other hand, if $x$ is nonempty, it must end with either $0$ or $1$, and in both cases
$x r_0 1$ has an overlap.  So $x$ must be empty.

Now the induction step.  Assume the claimed result is true for $k' < k$; we prove it for $k$. Let $\u = x r_k \v = x a_k \mu^k (010) 0\v$ and assume $|x| \geq 2^k$. The factorization \eqref{e:73fact} of $\u$ implies $\v=1\mu(\v')$ for some infinite word $\v'$, which is overlap-free by Lemma~\ref{l:tmexp}. Next, note that $x=y\bar a_k$ for some nonempty word $y$. Indeed, $\mu^k (010)$ begins with $0110$. If $x$ ends with $a_k$, then $\u$ contains the factor $a_ka_k0110$, preceded by some letter(s); this certainly creates an overlap. Thus 
$$
\u = y \bar a_k a_k \mu^k (010) 01\mu(\v') = y \mu(\bar a_k \mu^{k-1} (010) 0 \v').
$$
On the other hand, $\u=x_0\mu(\u')$ for some $x_0\in\{\varepsilon,0,00,1,11\}$ and some infinite word $\u'$ by Lemma~\ref{l:73fact}. The uniqueness of factorization in Lemma~\ref{l:73fact} implies $y=x_0\mu(x')$ for some word $x'$ (possibly empty). Observing that $\bar a_k=a_{k-1}$, we finally write 
$$
\u =  x_0 \mu(\u'),\quad \text{where } \u'= x' a_{k-1} \mu^{k-1} (010) 0 \v'.
$$
Applying the inductive hypothesis to $\u'$, we obtain $|x'|\le 2^{k-1}-1$. Having $|x|=|x_0|+2|x'|+1$ and $|x|\ge 2^k$, we obtain $|x'|= 2^{k-1}-1$. Since $|x'|$ has the maximum possible length, the inductive hypothesis guarantees that both words $0\u'$ and $1\u'$ contain overlaps. Then the word $\u'$ has some prefix $uu$ that ends with 0, and also some prefix $vv$ that ends with 1 (e.g., $\u'$ can be a word of the form $001001\cdots$). Hence $\mu(\u')$ has the prefix $\mu(u)\mu(u)$ that ends with 1,  and the prefix $\mu(v)\mu(v)$ that ends with 0. This means that both words $0\mu(\u')$ and $1\mu(\u')$ contain overlaps, and thus $x_0=\varepsilon$. So we have $|x|=2|x'|+1= 2^k-1$. This contradicts our assumption $|x|\ge 2^k$ and thus proves the inductive step.
\end{proof}

Recall that a factor $v$ of $\u$ is (right) $\u$-\emph{special} if both $v0$ and $v1$ are factors of $\u$. The set of all special factors of $\u$ is denoted by $\Spec(\u)$. We will omit $\u$ when it is clear from the context. We use the following familiar fact:

\begin{lemma}
The number $D_{\u}(n)=\Spec(\u)\cap\Sigma_2^n$ is the first difference of the subword complexity of $\u$: $D_{\u}(n)=p_{\u}(n+1)-p_{\u}(n)$.
\end{lemma}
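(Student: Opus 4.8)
The plan is to count the length-$(n{+}1)$ factors of $\u$ according to their length-$n$ prefixes. First I would introduce the map $\phi$ sending each factor $w$ of $\u$ with $|w|=n{+}1$ to its prefix of length $n$; this is a well-defined surjection from $\Fac(\u)\cap\Sigma_2^{n+1}$ onto $\Fac(\u)\cap\Sigma_2^n$, since any prefix of a factor is a factor and any length-$n$ factor occurs inside some length-$(n{+}1)$ factor (using right extendability, below). The fiber $\phi^{-1}(v)$ over a length-$n$ factor $v$ consists exactly of the words $va$ with $a\in\Sigma_2$ that are factors of $\u$, so its cardinality equals the number of distinct one-letter right extensions of $v$. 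Hence $p_{\u}(n+1)=\sum_{v\in\Fac(\u)\cap\Sigma_2^n}|\phi^{-1}(v)|$, and the whole argument reduces to determining the possible fiber sizes.

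Next I would pin down those sizes. Because the alphabet is binary, each $v$ has at most two right extensions, so $|\phi^{-1}(v)|\in\{0,1,2\}$. The value $0$ is impossible: as noted in Section~\ref{ss:def}, every factor of the one-sided infinite word $\u$ is right extendable in $\Fac(\u)$, so $v$ admits at least one nonempty right extension, and the first letter of that extension already shows $|\phi^{-1}(v)|\ge 1$. Therefore every fiber has size $1$ or $2$, and by the definition of a special factor, $|\phi^{-1}(v)|=2$ holds precisely when both $v0$ and $v1$ are factors, that is, precisely when $v\in\Spec(\u)$.

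Finally I would assemble the count. Writing $D_{\u}(n)=|\Spec(\u)\cap\Sigma_2^n|$, the set $\Fac(\u)\cap\Sigma_2^n$ splits into $D_{\u}(n)$ factors having two right extensions and $p_{\u}(n)-D_{\u}(n)$ factors having exactly one, whence $p_{\u}(n+1)=2D_{\u}(n)+\bigl(p_{\u}(n)-D_{\u}(n)\bigr)=p_{\u}(n)+D_{\u}(n)$, which rearranges to $D_{\u}(n)=p_{\u}(n+1)-p_{\u}(n)$ as claimed. I expect no serious obstacle: the only point needing care is excluding empty fibers, which is exactly where right extendability of every factor of a one-sided infinite word enters, while the binarity of $\Sigma_2$ is what makes ``non-special'' equivalent to ``exactly one right extension''.
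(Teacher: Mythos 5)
Your proposal is correct and follows essentially the same route as the paper: the paper's proof also considers the map sending each length-$(n{+}1)$ factor to its length-$n$ prefix and observes that special factors have two preimages while non-special ones have one. Your version merely spells out the details (surjectivity via right extendability, fiber sizes in $\{1,2\}$) that the paper leaves implicit.
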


\begin{proof}
Consider the function mapping every word from $\Fac(\u)$ of length $n{+}1$ to its prefix of length $n$. Each special factor of $\u$ of length $n$ has two preimages, while each non-special factor of length $n$ has a single preimage.
\end{proof}

\begin{corollary}
For all $n\ge 1$ and every infinite binary word $\u$ we have
\begin{equation} \label{e:CD}
p_{\u}(n)=2+\sum_{1 \leq i < n} D_{\u}(i).
\end{equation}
\end{corollary}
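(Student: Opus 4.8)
The plan is to derive the formula as an immediate telescoping consequence of the preceding lemma, which identifies $D_{\u}(i)$ with the first difference $p_{\u}(i+1)-p_{\u}(i)$ of the subword complexity.

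First I would substitute this identity into the right-hand side of \eqref{e:CD}, obtaining
$$
\sum_{1\le i<n} D_{\u}(i) = \sum_{i=1}^{n-1}\bigl(p_{\u}(i+1)-p_{\u}(i)\bigr).
$$
Since each interior term appears once with a plus sign and once with a minus sign, the sum telescopes to $p_{\u}(n)-p_{\u}(1)$. Next I would evaluate the base term: for any infinite binary word in which both letters occur --- which is automatic for the power-free (hence aperiodic) words that are the subject of this paper --- there are exactly two distinct factors of length $1$, so $p_{\u}(1)=2$. Substituting yields $\sum_{1\le i<n}D_{\u}(i)=p_{\u}(n)-2$, which rearranges directly into the claimed identity.

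There is no real obstacle here: the whole argument is a one-line telescoping sum over the identity supplied by the previous lemma. The only point deserving a word of care is the base case $p_{\u}(1)=2$, which silently uses that $\u$ contains both letters of $\Sigma_2$; a degenerate word such as $0^\omega$ (not under consideration here) would instead give $p_{\u}(1)=1$, and hence the constant $1$ rather than $2$ in the formula.
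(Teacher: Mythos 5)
Your proof is correct and is exactly the intended argument: the paper states this as an immediate corollary of the preceding lemma, and the only available route is to telescope the first differences $D_{\u}(i)=p_{\u}(i+1)-p_{\u}(i)$ down to the base value $p_{\u}(1)=2$. Your remark that the constant $2$ silently assumes both letters occur in $\u$ is a fair observation (and harmless in context, since all words considered are power-free and hence use both letters).
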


Since $D_{\u}(0)=1$ for every binary word, below we we restrict our attention to $\{D_{\u}(n)\}$ for $n\ge 1$. For example, 
$$
D_{\t}(n)=
\begin{cases}
4,& \text{if $n=2^k+i$ for some  $k\ge 1$, $i>0$, $i\le 2^{k-1}$};\\ 
2,& \text{ otherwise}\end{cases}
$$
as was first computed in \cite{Brlek:1989}. As an infinite word over $\{2,4\}$, this sequence looks like
\begin{equation} \label{e:DT}
224244224444222244444444222222224\cdots
\end{equation}
where each subsequent block of equal letters is twice the size of the previous block of the same letter.

Let $\u$ be overlap-free. By Corollary~\ref{c:TM}, all $\t$-special factors are $\u$-special, so $D_{\u}(n)\ge D_{\t}(n)$ for all $n$. We call $\u$-special factor
\emph{irregular} if it is not $\t$-special. 

\begin{lemma} \label{l:uspec}
For an overlap-free infinite binary word $\u$, all $\u$-special factors are Thue-Morse factors.
\end{lemma}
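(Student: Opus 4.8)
The plan is to show that no $\u$-special factor can be \emph{irregular} — i.e., that every $\u$-special factor already lies in $\Fac(\t)$. Suppose, for contradiction, that $\u$ has an irregular special factor $v$; thus $v0, v1 \in \Fac(\u)$ but $v$ is not $\t$-special. Since $\u$ is overlap-free it avoids $\alpha$-powers for all $\alpha > 2$, so in particular it is $(7/3)$-power-free and Corollary~\ref{c:73TM} applies: every factor of $\t$ occurs in $\u$. The failure of $v$ to be $\t$-special means one of the two extensions $v0$, $v1$ is \emph{not} a factor of $\t$. Without loss of generality say $v1 \notin \Fac(\t)$ while $v1 \in \Fac(\u)$. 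Then $v1$ (being in $\u$ but not in $\t$) must contain a minimal forbidden word for $\Fac(\t)$, and since all proper prefixes/factors arising from a $\t$-factor $v$ stay inside $\Fac(\t)$, the forbidden word must be a suffix of $v1$.

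The next step is to invoke the classification in Lemma~\ref{l:forbTM}. The minimal forbidden words for $\Fac(\t)$ are $000$, the words $r_k = a_k\mu^k(010)0$, the words $s_k = a_k\mu^k(101)0$, and their complements. I would first rule out $000$ and its complement $111$: the word $v1$ ends in $1$, so if the forbidden suffix is $000$ or $111$ we get an immediate parity/letter contradiction (the last letter is wrong), or else $v1$ would already force an overlap in $\u$, contradicting overlap-freeness. I then handle the $s_k$ family: each $s_k$ (and $\bar s_k$) ends in $0$ (resp.\ $1$), and by the remark following Lemma~\ref{l:forbTM}, the words $s_k, \bar s_k$ are \emph{not} right extendable in $L_{2,2^+}$. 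Since $v1$ occurs in the infinite overlap-free word $\u$, any forbidden suffix of it \emph{is} right extendable in $L_{2,2^+}$; this excludes all $s_k$ and their complements. That leaves only the right-extendable family $r_k$ and $\bar r_k$ as possible forbidden suffixes of $v1$.

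The heart of the argument is then to exploit Lemma~\ref{l:rk}. If $v1$ has a suffix equal to $r_k$ or $\bar r_k$, then in particular $r_k$ (or $\bar r_k$) occurs in $\u$; but Lemma~\ref{l:rk} says these words occur at most once in total, and moreover only at the very beginning, with the preceding prefix $x$ satisfying $|x| \le 2^k - 1$. The key point is the \emph{special} structure: $v$ is $\u$-special, so \emph{both} $v0$ and $v1$ occur in $\u$. Since $v$ is a $\t$-factor and therefore (by Corollary~\ref{c:73TM}) recurrent-ish in the sense of occurring inside $\t$-blocks present throughout $\u$, the occurrence of $v$ extended by $1$ to form the forbidden $r_k$-suffix cannot be the unique occurrence pinned to the bounded-length prefix guaranteed by Lemma~\ref{l:rk} while $v0$ also occurs elsewhere. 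I would make this precise by showing the two extensions $v0$ and $v1$ force two distinct occurrences of $v$, at least one of which pushes an $r_k$ or $\bar r_k$ occurrence past the position bound $2^k-1$ of Lemma~\ref{l:rk}, yielding the contradiction.

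I expect the main obstacle to be this last step: carefully translating ``$v$ is special and $v1$ ends in a forbidden $r_k$'' into a concrete violation of the one-occurrence / bounded-position conclusion of Lemma~\ref{l:rk}. The subtlety is that a single forbidden occurrence is allowed near the start of $\u$, so I must argue that the specialness of $v$ genuinely produces a \emph{second} constrained occurrence, or an occurrence at a forbidden position, rather than coexisting with the permitted initial one. Handling the alignment of $v$ with the internal $\mu^k(\cdots)$ structure of $r_k$, and matching it against the factorization \eqref{e:73fact} of $\u$, is where the bookkeeping will be heaviest; the reduction steps before it are comparatively routine once the forbidden-word classification and the right-extendability filter have been applied.
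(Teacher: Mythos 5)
There is a genuine gap, and it originates in the very first line of your plan: you set out to show that no $\u$-special factor is \emph{irregular}, i.e., that every $\u$-special factor is $\t$-special. That is strictly stronger than the lemma (which only asserts membership in $\Fac(\t)$), and it is false: the twisted Thue--Morse word $\t'$ is overlap-free and, as the proof of Theorem~\ref{t:twisted} shows, has an irregular special factor whose first occurrence begins at \emph{every} position. Consequently the contradiction you are hunting for in your final step does not exist. Concretely, after you reduce to ``$v1$ has suffix $r_k$ (or $\bar r_k$)'', Lemma~\ref{l:rk} only tells you that this one extension occurs exactly once, pinned near the start of $\u$; that is perfectly compatible with $v$ being special, because all the \emph{other} occurrences of $v$ are simply followed by $0$. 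Nothing forces a second occurrence of $r_k$ or an occurrence past the bound $2^k-1$, which is why the step you yourself flag as the main obstacle cannot be completed.

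The fix is to aim the single-occurrence argument at $v$ itself rather than at its extension. Assume $v$ is $\u$-special but $v\notin\Fac(\t)$. Then $v$ contains a minimal forbidden word $x$ for $\Fac(\t)$; since $v$ is a factor of the one-sided infinite word $\u$, the word $x$ is right extendable in $L_{2,2^+}$, so Lemma~\ref{l:forbTM} forces $x\in\{r_k,\bar r_k\}$ for some $k$, and Lemma~\ref{l:rk} then says these words have at most one occurrence in $\u$, hence $v$ occurs at most once in $\u$. But a special factor must occur at least twice (once followed by $0$ and once followed by $1$), a contradiction. This is essentially the paper's two-line proof; your reductions via the forbidden-word classification and right extendability are sound, but they are applied to the wrong word.
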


\begin{proof}
By Lemmas~\ref{l:forbTM} and \ref{l:rk}, every word from $\Fac(\u)\backslash \Fac(\t)$ occurs in $\u$ only once and thus is not $\u$-special. Hence $\Spec(\u)\subseteq \Fac(\t)$.
\end{proof}

\begin{proof}[Proof of Theorem~\ref{t:twisted}]
Let $\u=0\cdots$ be an overlap-free infinite binary word (the case $\u=1\cdots$ is parallel, so we omit it). We show the following four facts:

\begin{itemize}
\item[(i)] at every position in $\u$, the first occurrence of at most one irregular $\u$-special factor begins;
\item[(ii)] for every $n\ge 0$, there is an irregular $\t'$-special factor $v_n$ with the first occurrence beginning at position $n$ of $\t'$;
\item[(iii)] if $w_n$ is an irregular $\u$-special factor with the first occurrence beginning at position $n$ of an overlap-free word $\u$, then $|w_n|\ge |v_n|$;
\item[(iv)] the inequality $|w_n|\ge |v_n|$ is strict for at least one value of $n$.
\end{itemize}

\noindent (i) Let $v$ be an irregular $\u$-special factor. By Lemma~\ref{l:uspec}, $v\in \Fac(\t)$; but either $v0$ or $v1$ is not a Thue-Morse factor by definition of irregularity. W.l.o.g., $v0\notin \Fac(\t)$. Then some suffix of $v0$ is a minimal forbidden word for $\Fac(\t)$. By Lemma~\ref{l:forbTM}, this suffix equals $r_k$ for some $k\ge 0$. So $v0=v'r_k$, where $|v'|<2^k$ by Lemma~\ref{l:rk}. Thus, the first occurrence of $v$ in $\u$ is followed by 0 (and $v0\notin\Fac(\t)$), while all other occurrences of $v$ are followed by 1 (and $v1\in\Fac(\t)$). Now assume that some proper prefix $w$ of $v$ is also an irregular $\u$-special factor. Since $v\in \Fac(\t)$, the occurrence of $w$ as a prefix of $v$ is not the first occurrence of $w$ in $\u$. Hence the first occurrences of each two irregular $\u$-special factors begin in different positions.

\medskip

\noindent (ii) From \eqref{e:twisted} it is easy to see that for every even $k$ the prefix of $\t'$ of length $5\cdot 2^k$ equals $v\mu^k(0100)$, where $v=0=\mu^k(0)$ for $k=0$ and $v$ has the common suffix $\mu^{k-1}(1)$ with $\mu^k(0)$ for $k>0$. Similarly, for every odd $k$ such a prefix equals $v\mu^k(1011)$, where $v$ has the common suffix $\mu^{k-1}(0)$ with $\mu^k(1)$. According to the above description of the irregular special factors, $\t'$ has first occurrences of irregular special factors beginning at each position:
\begin{equation} \label{e:vn}
\begin{array}{l|l|l}
\text{Factor}&\text{Position}&\text{Length}\\
\hline
0010&  0 & 4\\
\phantom{0}0100110& 1 & 7\\
\phantom{00}10011010010110& 2& 14\\
\phantom{001}0011010010110& 3& 13\\
\quad \vdots&\quad \vdots&\quad \vdots\\
r\mu^k(010), r \text{ is a suffix of } \mu^{k-1}(0)& 2^{k-1}+i,\ 0\le i< 2^{k-1} & 4\cdot 2^k - 2^{k-1} - i\\
\end{array}
\end{equation}

\noindent (iii) If $v$ is an irregular $\u$-special factor and its first occurrence is followed by $a$, then $va$ is not Thue-Morse and thus has, by Lemma~\ref{l:forbTM}, the suffix $r_k$ or $s_k$ for some $k\ge 0$ (because $v$ is Thue-Morse). Hence by Lemma~\ref{l:rk}, $v$ is a suffix of the prefix $r'\mu^k(010)$ of $\u$, where $0<|r'|\le 2^k$. Therefore the factor $v$ of length $3\cdot 2^k+i$ occurs in $\u$ for the first time at position at most $2^k-i$ ($1\le i\le 2^k$). Inverting this, the factor $w_n$ such that $2^k\le n < 2^{k+1}$ ($k\ge 0$) is of length at least $3\cdot 2^{k+1} + 2^{k+1} - n = 2^{k+3} - n$. But this number is exactly the length of $v_n$ for every $n\ge 1$; see \eqref{e:vn}. Since $v_0$ is the shortest irregular special factor, we proved the required statement.

\medskip

\noindent (iv) Note that the lengths of irregular special factors, given in \eqref{e:vn}, and the first letter of $u$ allow one to restore the whole word in a unique way, and this word is $\t'$. Hence, for any other word $\u$ starting with $0$ we
have $|w_n|>|v_n|$ for some $n$.

\medskip
To finish the proof, consider the function $f: \Spec(\u)\to \Spec(\t')$ that maps every $\t$-special factor to itself and every irregular factor $w_n$ to the irregular factor $v_n$. The function $f$ is well-defined by (ii) and injective by (i). By (iii), $\phi$ never increases the length of the word. Finally, (iv) implies that $f$ decreases the length of some $w_{n_0}$. Together, these facts imply that for every $n$,
$$
\sum_{1 \leq i < n} D_{\u}(i) \le \sum_{1 \leq i < n} D_{\t'}(i),
$$
and the inequality is strict for $n=n_0$.  The claim of the theorem is now immediate from \eqref{e:CD}.
\end{proof}

As a consequence of the proof, we can determine a closed form for the subword complexity of $\t'$.

\begin{proposition} \label{p:twisted}
The number of special $\t'$-factors of length $n>0$ is given by the formula
$$
D_{\t'}(n)=
\begin{cases}
4,& 2^{k+1} < n \le 3\cdot 2^k \text{ for some } k\ge 0;\\
3,& n=4 \text{ or } 3\cdot 2^k < n \le 7\cdot 2^{k-1} \text{ for some } k\ge 1;\\
2,& \text{otherwise.}
\end{cases}
$$
\end{proposition}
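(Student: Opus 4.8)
The plan is to reduce everything to the count of irregular special factors already produced in the proof of Theorem~\ref{t:twisted}. First I would record the basic decomposition
\[
D_{\t'}(n) = D_{\t}(n) + I(n),
\]
where $I(n)$ denotes the number of irregular $\t'$-special factors of length $n$. This is immediate from the machinery already in place: by Corollary~\ref{c:TM} every $\t$-special factor is $\t'$-special (since $\t'$ is overlap-free, $\Fac(\t)\subseteq\Fac(\t')$), while by Lemma~\ref{l:uspec} every $\t'$-special factor lies in $\Fac(\t)$. Thus the $\t'$-special factors of length $n$ split into the $D_{\t}(n)$ that are $\t$-special and the $I(n)$ irregular ones (those $v\in\Fac(\t)$ for which exactly one of $v0,v1$ lies in $\Fac(\t)$), and these two families are disjoint by the very definition of irregularity.

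Next I would compute $I(n)$ directly from table~\eqref{e:vn}. By parts (i)--(ii) of the proof of Theorem~\ref{t:twisted}, the irregular $\t'$-special factors are in bijection with their starting positions $p\ge 0$, with exactly one beginning at each position. Rewriting the length entry of~\eqref{e:vn}, the factor starting at position $p$ with $2^{k-1}\le p<2^k$ has length $2^{k+2}-p$, while position $0$ gives length $4$. As $p$ runs through the block $[2^{k-1},2^k)$, the length $2^{k+2}-p$ runs bijectively through the integer interval $(3\cdot 2^k,\,7\cdot 2^{k-1}]$. Hence $I(n)=1$ precisely when $n=4$ or $3\cdot 2^k<n\le 7\cdot 2^{k-1}$ for some $k\ge 1$, and $I(n)=0$ otherwise; the key point is that these intervals, over $k\ge 1$, are pairwise disjoint and all miss the point $n=4$, so no length is attained twice and indeed $I(n)\in\{0,1\}$.

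Finally I would substitute the known value of $D_{\t}$ and combine. Brlek's formula for $D_{\t}$, recalled before~\eqref{e:DT}, rewrites as $D_{\t}(n)=4$ exactly when $2^{k+1}<n\le 3\cdot 2^k$ for some $k\ge 0$, and $D_{\t}(n)=2$ otherwise. The three cases of the proposition then fall out: on the range $2^{k+1}<n\le 3\cdot 2^k$ one has $D_{\t}(n)=4$ together with $I(n)=0$, giving $4$; on the range $n=4$ or $3\cdot 2^k<n\le 7\cdot 2^{k-1}$ one has $I(n)=1$ while $D_{\t}(n)=2$, giving $3$; and on the remaining values of $n$ both summands are minimal, giving $2$. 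The one substantive verification---and the step I expect to be the main obstacle---is the elementary but fiddly dyadic bookkeeping showing that the interval $(2^{k+1},3\cdot 2^k]$ on which $D_{\t}=4$ is disjoint from the interval $(3\cdot 2^k,7\cdot 2^{k-1}]$ (and from $\{4\}$) on which $I=1$. This disjointness is exactly what prevents the two ``$+2$'' contributions from ever coinciding, forces $D_{\t'}\le 4$, and cleanly partitions the line into the three stated cases; the conclusion for $D_{\t'}$ then follows from the displayed decomposition.
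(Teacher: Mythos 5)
Your proposal is correct and follows essentially the same route as the paper's own proof: both rest on the decomposition $D_{\t'}(n)=D_{\t}(n)+I(n)$, where $I(n)$ counts irregular $\t'$-special factors, and both read off $I(n)$ from the position/length table~\eqref{e:vn} established in the proof of Theorem~\ref{t:twisted} before comparing with Brlek's formula for $D_{\t}$. The paper leaves the dyadic interval bookkeeping implicit (phrasing it as ``some 2's have been changed to 3's'' in the sequence over $\{2,3,4\}$), so your explicit check that the intervals $(3\cdot 2^k,7\cdot 2^{k-1}]$ are pairwise disjoint and disjoint from the $D_{\t}=4$ ranges is a fuller writing-out of the same argument rather than a different one.
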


\begin{proof}
The proposition states that the sequence $\{D_{\t'}(n)\}$ ($n\ge 1$) can be written as the following word over $\{2,3,4\}$:
$$
224344324444332244444444333322224\cdots
$$
Comparing this to \eqref{e:DT}, we see that some 2's have been changed to 3's. This means an additional $\t'$-special factor for each corresponding length, and this factor must be irregular. According to \eqref{e:vn}, the set of all positions of 3's indeed coincides with the set of lengths of irregular $\t'$-special factors, thus proving the proposition. 
\end{proof}

\begin{corollary}
The maximum factor complexity of a binary overlap-free infinite word is the factor complexity of the twisted Thue-Morse word $\t'$ and is given, for $n\ge 4$, by the formula
\begin{equation} \label{e:twcompl}
p_{\t'}(n+1)=
\begin{cases}
4n - 3\cdot 2^{i-2}  , & \text{if }2^i \leq n \leq 3 \cdot 2^{i-1}; \\
3n + 3\cdot 2^{i-2}  , & \text{if }3 \cdot 2^{i-1} \leq n \leq 7 \cdot 2^{i-2}; \\
2n + 5\cdot 2^{i-1} , & \text{if }7 \cdot 2^{i-2} \leq n \leq 2^{i+1}.
\end{cases}
\end{equation}
\end{corollary}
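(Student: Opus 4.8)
The plan is to combine Theorem~\ref{t:twisted} with the telescoping identity \eqref{e:CD} and the explicit first-difference formula of Proposition~\ref{p:twisted}; no new combinatorial input is needed, so the whole argument is a careful bookkeeping computation. The assertion that $\t'$ realizes the maximum complexity among overlap-free binary words is exactly the conclusion of Theorem~\ref{t:twisted}, so it remains only to verify the closed form \eqref{e:twcompl}. For this I would start from $p_{\t'}(n+1) = 2 + \sum_{1\le i\le n} D_{\t'}(i)$, which is \eqref{e:CD} applied to $\u = \t'$, and read off $D_{\t'}$ from Proposition~\ref{p:twisted}.

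The key structural observation is that Proposition~\ref{p:twisted} partitions each dyadic ``octave'' $n \in (2^i, 2^{i+1}]$ into three consecutive blocks on which $D_{\t'}$ is constant: $D_{\t'} = 4$ on the $2^{i-1}$ integers in $(2^i, 3\cdot 2^{i-1}]$, then $D_{\t'} = 3$ on the $2^{i-2}$ integers in $(3\cdot 2^{i-1}, 7\cdot 2^{i-2}]$, then $D_{\t'} = 2$ on the $2^{i-2}$ integers in $(7\cdot 2^{i-2}, 2^{i+1}]$. These block lengths sum to $2^i$, matching the octave, which is a useful consistency check. Consequently the contribution of one full octave to the sum is $4\cdot 2^{i-1} + 3\cdot 2^{i-2} + 2\cdot 2^{i-2} = 13\cdot 2^{i-2}$.

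I would then pin down the closed form at the octave endpoints first: summing these octave contributions and telescoping against the base value $\sum_{1\le i\le 4} D_{\t'}(i) = 2+2+4+3 = 11$ yields $p_{\t'}(2^m+1) = 13\cdot 2^{m-2}$ for $m \ge 2$. With these anchors in hand, each of the three cases of \eqref{e:twcompl} follows by adding, to the value of $p_{\t'}$ at the left endpoint of the relevant block, the appropriate multiple of the constant value of $D_{\t'}$ on that block: the slope is $4$, $3$, or $2$ exactly as in the three cases, and the constant terms $-3\cdot 2^{i-2}$, $3\cdot 2^{i-2}$, and $5\cdot 2^{i-1}$ are forced by the endpoint values. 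Equivalently, this can be phrased as a short induction on the octave index $i$.

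The computation carries no conceptual difficulty; the only points requiring genuine care, and hence the ``main obstacle,'' are the base case and the boundary consistency. The base case is precisely why \eqref{e:twcompl} is stated only for $n\ge 4$: the initial values $D_{\t'}(1)=D_{\t'}(2)=2$, $D_{\t'}(3)=4$, $D_{\t'}(4)=3$ do not yet follow the clean block pattern, so the formula must be anchored at $n=4$ rather than derived from below. For boundary consistency I would check each overlapping value explicitly; for instance, at $n = 2^{i+1}$ the first case with parameter $i+1$ and the third case with parameter $i$ must both return $13\cdot 2^{i-1}$, and similarly the two descriptions of $n = 3\cdot 2^{i-1}$ (right end of the first block, left end of the second) and of $n = 7\cdot 2^{i-2}$ must agree. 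These agreements are exactly what makes the piecewise formula well defined, and verifying them completes the proof.
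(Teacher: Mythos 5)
Your proposal is correct and follows exactly the paper's route: the paper's proof is the single line ``Immediate from Theorem~\ref{t:twisted}, Proposition~\ref{p:twisted}, and formula~\eqref{e:CD},'' and you have simply carried out that telescoping/bookkeeping explicitly (your octave sums, anchors $p_{\t'}(2^m+1)=13\cdot 2^{m-2}$, and boundary checks all verify correctly against the paper's table of values).
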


\begin{proof}
Immediate from Theorem~\ref{t:twisted}, Proposition~\ref{p:twisted}, and formula~\eqref{e:CD}.
\end{proof}








A table of the first few values of the subword complexity
of $\t'$ follows:

\begin{center}
\begin{tabular}{c|cccccccccccccccccccc}
$n$ & 0 & 1 & 2 & 3 & 4 & 5 & 6 & 7 & 8 & 9 & 10 & 11 & 12 & 13 & 14 & 15 & 16 & 17 & 18\\
\hline
$p_{\t'} (n) $ &
1&2&4&6&10&13&17&21&24&26&30&34&38&42&45&48&50&52&56\\
\end{tabular}
\end{center}

\subsection{Beyond overlap-free words}

Now we turn to the case of $\alpha$-power-free infinite binary words for arbitrary $\alpha$ from the half-open interval $(2, \frac{7}{3}]$. By Corollary~\ref{c:TM}, the Thue-Morse word and its complement are words of minimum complexity. Theorem~\ref{t:73max} below shows that the asymptotic growth of subword complexity belongs to a very small range. Theorem~\ref{t:73nomax} demonstrates that there is no $(7/3)$-power-free  infinite binary word of maximum complexity. 

\begin{theorem} \label{t:73max}
Let $\alpha\leq 7/3$.  Every infinite binary word $\u$ avoiding $\alpha$-powers has linear subword complexity. Moreover, for every $n>0$ one has $p_{\u}(n)< \frac {6}{5}\cdot p_{\t}(n)$.
\end{theorem}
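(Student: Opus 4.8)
The plan is to reduce to the widest class and then count, length by length, the factors that $\u$ has beyond those of $\t$. Since every $\alpha$-power-free word with $\alpha\le 7/3$ is in particular $(7/3)$-power-free, it suffices to prove the bound for an arbitrary infinite $(7/3)$-power-free word $\u$; the claim for smaller $\alpha$ then follows a fortiori. By Corollary~\ref{c:73TM} we have $\Fac(\t)\subseteq\Fac(\u)$, so every $\t$-special factor is $\u$-special and $D_{\u}(n)\ge D_{\t}(n)$ for all $n$. By \eqref{e:CD}, the quantity to be bounded is
$$
p_{\u}(n)-p_{\t}(n)=\sum_{1\le i<n}\bigl(D_{\u}(i)-D_{\t}(i)\bigr),
$$
which equals the number of irregular $\u$-special factors of length $<n$. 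Thus both assertions (linearity, and the constant $6/5$) reduce to an upper estimate on the number of irregular special factors, and it is enough to show this number is $O(n)$ and, more sharply, stays below $\tfrac15\,p_{\t}(n)$.

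The second step is to obtain $(7/3)$-analogues of Lemmas~\ref{l:forbTM} and \ref{l:rk}. Exactly as in the proof of Theorem~\ref{t:twisted}, each irregular $\u$-special factor $v$ has a unique continuation $va$ that leaves $\Fac(\t)$, so $va$ ends in a minimal forbidden word for $\Fac(\t)$ that is right extendable in $L_{2,7/3}$. The governing phenomenon is a dichotomy visible already at $k=0$: the word $r_0=00100$ lies in $L_{2,7/3}$, whereas $s_0=01010=(01)^{5/2}$ does not. Since $\mu(r_{k-1})=b\,r_k\,1$ and $\mu(s_{k-1})=b'\,s_k\,1$ for suitable letters $b,b'$, any occurrence of $r_k$ or $s_k$ that is aligned inside a $\mu$-image descends, under the unique factorization~\eqref{e:73fact}, to an occurrence of $r_0$ or $s_0$ in a deep preimage. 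For the $s_k$-family this descent is impossible ($s_0\notin L_{2,7/3}$), so $s_k$ can occur only near the beginning of $\u$, through the boundary letters $x_i$, and hence contributes only lower-order terms. For the $r_k$-family the descent is harmless, and I would prove the context bound just as in Lemma~\ref{l:rk}: each occurrence of $r_k$ (or $\bar r_k$) carries left context of length at most $2^k-1$, so the irregular special factors it produces are suffixes of a single word of length $\le 4\cdot 2^k$ and have lengths confined to the window $(3\cdot 2^k,\,4\cdot 2^k]$, at most one per length. These windows are pairwise disjoint, since $4\cdot 2^k<3\cdot 2^{k+1}$.

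The third step is the summation. Writing $E(n)=p_{\u}(n)-p_{\t}(n)$, the $r_k$-family contributes at most $2^k$ to $E$ from the window of index $k$, so $E$ increases with slope at most $1$ on each window $(3\cdot 2^k,4\cdot 2^k]$ and is constant on the gaps; in particular $E(2^{K+2})<2^{K+1}$. Comparing with the explicit values in \eqref{e:complTM} (on the relevant window $p_{\t}(n)$ has the form $2n+2^{i+1}$, and $p_{\t}(n)\approx 3n$ near $n=2^{i+1}$), the ratio $E(n)/p_{\t}(n)$ is maximized at the points $n=2^{i+1}$, where it is at most $\tfrac16$. Adding the negligible $s_k$-contribution keeps the ratio below $\tfrac15$, which is precisely the statement $p_{\u}(n)<\tfrac65\,p_{\t}(n)$; linearity of $p_{\u}$ follows at once from $E(n)=O(n)$ and $p_{\t}(n)=O(n)$.

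The main obstacle is the second step: carrying the structural Lemmas~\ref{l:forbTM} and \ref{l:rk}, proved for overlap-free words, over to $L_{2,7/3}$, and in particular making the descent argument for the $s_k$-family rigorous so that these newly right extendable forbidden words are confined to the boundary and cannot accumulate. The quantitative danger is real: if the $s_k$-family (or both $r_k$ and $\bar r_k$) could contribute a full window each, one would get up to two irregular special factors per length and a ratio near $\tfrac13$, which would destroy the bound. The whole argument lives in the slack between the pure-$r_k$ ratio $\tfrac16$ and the target $\tfrac15$, so the delicate point is to show the extra forbidden words contribute only $o(n)$.
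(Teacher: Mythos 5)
Your reduction to the $(7/3)$-power-free case and the identity $p_{\u}(n)-p_{\t}(n)=\sum_{1\le i<n}\bigl(D_{\u}(i)-D_{\t}(i)\bigr)$ are sound, but the quantitative core of your argument is exactly the part you defer to ``step two,'' and it does not transfer from overlap-free to $(7/3)$-power-free words in the form you need. You are assuming three things: (1) every irregular $\u$-special factor is a Thue--Morse factor, so that its bad extension $va$ ends in a minimal forbidden word $r_k$ or $s_k$; (2) all four families $r_k,\bar r_k,s_k,\bar s_k$ together produce at most one irregular special factor per length, confined to the disjoint windows $(3\cdot 2^k,4\cdot 2^k]$; (3) the $s_k$-contribution is $o(n)$. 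None of these is proved, and (1) is already in doubt: the paper's Lemma~\ref{l:uspec} rests on Lemma~\ref{l:rk}, i.e., on the fact that non-Thue--Morse factors of an overlap-free word occur \emph{only once}, and the proof of Lemma~\ref{l:rk} invokes overlap-freeness repeatedly (overlaps of exponent strictly between $2$ and $7/3$ are legal in $L_{2,7/3}$, so ``this creates an overlap'' is no longer a contradiction there). In a $(7/3)$-power-free word a non-Thue--Morse factor of length $n$ can a priori occur at up to roughly $2n/3$ positions, hence can itself be $\u$-special, and such factors fall entirely outside your classification. As for (2) and (3), your own sensitivity analysis shows that the whole bound lives in the slack between the pure-$r_k$ ratio $1/6$ and the target $1/5$; if even one extra family contributes a full window per generation the constant is destroyed. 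So the proposal is a program rather than a proof: the steps you flag as ``the main obstacle'' are precisely where the theorem is decided.

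The paper's proof avoids this machinery entirely and is worth contrasting. It fixes $m$ with $3\cdot 2^{m-1}+1<n\le 3\cdot 2^m+1$, writes $\u=x_0\mu(x_1)\cdots\mu^{m-1}(x_{m-1})\mu^m(\v)$ via \eqref{e:73fact}, and observes that every length-$n$ factor of the tail $\mu^m(\v)$ is already a Thue--Morse factor, because the shortest non-Thue--Morse factor of a $\mu^m$-image has length at least $3\cdot 2^m+2$. Hence every length-$n$ factor of $\u$ outside $\Fac(\t)$ must begin inside the prefix $x_0\mu(x_1)\cdots\mu^{m-1}(x_{m-1})$, and a short case analysis on $x_{m-1}$ bounds the number of such starting positions by $2^m$; comparison with $p_{\t}(n)\ge 5\cdot 2^m+2$ then yields the factor $6/5$. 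If you want to keep your special-factor route, you would first have to reprove the analogues of Lemmas~\ref{l:forbTM}, \ref{l:rk} and \ref{l:uspec} for $L_{2,7/3}$ from scratch; the positional argument is shorter and immune to the alignment and descent issues you identify.
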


\begin{proof}
First assume that $\u=\mu^m(\v)$ for some word $\v$ and some $m\ge 0$.  Then $\v$ also avoids $\alpha$-powers by Lemma~\ref{l:tmexp}. According to Lemma~\ref{l:forbTM}, the shortest word that can be a factor of $\v$, but does not occur in $\t$, is either $00100$ or $11011$. Hence every factor of $\u$ that is contained in four consecutive blocks of the form $\mu^m(a)$, $a\in\Sigma_2$, is a factor of $\t$. Thus the shortest factor of $\u$ that is not in $\t$ has the length at least $3\cdot 2^m+2$. (If $\v$ contains $00100$, this factor is $r_k$ from Lemma~\ref{l:forbTM}.) So we have 
\begin{equation} \label{e:mum}
\u=\mu^m(\v) \Longrightarrow p_{\u}(n)=p_{\t}(n) \text{ for every }
n=0,1,\ldots,3{\cdot} 2^m{+}1.
\end{equation}
Now let $\u$ be arbitrary. Still, $\u$ satisfies \eqref{e:mum} with $m=0$, so $p_{\u}(n)=p_{\t}(n)$ for $n\le 4$. So we take an arbitrary $n\ge 5$ and choose a unique integer $m\ge 1$ satisfying the condition $3\cdot 2^{m-1}+1 < n \le 3\cdot 2^m+1$. Consider the factorization of $\u$ of type \eqref{e:73fact}:
$$
\u=x_0\mu(x_1\mu(\cdots x_{m-1}\mu(\v)\cdots ))=
x_0\mu(x_1)\cdots \mu^{m-1}(x_{m-1})\mu^m(\v) .
$$
By \eqref{e:mum}, all factors of $\mu^m(\v)$ of length $n$ are Thue-Morse factors, so we have 
$$
p_{\u}(n)-p_{\t}(n)\le |x_0\mu(x_1)\cdots \mu^{m-1}(x_{m-1})|=
\sum_{i=0}^{m-1} 2^i|x_i|.
$$
This upper bound is a bit loose; to tighten it, consider $x_{m-1}$. By Lemma~\ref{l:73fact}, $|x_{m-1}|\le 2$. Let $|x_{m-1}|= 2$ (w.l.o.g., $x_{m-1}=00$). Then
$$
\mu^{m-1}(x_{m-1})\mu^m(\v) = \mu^{m-1}(0)\mu^{m-1}(0100110)\cdots
$$
Since 0100110 is a Thue-Morse factor, so is $\mu^{m-1}(0100110)$. Hence the number of length $n$ words in $\Fac(\u)\backslash \Fac(\t)$ is at most $2^{m-1}+\sum_{i=0}^{m-2} 2^i|x_i|$. Applying the second statement of Lemma~\ref{l:73fact} to $x_{m-1}$, we easily obtain $\sum_{i=0}^{m-2} 2^i|x_i|\le \sum_{i=0}^{m-2} 2^i< 2^{m-1}$. Therefore,
\begin{equation} \label{e:pupt}
p_{\u}(n)-p_{\t}(n)<  2^{m}. 
\end{equation}
Next let $x_{m-1}=0$ and let $\v=abc\cdots$, $a,b,c\in\{0,1\}$. If $a=b=c$ or $a=b=1$, then $\u$ is not $(7/3)$-power free. Otherwise, $1abc$ is a Thue-Morse factor, as well as the suffix $\mu^{m-1}(0)\mu^m(abc)$ of $\mu^m(1abc)$. Then all length-$n$ words of $\u$, that are not in $\Fac(\t)$, begin in $\u$ on the left of $\mu^{m-1}(x_{m-1})$. But $\sum_{i=0}^{m-2} 2^i|x_i|<2^m$, so we again have \eqref{e:pupt}. By the same reason we obtain \eqref{e:pupt} in the case $x_{m-1}=\varepsilon$.

Finally, we apply \eqref{e:complTM} to get 
\begin{equation} \label{e:pt}
p_{\t}(n)\ge p_{\t}(3\cdot 2^{m-1}+2)= 2\cdot (3\cdot 2^{m-1}+1)+2^{m+1}=5\cdot 2^m+2
\end{equation}
and compare \eqref{e:pupt} to \eqref{e:pt} to obtain the required inequality.
\end{proof}

From the proof of Theorem~\ref{t:73max} we have the following: for arbitrary $(7/3)$-power-free infinite words $\u$,
\begin{enumerate}[(a)]
\item if $n\le 4$, then $p_{\u}(n)=p_{\t}(n)$;
\item if $5\le n\le 7$, we write $\u=x_0\mu(\v)$ and see that $p_{\u}(n)\le p_{\t}(n)+1$; the only length $n$ factor of $\u$ that is possibly not in $\Fac(\t)$ is the prefix of $\u$;
\item if $8\le n\le 13$, we write $\u=x_0\mu(x_1)\mu^2(\v)$ and analyze cases to get $p_{\u}(n)\le p_{\t}(n)+2$ for $n=8,9$ and $p_{\u}(n)\le p_{\t}(n)+3$ for $n=10,11,12,13$. Further, we note that every word $\u_1$ having the property $p_{\u_1}(8)= p_{\t}(8)+2$ and beginning with 0 is of the following form:
$$
\u_1=00\mu(1)\mu^2(01\cdots)=00\,10\,0110\,1001\cdots,
$$
and $p_{\u_1}(n)= p_{\t}(n)+2$ for all $n$ in the considered range (the two ``additional'' factors are those beginning at positions 0 and 1). On the other hand, every word $\u_2$ with the property $p_{\u_1}(10)= p_{\t}(10)+3$, starting with 0, has the form 
$$
\u_2=0\mu(00)\mu^2(101\cdots)=0\,0101\,1001\,0110\,1001\cdots,
$$
and one has $p_{\u_2}(8)= p_{\t}(8)+1$ (the only additional factor 10110010 begins at position 2), $p_{\u_2}(9)= p_{\t}(9)+2$ (the additional factors begin at positions 1 and 2), and $p_{\u_2}(n)= p_{\t}(n)+3$ for $n=10,11,12,13$ (the additional factors begin at positions 0, 1, and 2). We see that $\u_1$ reaches the maximum possible complexity for $n=8$ but not for $n=10$, while $\u_2$ reaches this maximum for $n=10$ but not for $n=8$. 
\end{enumerate}

Thus we have proved the following result:
\begin{theorem} \label{t:73nomax}
There is no $(7/3)$-power-free infinite binary word of maximum complexity.
\end{theorem}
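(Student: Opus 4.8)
The plan is to exhibit two lengths at which the maximum attainable complexity is forced by \emph{different} words, so that no single word can be extremal at both. I would take $n=8$ and $n=10$, determine the exact maximum of $p_{\u}(n)$ over all $(7/3)$-power-free binary $\u$ at each of these lengths, and characterize the words attaining it. The punchline will be that a word maximizing complexity at $n=8$ cannot also maximize it at $n=10$.

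First I would reduce the problem to counting ``excess'' factors, that is, length-$n$ factors of $\u$ not lying in $\Fac(\t)$. By Lemmas~\ref{l:forbTM} and \ref{l:rk}, each such factor occurs exactly once in $\u$ and is pinned to the prefix region of $\u$; so the excess $p_{\u}(n)-p_{\t}(n)$ is a finite quantity governed entirely by the beginning of $\u$. Using the factorization $\u=x_0\mu(x_1)\mu^2(\v)$ of Lemma~\ref{l:73fact}, with $x_0,x_1\in\{\varepsilon,0,1,00,11\}$ and the compatibility constraints of that lemma, I would run through the finitely many admissible choices of $x_0,x_1$ and of the first few letters of $\v$, counting the resulting excess factors at lengths $8$ and $10$. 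This yields the bounds $p_{\u}(8)\le p_{\t}(8)+2$ and $p_{\u}(10)\le p_{\t}(10)+3$, and, more importantly, identifies the maximizers: up to complement, the only word with $p_{\u}(8)=p_{\t}(8)+2$ is of the form $\u_1=00\mu(1)\mu^2(01\cdots)$, and the only word with $p_{\u}(10)=p_{\t}(10)+3$ is of the form $\u_2=0\mu(00)\mu^2(101\cdots)$.

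The decisive observation is then that these two optima are incompatible. Tracking the same prefix data, the $\u_1$-type word has $p_{\u_1}(10)=p_{\t}(10)+2$, strictly below the length-$10$ maximum, while the $\u_2$-type word has $p_{\u_2}(8)=p_{\t}(8)+1$, strictly below the length-$8$ maximum. A hypothetical word $\u$ of maximum complexity would have to dominate both $\u_1$ and $\u_2$, hence satisfy $p_{\u}(8)=p_{\t}(8)+2$ and $p_{\u}(10)=p_{\t}(10)+3$ simultaneously; but the first equality forces $\u$ into $\u_1$-form, contradicting the second. This contradiction proves that no word of maximum complexity exists.

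I expect the main obstacle to be the bookkeeping in the second step: correctly counting, for each admissible prefix configuration permitted by Lemma~\ref{l:73fact}, exactly how many distinct length-$n$ factors fall outside $\Fac(\t)$, and---crucially---establishing \emph{uniqueness} of the maximizer at each of the two chosen lengths. Once uniqueness is secured at $n=8$ and $n=10$, the contradiction is immediate from the definition of maximum subword complexity. A secondary point to verify is that $n=8,10$ genuinely separate the two maximizers; in principle one might need to scan a few small lengths to locate such a pair, but the case analysis already makes $8$ and $10$ the natural witnesses.
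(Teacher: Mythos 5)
Your proposal is correct and follows essentially the same route as the paper: the paper's argument also compares the lengths $n=8$ and $n=10$, shows via the factorization $\u=x_0\mu(x_1)\mu^2(\v)$ that the unique maximizers (up to complement) at these two lengths are the incompatible forms $\u_1=00\mu(1)\mu^2(01\cdots)$ and $\u_2=0\mu(00)\mu^2(101\cdots)$, and concludes that no word is extremal at both. The only work left in your sketch is the finite case analysis of admissible prefixes, which is exactly the bookkeeping the paper carries out.
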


If there is no maximum subword complexity, it makes sense to look at some sort of ``asymptotically maximal'' complexity. For a function $y(n)$ of linear growth, let its \textit{linear growth constant} be $\limsup_{n\to\infty} \frac{y(n)}{n}$. For overlap-free infinite binary words, the maximum subword complexity has linear growth constant $7/2$, as can be easily derived from \eqref{e:twcompl}. From \eqref{e:complTM} we see that such a constant for the Thue-Morse word is $10/3$. By Theorem~\ref{t:73max}, this means that the linear growth constants of all $(7/3)$-power-free infinite binary words are upper bounded by 4.

\begin{openquestion}
What is the maximum linear growth constant for the subword complexity of a $(7/3)$-power-free infinite binary word? Which words have such complexity?
\end{openquestion}

\subsection{The symmetric case}

\begin{theorem}
The only possible subword complexity function of a $(7/3)$-power-free symmetric infinite binary word is the function $p_{\t}(n)$.
\end{theorem}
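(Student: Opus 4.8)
The plan is to prove the a priori stronger (but here equivalent) statement that $\Fac(\u)=\Fac(\t)$ for every $(7/3)$-power-free symmetric word $\u$. I would first record two reductions. Over the binary alphabet the only non-identity bijective coding is complementation, so ``$\u$ symmetric'' means exactly $\Fac(\u)=\overline{\Fac(\u)}$. By Corollary~\ref{c:73TM} we always have $\Fac(\t)\subseteq\Fac(\u)$; moreover, if some $w\in\Fac(\u)\setminus\Fac(\t)$ existed then, taking $n=|w|$, the $p_{\t}(n)$ Thue--Morse factors of length $n$ together with $w$ would give $p_{\u}(n)\ge p_{\t}(n)+1$. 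Hence $p_{\u}=p_{\t}$ is equivalent to $\Fac(\u)=\Fac(\t)$, and it suffices to show that $\u$ has no factor outside $\Fac(\t)$.

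Next I would combine the forbidden-word classification with symmetry. Suppose for contradiction that $\Fac(\u)\setminus\Fac(\t)\neq\varnothing$ and let $w$ be a shortest element; then every proper factor of $w$ lies in $\Fac(\t)$, so $w$ is a minimal forbidden word for $\Fac(\t)$, listed in Lemma~\ref{l:forbTM}. As a factor of the $(7/3)$-power-free word $\u$, $w$ is itself $(7/3)$-power-free, which excludes $000,111$ (cubes) and $01010,10101=s_0,\bar s_0$ ($5/2$-powers), leaving $w\in\{r_k,\bar r_k,s_k,\bar s_k\}$ for some $k$. Symmetry now forces $\bar w\in\Fac(\u)$, and $\bar w$ is the complementary minimal forbidden word (note $w\neq\bar w$, since no nonempty binary word equals its complement). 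Thus $\u$ contains a complementary pair of minimal forbidden words for $\Fac(\t)$.

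The heart of the matter is then the claim that \emph{no $(7/3)$-power-free infinite binary word contains a minimal forbidden word for $\Fac(\t)$ together with its complement}. For overlap-free words this is contained in Lemma~\ref{l:rk} (which even bounds $r_k,\bar r_k$ to one occurrence in total); I would obtain the $(7/3)$-power-free version by a $\mu$-desubstitution descent. Using the canonical factorization $\u=x_0\mu(\u')$ of Lemma~\ref{l:73fact} (with $\u'$ again $(7/3)$-power-free by Lemma~\ref{l:tmexp}), I would show that a non-prefix occurrence of a scale-$k$ forbidden word in $\u$ forces a scale-$(k-1)$ forbidden word in $\u'$, reusing the identities $\bar a_k a_k=\mu(\bar a_k)$ and $\bar a_k=a_{k-1}$ from the induction in Lemma~\ref{l:rk} but replacing each ``overlap'' there by a genuine power of exponent $\ge 7/3$ once the relevant preceding letter is accounted for (e.g.\ $c\,a_ka_k0110$ contains $111$ or the $7/3$-power $0110110=(011)^{7/3}$). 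Since $\mu$ commutes with complementation, a complementary pair descends to a complementary pair, and the descent terminates at scale $0$, where the only $(7/3)$-power-free minimal forbidden words are $r_0=00100$ and $\bar r_0=11011$. A short-range argument then finishes: $00100$ must be followed by $1$ (else $000$ appears) and cannot be preceded by any letter (a preceding $0$ gives $000$, a preceding $1$ gives the suffix $1001001=(100)^{7/3}$), so $00100$ occurs only as a prefix of $\u$; symmetrically for $11011$; hence they cannot both occur, which is the contradiction.

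I expect the descent for the $s_k$ family, and the bookkeeping at the initial seam $x_0$, to be the main obstacle, precisely because this is where $(7/3)$-power-freeness is strictly weaker than overlap-freeness. Unlike in the overlap-free case, words such as $s_1=11001100$ genuinely occur in $(7/3)$-power-free words (for instance as the prefix of $1\mu(\v)$ for a suitable suffix $\v$ of $\t$), so the clean ``at most one occurrence'' statement of Lemma~\ref{l:rk} is unavailable and its step concluding $x_0=\varepsilon$ must be re-proved. To control this I would either track explicitly the few seam configurations in which $s_k$ or $\bar s_k$ can sit and show that the presence of both members of a complementary pair produces a factor of exponent $\ge 7/3$, or combine the descent with the prefix-anchoring from the proof of Theorem~\ref{t:73max} (every factor outside $\Fac(\t)$ begins within a prefix of length $<2^{k+1}$), which forces the complementary pair to overlap in a long region and then to clash with complementarity.
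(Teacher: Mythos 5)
Your reductions are sound: symmetry over $\Sigma_2$ means closure under complementation, Corollary~\ref{c:TM}/\ref{c:73TM} gives $\Fac(\t)\subseteq\Fac(\u)$, a shortest factor outside $\Fac(\t)$ is a minimal forbidden word from Lemma~\ref{l:forbTM}, the cubes and $s_0,\bar s_0$ are excluded by $(7/3)$-power-freeness, and symmetry forces a complementary pair. Up to that point you have exactly the paper's skeleton. But the heart of the matter --- that a $(7/3)$-power-free infinite word cannot contain both $w$ and $\bar w$ for $w\in\{r_k,s_k\}$ --- is not proved; it is only sketched as a ``$\mu$-desubstitution descent,'' and the obstacles you yourself flag (the $s_k$ family, the seam $x_0$, the failure of the ``only as a prefix'' conclusion once overlap-freeness is weakened to $(7/3)$-power-freeness) are precisely where the plan is incomplete. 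In particular, a position-bounding descent in the style of Lemma~\ref{l:rk} is the wrong tool for $s_k$: as you note, $s_k$ can genuinely occur in $(7/3)$-power-free words, and not only near the origin, so ``the descent terminates at scale $0$ with only $r_0,\bar r_0$ surviving'' is not something the sketched induction delivers. The two fallback strategies you mention are left entirely unexecuted, so the proof as written has a genuine gap at its central claim.

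The paper closes this gap with a much lighter argument that needs no induction and no position bounds: it shows only that $\bar r_k$ cannot occur \emph{to the right of} an occurrence of $r_k$ (and symmetrically), which already contradicts symmetry since a symmetric recurrent-enough word containing $r_k$ must contain $\bar r_k$ somewhere after it. Concretely, aligning the occurrence of $r_k=a_k\mu^k(010)0$ with the factorization \eqref{e:73fact} shows that everything from that point on is $a_k\mu^k(0100\cdots)$, i.e., the entire remaining suffix is a $\mu^k$-image of a $(7/3)$-power-free word; an occurrence of $\bar r_k$ inside it would force $\mu^k(111)$ or $\mu^k(0110110)=\mu^k\bigl((011)^{7/3}\bigr)$, contradicting Lemma~\ref{l:tmexp}. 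The same one-line computation handles $s_k$ versus $\bar s_k$ via the $(5/2)$-power $\mu^k(10101)$, which is exactly the case your descent cannot reach. You should replace the descent by this alignment argument; the rest of your write-up then goes through.
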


\begin{proof}
Let $\u$ be a $(7/3)$-power-free infinite binary word such that $p_{\u}(n) \ne p_{\t}(n)$. By Corollary~\ref{c:TM}, $\u$ contains a factor that is not Thue-Morse. By Lemma~\ref{l:forbTM}, $\u$ contains one of the factors $r_k,s_k$, or their complements. Assume that $\u$ contains $r_k=a_k\mu^k(010)0$. Taking the representation \eqref{e:73fact} of $\u$, we see that $\u$ has the suffix $a_k\mu^k(0100\cdots)$. If the word $\mu^k(0100\cdots)$ contains the factor $\bar r_k$, it contains either $\mu^k(111)$ or $\mu^k(0110110)$, which is impossible because $\u$ is $(7/3)$-power free. Thus, $\bar r_k$ cannot occur in $\u$ to the right of an occurrence of $r_k$. Similarly, if $\u$ contains $s_k=a_k\mu^k(101)0$, then the suffix $\mu^k(1010\cdots)$ of $\u$ cannot contain $\bar s_k$ without containing the $(5/2)$-power $\mu^k(10101)$.

Repeating the same argument for the factors $\bar r_k,\bar s_k$ in $\u$ we conclude that $\u$ contains neither $r_k,\bar r_k$ simultaneously, nor $s_k,\bar s_k$ simultaneously. So $\u$ is not symmetric. Hence every $(7/3)$-power-free symmetric infinite binary word has subword complexity $p_{\t}(n)$.
\end{proof}

\section{Small subword complexity in big languages}

In this section we study binary and ternary words. Note the interconnection of the results over $\Sigma_2$ and $\Sigma_3$ through the encodings of words in both directions. To ease the reading, we denote words over $\Sigma_3$ by capital letters and other words by small letters.

Our study follows two related questions about small subword complexity:
\begin{itemize}
\item Given a pair $(k,\alpha)$, how small can the subword complexity of an $\alpha$-power-free infinite $k$-ary word be?
\item Given an integer $k$ and a function $f(n)$, what is the smallest power that can be avoided by an infinite $k$-ary word with a subword complexity bounded above by $f(n)$?
\end{itemize}

\subsection{Ternary square-free words}

Among $\alpha$-power-free infinite ternary words, the most interesting are square-free (= 2-power-free) words, the existence of which was established by Thue \cite{Thue06} and in particular $(\frac 74)^+$-power-free words, because $\alpha=(\frac 74)^+$ is the minimal power that can be avoided by an infinite ternary word, as was shown by Dejean \cite{Dej72}.

Consider the ternary Thue word $\T$ \cite{Thue12}, which is the fixed point of the morphism $\theta$ defined by $0\to 012, 1\to 02, 2\to 1$:
$$
\T=T_1T_2T_3\cdots=012021012102012021020121012021012102012101202102\cdots
$$
This word has critical exponent 2, which is not reached, so $\T$ is square-free. Also, $\T$ has two alternative definitions through the Thue-Morse word. The first definition says that for any $i\ge 1$, $T_i$ is the number of zeroes between the $i$'th and $(i{+}1)$'th occurrences of 1 in $\t$. The second definition is
\begin{equation} \label{e:T}
T_i=\begin{cases}
0, & \text{ if } t_{i-1}t_i=01;\\
1, & \text{ if } t_{i-1}=t_i;\\
2, & \text{ if } t_{i-1}t_i=10.\\
\end{cases}
\end{equation}
The definition \eqref{e:T} easily implies a bijection between the length-$n$ factors of $\t$ and length-$(n{-}1)$ factors of $\T$ for any $n\ge 3$. Hence, $p_{\T}(n)=p_{\t}(n{+}1)$ for all $n\ge 2$, and one can use formula \eqref{e:complTM}. In \cite{FrAv99}, the complexity of $\T$ was computed directly from the morphism $\theta$.

\begin{conjecture}
The ternary Thue word $\T$ has the minimum subword complexity over all square-free ternary infinite words.
\end{conjecture}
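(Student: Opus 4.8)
The goal is the pointwise lower bound $p_\T(n) \le p_\u(n)$ for every infinite square-free ternary word $\u$ and every $n$. Since no infinite square-free word can live on two letters, every such $\u$ uses all three letters and so $p_\u(1) = 3 = p_\T(1)$; the ternary analogue of \eqref{e:CD} then reduces the claim to proving $\sum_{1 \le i < n} D_\u(i) \ge \sum_{1 \le i < n} D_\T(i)$ for all $n$, where $D_\u(i) = p_\u(i{+}1) - p_\u(i)$ measures the total right-branching among length-$i$ factors. By \eqref{e:T} one has $D_\T(i) = D_\t(i{+}1)$, so the target profile is completely determined by the Thue-Morse data of \eqref{e:DT}.

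The first thing I would try is to copy the template behind Corollary~\ref{c:TM}: there the bound was immediate because every $(7/3)$-power-free word contains all of $\Fac(\t)$ (Corollary~\ref{c:73TM}). So the decisive question is whether the analogous universal containment holds --- does every infinite square-free ternary word contain all of $\Fac(\T)$? The answer is no: square-free ternary words are far more varied than binary $(7/3)$-power-free words, and there is no single word whose factor set is contained in that of every square-free word. Hence the easy argument collapses, and this is exactly the structural reason the statement is only a conjecture and not a corollary.

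The most promising substitute is to transport the bound through the length-two window encoding $\phi$ underlying \eqref{e:T}, which sends $01 \mapsto 0$, $\{00,11\} \mapsto 1$, $10 \mapsto 2$ and realizes $p_\T(n) = p_\t(n{+}1)$. I would try to show that every infinite square-free ternary $\u$ lifts to an overlap-free (or $(7/3)$-power-free) binary $\w$ with $\phi(\w) = \u$, the window correspondence matching length-$(n{+}1)$ factors of $\w$ with length-$n$ factors of $\u$, so that the established bound $p_\w(n{+}1) \ge p_\t(n{+}1)$ (Corollary~\ref{c:TM}) pushes forward to $p_\u(n) \ge p_\T(n)$. A more self-contained alternative is to prove directly a combinatorial lower bound on the number of right-special length-$n$ factors of an arbitrary square-free ternary word, matching the $\{2,4\}$-valued profile of $D_\T$ level by level.

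The main obstacle, shared by both routes, is the absence of a universal factor set. In the lifting approach $\phi$ is not globally invertible, and a square-free ternary word need not be the $\phi$-image of an overlap-free binary word; one would have to show that a minimum-complexity square-free word nevertheless arises this way, and that is precisely the gap I do not expect to close. In the direct approach one must rule out square-free words whose special factors are shifted to lengths other than those forced for $\T$ --- the ternary counterpart of the irregular special factors in the proof of Theorem~\ref{t:twisted} --- and establish that no such shifting can push the partial sums $\sum_{i<n} D_\u(i)$ below those of $\T$ uniformly in $n$. Since no one word's factors sit inside every competitor, this uniform domination cannot be read off from containment, and proving it (or exhibiting a competitor that beats $\T$ at some length) is where I expect the argument to stall. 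This is consistent with the weaker conditional statement that is actually within reach: if some word minimizes the complexity simultaneously at every length, then $\T$ is such a word.
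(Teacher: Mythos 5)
The statement you were asked to prove is labelled a \emph{conjecture} in the paper, and the paper does not prove it either; so the honest outcome of your attempt --- a diagnosis of why the obvious arguments fail, ending with the conditional statement that \emph{is} provable --- is the right one. Your central structural observation is correct and is exactly the reason the claim is not a corollary: by Thue's result on avoidable pairs of jumps, square-free ternary words come in several ``types'' with pairwise incomparable factor sets, so no analogue of Corollary~\ref{c:73TM} (universal containment of $\Fac(\T)$) can hold, and the template of Corollary~\ref{c:TM} collapses. Your closing sentence is precisely the paper's Theorem~\ref{t:thue3}: if a minimizer exists, its factor set is $\zeta(\Fac(\T))$ for a bijective coding $\zeta$.

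Where your proposal diverges from what the paper actually does is in the machinery suggested for the reachable conditional result. You propose either lifting $\u$ to a binary $(7/3)$-power-free word through the length-two window coding of \eqref{e:T} (which, as you note, fails because that coding is not globally invertible on square-free ternary words), or a direct count of right-special factors matching the profile of $D_\T$. The paper instead uses the codewalk encoding into the weighted $K_{3,3}$ jump graph: it first shows a minimizer must avoid two jumps and must be of type 3 (by comparing $p(3)$ and $p(4)$ across the three types), and then proves $\Fac(\T)\subseteq\Fac(\U)$ via the recursively defined blocks $A_i,B_i$ and the square-freeness constraints on codewalks (Lemma~\ref{l:sfwalks}), together with a characterization of $\ext(C)$ and recurrence of $\T$. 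That route sidesteps special-factor counting entirely. One small correction to your setup: the identity $D_\T(i)=D_\t(i+1)$ follows from $p_\T(n)=p_\t(n+1)$ only for $n\ge 2$, so it holds for $i\ge 2$ but not at $i=1$ (where $D_\T(1)=3$ while $D_\t(2)=2$); this does not affect your overall assessment.
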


The above conjecture is supported by the following result, showing that $\T$ is the only candidate for a square-free word of minimum complexity.

\begin{theorem} \label{t:thue3}
If a word $\U$ has minimum subword complexity over all square-free ternary infinite words, then $\Fac(\U)=\zeta(\Fac(\T))$, where $\zeta$ is a bijective coding. 
\end{theorem}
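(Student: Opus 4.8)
The plan is to transfer the problem from ternary square-free words to binary overlap-free words through the finite-difference encoding $\partial$ of \eqref{e:T}, which sends a binary word $w=w_0w_1\cdots$ to the ternary word $\partial(w)$ whose $i$-th letter records the transition $w_{i-1}w_i$ (up $\mapsto 0$, equal $\mapsto 1$, down $\mapsto 2$), so that $\partial(\t)=\T$. Two facts make this encoding useful. First, it is reversible on its image: a ternary word lies in the image of $\partial$ exactly when the subsequence of its non-neutral letters (those coding up/down) strictly alternates --- call these the \emph{alternating} words --- and then a choice of initial letter recovers $w$ up to complementation. Second, a short computation shows that $\partial(w)$ is square-free iff $w$ is overlap-free (an overlap in $w$ produces a periodic transition block, hence a square in $\partial(w)$, and conversely), and that for every alternating square-free $\U=\partial(w)$ one has $p_{\U}(n)=p_w(n+1)$ for all $n\ge 2$. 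Indeed $\partial$ is a bijection on factors of length $\ge 2$: the only binary factors collapsing under $\partial$ are complementary, and a window and its complement have $0\leftrightarrow 2$--swapped images, which coincide only for powers of the neutral letter --- impossible for square-free factors of length $\ge 2$.

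Granting for the moment that a minimizer $\U$ is alternating after a suitable relabeling $\zeta^{-1}$ of its alphabet, the conclusion follows quickly. Decode $\U=\partial(w)$; then $w$ is overlap-free, and by the previous paragraph $p_w(n+1)=p_{\U}(n)\le p_{\T}(n)=p_{\t}(n+1)$ for all $n\ge 2$, the last equality being the factor bijection $\partial(\t)=\T$ recorded after \eqref{e:T}. Since Corollary~\ref{c:TM} gives $p_w\ge p_{\t}$, we obtain $p_w=p_{\t}$ for every length (lengths $1$ and $2$ being immediate, as every overlap-free binary word already contains all $4$ length-$2$ factors of $\t$). Combined with Corollary~\ref{c:73TM}, which guarantees $\Fac(\t)\subseteq\Fac(w)$, equality of the counts forces $\Fac(w)=\Fac(\t)$. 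Applying $\partial$ to factors then yields $\Fac(\U)=\partial(\Fac(w))=\partial(\Fac(\t))=\Fac(\T)$; undoing the relabeling gives $\Fac(\U)=\zeta(\Fac(\T))$, and in particular all six length-$2$ factors of $\U$ reappear automatically, so no separate ``all ordered pairs occur'' step is needed for the final count.

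The main obstacle is precisely the assumption just granted: that a minimizer is alternating up to relabeling, equivalently that, after relabeling, $\Fac(\U)$ avoids the two ``return'' patterns $010$ and $212$, which are the only square-free obstructions to alternation (any two equal non-neutral symbols with only the neutral letter between them must, by square-freeness, be separated by a single neutral letter). Here I would use minimality directly: since $\U$ is a minimizer we have $p_{\U}(n)\le p_{\T}(n)$ for every $n$, so it suffices to show that any square-free ternary word whose non-neutral subsequence fails to alternate exceeds $p_{\T}$ at some length. I expect this to require a finite but delicate analysis of the low-order structure forced by square-freeness --- the Rauzy graphs of orders $2$ and $3$. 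After reducing to the recurrent case (where all six ordered pairs occur, so $p_{\U}(2)=6=p_{\T}(2)$), one classifies the admissible sets of length-$3$ factors compatible with $p_{\U}(3)\le p_{\T}(3)=10$ and argues that introducing a non-alternating triple either cannot be extended to an infinite square-free word or creates an additional special factor that pushes some $p_{\U}(n)$ strictly above $p_{\T}(n)$.

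This local case analysis, rather than the binary reduction, is where the real work lies; the reduction itself is essentially bookkeeping once alternation is in hand. A secondary point to verify carefully is the reduction to a recurrent minimizer --- passing from $\U$ to a recurrent word built from its recurrent factors, whose complexity is pointwise no larger, so that by minimality it is again a minimizer with $\Fac$ contained in $\Fac(\U)$ --- which is what legitimizes the use of the ``all six pairs'' fact inside the crux step.
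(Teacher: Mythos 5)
Your proposal has a genuine gap, and it is exactly the step you defer: proving that a minimizer must be alternating after a bijective recoding, i.e.\ that (up to relabelling) it avoids the two jumps $010$ and $212$. From $p_{\U}(3)\le p_{\T}(3)=10$ one only concludes that $\U$ avoids two jumps from different parts of the $K_{3,3}$ graph of Fig.~\ref{k33}; by Thue's result there are three inequivalent ways to do this (the paper's types $1$, $2$, $3$), all three are realized by infinite square-free words, and all three give $p_{\U}(3)=10$. So your first alternative (``a non-alternating triple cannot be extended to an infinite square-free word'') simply does not occur, and the order-$2$ Rauzy graph cannot discriminate. The discrimination happens only at length $4$, and it is not routine: the paper shows that in types $1$ and $2$ \emph{all six} jump-free square-free words of length $4$ must occur, because the absence of one of them would force the codewalk to traverse a cycle of $K_{3,3}$ always in the same direction, producing the square of a closed codewalk, forbidden by Lemma~\ref{l:sfwalks}. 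This yields $p_{\U}(4)=14$ for types $1$ and $2$ versus $p_{\U}(4)=12=p_{\T}(4)$ for type $3$. Your sketch gestures at ``an additional special factor'' but supplies neither the length at which the excess appears nor the argument forcing the extra factors to occur; ``a finite but delicate analysis'' is a placeholder for the hardest part of the theorem.

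The half you do carry out is correct and is a genuinely different route from the paper's. Once type $3$ is established, the paper proves $\Fac(\T)\subseteq\Fac(\U)$ by a fairly heavy induction on the codewalk blocks $A_i,B_i$ together with an analysis of $\ext(C)$; you instead decode $\U=\partial(w)$ through the difference encoding of \eqref{e:T}, observe that $w$ is overlap-free with $p_w(n+1)=p_{\U}(n)\le p_{\T}(n)=p_{\t}(n+1)$, and invoke Corollaries~\ref{c:TM} and~\ref{c:73TM} to force $\Fac(w)=\Fac(\t)$ and hence $\Fac(\U)=\Fac(\T)$. The supporting claims (a square-free type-$3$ word is in the image of $\partial$; $\partial$ is injective on factors of length $\ge 3$ of an overlap-free word since the only collapsing pairs are constant complementary blocks) all check out, so this part would legitimately replace the paper's codewalk induction by results already proved in Section~2. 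But without a proof that types $1$ and $2$ are excluded, the theorem is not established.
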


Before proving this theorem and presenting further results, we need to recall an encoding technique introduced in \cite{Sh10jc} by the second author  as a development of a particular case of Pansiot's encoding \cite{Pansiot:1984c}. In what follows, $a,b,c$ are unspecified pairwise distinct letters from $\Sigma_3$. 
Ternary square-free words contain three-letter factors of the form $aba$, called \emph{jumps} (of one letter over another). Jumps occur quite often: if a square-free word $\u$ has a jump $aba$ at position $i$, then the next jump in $u$ occurs at one of the positions $i{+}2$ ($\u=\cdots abaca\cdots$), $i{+}3$ ($\u=\cdots abacbc \cdots$), or $i{+}4$ ($\u=\cdots abacbab\cdots $). Note that a jump at position $i{+}1$ would mean that $\u$ has the square $abab$ at position $i$,  while no jump up to position $i{+}5$ would lead to the square $bacbac$ at position $i{+}1$. Also note that a jump in a square-free word can be uniquely reconstructed from the previous (or the next) jump and the distance between them.  Thus, 
\begin{itemize}
\item[($\star$)] a square-free ternary word $\u$ can be uniquely reconstructed from the following information: the leftmost jump, its position, the sequence of distances between successive jumps, and, for finite words only, the number of positions after the last jump.
\end{itemize}
The property ($\star$) allows one to encode square-free words by walks in the weighted $K_{33}$ graph shown in Fig.~\ref{k33}. The weight of an edge is the number of positions between the positions of two successive jumps. A square-free word $\u$ is represented by the walk visiting the vertices in the order in which jumps occur when reading $\u$ left to right. If the leftmost jump occurs in $\u$ at position $i>1$, then we add the edge of length $i{-}1$ to the beginning of the walk; in this case the walk begins at an edge, not a vertex. A symmetric procedure applies to the end of $\u$ if $\u$ is finite. By ($\star$), we can omit the vertices (except for the first one), keeping just the weights of edges and marking the ``hanging'' edges in the beginning and/or the end. Due to symmetry, we can omit even the first vertex, retaining all information about $\u$ up to renaming the letters. The result is a word over $\{1,2,3\}$ with two additional bits of information (whether the first/last letters are marked; for infinite words, only one bit is needed). This word is called a \textit{codewalk} of $\u$ and denoted by $\cwk(\u)$. For example, here is some prefix of $\T$ (with first letters of jumps written in boldface) and the corresponding prefix of its codewalk (the marked letter is underlined):
\begin{align*}
\T&=01\pmb{2}02\pmb{1}0\pmb{1}21\pmb{0}201\pmb{2}021\pmb{0}20\pmb{1}2\pmb{1}01\pmb{2}02\pmb{1}0\pmb{1}21\pmb{0}20\pmb{1}2\pmb{1}01\pmb{2}021\pmb{0}20\ \cdots\\
\cwk(\T)&=\underline{2}\quad2\quad1\quad2\quad3\quad3\quad2\quad1\quad2\quad2\quad1\quad2\quad2\quad1\quad2\quad3\quad\quad\cdots
\end{align*}
\begin{figure}[htb]
\centering
\includegraphics[trim=88 708 373 59,clip]{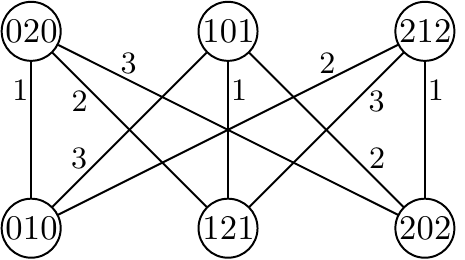}
\caption{\small The graph of jumps in ternary square-free words. Vertices are jumps; two jumps that can follow each other in a square-free word are connected by an edge of weight $i$, where $i$ is the number of positions between the positions of these jumps. Due to symmetry, the graph is undirected.}
\label{k33}
\end{figure}

Note that two words have the same codewalk if and only if they are images of each other under bijective codings and thus have the same structure and the same properties related to power-freeness. A codewalk is \emph{closed} if it corresponds to a closed walk without hanging edges in $K_{3,3}$; e.g., 212212 is closed and 212 is not. 

Clearly, not all walks in the weighted $K_{33}$ graph correspond to square-free words. However, there is a strong connection between square-freeness of a word and forbidden factors in its codewalk. Combining several results of \cite{Sh10jc}, we get the following lemma. (More restrictions can be added to statement 2 of this lemma, but we do not need them in our proofs.)

\begin{lemma}[\cite{Sh10jc}] \label{l:sfwalks}
\leavevmode
\begin{enumerate}
\item If a codewalk has (a) no factors $11, 222, 223, 322, 333$, and (b)~no factors of the form $vabv$, where $v\in\{1,2,3\}^*$, $a,b\in\{1,2,3\}$ and the codewalk $vab$ is closed, then the word with this codewalk is square free.\\
\item The codewalk of a square-free word has no proper factors $11$ and $vav$ for all $v\in\{1,2,3\}^*$, $a,b\in\{1,2,3\}$ such that $va$ is a closed codewalk. In particular, such a codewalk contains no squares of closed codewalks.
\end{enumerate}
\end{lemma}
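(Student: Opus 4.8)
The plan is to prove both statements by directly translating between squares in the word and repetitions in its codewalk, using two facts about jumps that I would establish first. The first is a \emph{forced-letter} fact: in any square-free ternary word, a jump $aba$ must be immediately followed by the third letter $c$ (a following $a$ gives the square $aa$, a following $b$ gives $abab$), and symmetrically preceded by $c$. The second is the precise transition rule underlying Fig.~\ref{k33}: reading codewalk letter $1$, $2$, or $3$ (distance $2$, $3$, $4$) from a jump $aba$ leads respectively to the jumps $aca$, $cbc$, $bab$. These three edges at each vertex carry the three distinct weights, so the weighting is a proper $3$-edge-coloring of $K_{3,3}$ whose color classes are the ``same-outer'', ``same-inner'', and ``swap'' perfect matchings; in particular a codewalk is closed exactly when the walk returns to the same vertex, i.e.\ to the \emph{same} jump $aba$ with the same three letters, and by bipartiteness every closed codewalk has even length. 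Together with property~($\star$), this yields the reconstruction principle I would use repeatedly: a starting jump together with a codewalk $v$ determines the word from the first jump up to the last jump of $v$ \emph{plus one further forced letter}.

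For statement~2 (necessity) I would argue by contraposition, exhibiting a square from each forbidden factor. A proper factor $11$ involves the three jumps $aba,aca,aba$ at consecutive word-distances $2,2$, producing $abacaba$; the forced letter after the final jump is $c$, so the word contains $(abac)^2$. (``Proper'' is precisely what guarantees this extra letter exists.) For a factor $vav$ with $va$ closed, closedness returns the walk to the starting jump after span $P=\operatorname{span}(va)$, so the second copy of $v$ begins at the same jump and reproduces the word segment of the first copy letter for letter. The reconstruction principle then forces the letters at positions $p_0+j$ and $p_0+P+j$ to coincide for $0\le j\le \operatorname{span}(v)+3$, i.e.\ period $P$ on an interval of length $\operatorname{span}(v)+P+4$. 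Since the last edge of $va$ has weight at most $3$ we have $P\le \operatorname{span}(v)+4$, so this interval has length at least $2P$ and contains a square of period $P$.

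For statement~1 (sufficiency) I would again use contraposition: assuming the word contains a square $XX$ of period $P$, I would show the codewalk must contain one of the listed factors. Because a jump is a length-$3$ pattern, period $P$ of the square transfers to a period-$P$ repetition of the jump pattern, hence to a repeated \emph{closed} codewalk factor of span $P$, as one full period of jumps reads the same closed walk $Y$ twice. The remaining issue is purely alignment. The endpoints of $X$ need not fall at jump positions, so the visible codewalk factor is a cyclic shift of $YY$, presenting either as $v\,ab\,v$ with $vab$ closed (the generic case, clause~(b)) or --- when a period spans very few jumps and the forced inter-jump letters make the period short --- as one of the finitely many patterns $11,222,223,322,333$ (clause~(a)). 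Enumerating the closed walks of small span in $K_{3,3}$ together with the squares they induce pins down exactly this short list.

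The main obstacle is the alignment bookkeeping in statement~1: a word-square has arbitrary endpoints, possibly inside an inter-jump gap, so converting ``period $P$ in the word'' into a specific codewalk factor requires tracking partial jumps and the forced boundary letters on both sides, and cleanly separating the generic $vabv$ conclusion from the degenerate short-period cases. By contrast the necessity direction reduces to a short computation once the reconstruction principle and the forced-letter fact are in hand. Confirming that the exceptional short list is exactly $\{11,222,223,322,333\}$ --- and not, say, a larger or mirror-symmetric set --- is the one place where a finite case check in $K_{3,3}$, or a direct appeal to the corresponding results of \cite{Sh10jc}, appears unavoidable.
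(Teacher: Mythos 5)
First, note that the paper itself does not prove this lemma: it is imported from \cite{Sh10jc} (``Combining several results of \cite{Sh10jc}, we get the following lemma''), so there is no internal argument to compare yours against, and your attempt must stand on its own. Your infrastructure is correct: the transitions $aba\to aca,\ cbc,\ bab$ for weights $1,2,3$ are right, each weight class is a perfect matching (the graph is in fact the Cayley graph of $S_3$ on the three transpositions, which also justifies a point you use tacitly: closedness of a codewalk is independent of the start vertex). Your proof of statement~2 is essentially complete and sound: the arithmetic $P=\operatorname{span}(va)\le \operatorname{span}(v)+4$, the period-$P$ interval of length $\operatorname{span}(v)+P+4\ge 2P$, and your reading of ``proper'' as supplying the one forced boundary letter needed when $a=3$ are all correct, and the ``in particular'' clause follows since $ww$ with $w=ua$ closed contains $uau$.

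Statement~1 is where the proposal stops being a proof, and ``purely alignment'' understates the hole. Your matching argument pairs only those jumps whose triple lies in positions $[q,q+P-3]$ of the square with their translates; the first usable jump can sit as late as $q+3$ and the last matched translate as early as $q+2P-6$, so the guaranteed $k$-periodic codewalk stretch has span only $2P-9$. Since a weight-$1$ letter has span $2$, the second copy of the closed walk $Y$ may be missing up to \emph{four} letters, whereas the clause-(b) pattern $vabv$ tolerates a deficit of exactly \emph{two}: you never read ``$YY$ twice'', only $Y$ plus a truncated prefix, and closing that deficit (via the forced inter-jump letters, or a better choice of anchor, or maximally extending the periodic region) is precisely the nontrivial content you have not supplied. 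There is also a boundary issue you should resolve: if $v=\varepsilon$ were allowed in (b), then (b) would forbid the closed two-letter codewalks $22$ and $33$ --- yet $\cwk(\T)$ visibly contains $22$ --- and would make $222,223,322,333$ in (a) redundant; so the intended reading is $v\neq\varepsilon$, which forces your generic construction to deliver $|v|\ge 1$ and pushes all short-period cases into the finite analysis of (a).

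Finally, your proposed derivation of the exceptional list --- ``enumerating the closed walks of small span together with the squares they induce'' --- provably does not yield the stated list. The factors $11$, $222$, $333$ do force squares ($(abac)^2$, $(abacbc)^2$, and $(abacbabc)^2$ with one forced letter), but $223$ and $322$ do not: the word with codewalk $223$, namely $0102120102101$, is square-free, even with the forced letters adjoined, and likewise for $322$. So those two factors are in the list for reasons invisible to your enumeration (they are a conservative strengthening needed for the inductive argument in \cite{Sh10jc}, not immediate square producers). Consequently your sketch would either produce a strictly smaller list --- and then you would owe a proof of the correspondingly \emph{stronger} sufficiency claim --- or, as you yourself concede, fall back on citing the very results being proved. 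Statement~2 is proved in your write-up; statement~1 is not.
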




\begin{proof}[Proof of Theorem~\ref{t:thue3}]
Thue \cite{Thue12} showed that a square-free infinite ternary word contains all six factors of the form $ab$ and all six factors of the form $abc$.  As for the jumps, Thue proved that any two factors \textit{from different parts of the $K_{3,3}$ graph in Fig.}~\ref{k33} can absent. (This is an optimal result since it is easy to see that at least two jumps from each part must be present.) All three possible cases (up to symmetry) with two absent jumps are depicted in Fig.~\ref{k33abc}. We say that a square-free ternary word is of type $i$, $i\in\{1,2,3\}$, if it lacks two jumps connected by an edge of weight $i$. Note that $\T$ has no factors 010 and 212 and thus is of type 3. 

\begin{figure}[htb]
\centering
\includegraphics[trim=88 708 363 59,clip]{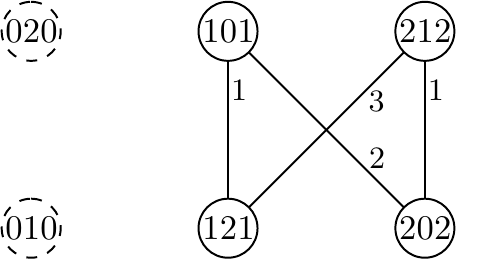}
\includegraphics[trim=78 708 363 59,clip]{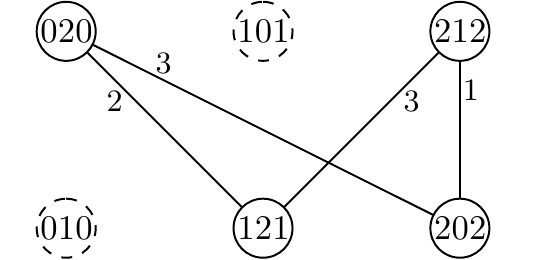}
\includegraphics[trim=78 708 373 59,clip]{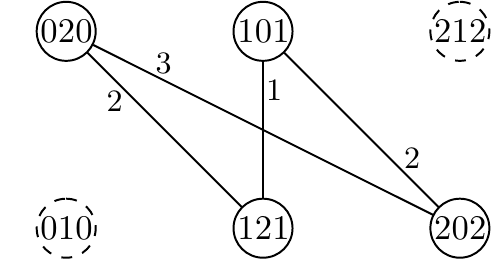}\\
\flushleft
\hskip0.4cm\textbf{a} (type 1)\hskip3.7cm \textbf{b} (type 2)\hskip3.7cm \textbf{c} (type 3)
\caption{\small Avoidance of two jumps in ternary square-free words. ``Type'' is the weight of the edge between the avoided jumps.}
\label{k33abc}
\end{figure}

Assume that a square-free infinite ternary word $\U$ has the minimum subword complexity among all such words. Then $\U$ avoids two jumps, otherwise $p_{\U}(3)> p_{\T}(3)$. W.l.o.g., the two missing jumps are those indicated in Fig.~\ref{k33abc}, depending on the type of $\U$ (if this is not the case, we replace $\U$ with its image under an appropriate bijective coding). Note that all four remaining jumps occur in $\U$ infinitely often, since any suffix of $\U$ is square-free and thus must contain four jumps. Hence $\U$ has eight factors of length 4, containing a jump. Let us compute $p_{\U}(4)$. For this, we need to consider the factors without jumps. First let $\U$ have type 1. 1021 and 2012 are factors of $\U$ because they are the only right extensions of 102 and 201, respectively. If the factor 0120 is absent, then it is impossible to move by the edge of weight 2 from the vertex 101 to the vertex 202; since the codewalk of $\U$ cannot contain 11 or 333 by Lemma~\ref{l:sfwalks} this means that the codewalk of $\U$ always passes the cycle in Fig.~\ref{k33abc}a in the same direction. But this means that the codewalk of $\U$ contains squares of closed walks, which is also impossible by Lemma~\ref{l:sfwalks}. So the word 0120 must be a factor of $\U$. The same argument works for the factors 0210, 1201, 2102, which are responsible, respectively, for the moves from 202 to 101; from 212 to 121; and from 121 to 212. Thus $p_{\U}(4)=14$.

If $\U$ has type 2, the same argument works. Namely, 2012 and 2102 occur in $\U$ as unique extensions of 201 and 210, respectively. The remaining four factors 0120, 0210, 1021, and 2012 are responsible for the moves by the edges of weight 3; if one of them is absent, this leads to squares of closed walks, thus violating of Lemma~\ref{l:sfwalks}. Hence we again have $p_{\U}(4)=14$. The situation changes if $\U$ has type 3: then each of the words 1021, 1201 can occur in $\U$ only as the prefix (1021 is followed by 0, and thus can be preceded neither by 0 nor by 2; similar for 1201). The remaining factors must present by the same argument as in the previous cases. Note that $\T$ avoids 1021 and 1201, and thus $p_{\T}(4)=12$. So $\U$ must have type 3 and $p_{\U}(4)=12$. (Moreover, $\T$ and $\U$ have the same factors up to length 4.) To prove the theorem, it is enough to show that $\Fac(\T)\subseteq\Fac(\U)$; the equality then follows by minimality of complexity of $\U$.

Consider the language $C$ of all finite codewalks that can be read in the graph in Fig.~\ref{k33abc}c and correspond to square-free words. We define two sequences of codewalks by induction:
\begin{align*}
A_0=212&, B_0=3,\\
A_{i+1}=B_i B_i A_i A_i A_i&, B_{i+1}= B_i B_i A_i.
\end{align*}
Below we prove the following three statements, which immediately imply the desired inclusion $\Fac(\T)\subseteq\Fac(\U)$. We include in the preimage $\cwk^{-1}(X)$ only the words of type 3 avoiding the factors 010 and 212, as in Fig.~\ref{k33abc}c.
\begin{itemize}
\item[(i)] $\Fac(\U) \supseteq \cwk^{-1}(A_i)$ for every $i\ge 0$;
\item[(ii)] $\ext(C)=\bigcup_{i\ge 0} \Fac(A_i)$;
\item[(iii)] $\Fac(\T) \subseteq \cwk^{-1}(\ext(C))$.
\end{itemize}
(i) Note that the codewalks $A_iA_i$, $A_iB_i$, and $B_iB_i$ are closed for all $i$. This fact immediately follows by induction from the definition (for the base case, one may consult Fig.~\ref{k33abc}c). 

We prove by induction that for each $i$ some suffix of $\cwk(U)$ is an infinite product of blocks $A_i$ and $B_i$. For the base case note that $\cwk(U)$ has no factors 11 by Lemma~\ref{l:sfwalks}. Hence, as the codewalk reaches one of the vertices 020, 202 (see Fig.~\ref{k33abc}c) it infinitely proceeds between these two vertices with the paths labeled by $A_0$ and $B_0$; each path occurs infinitely many times because  $\cwk(U)$ is aperiodic. The walks $212212, 2123$, and $33$ are obviously closed. 

Now we proceed with the inductive step. First let $i=1$. The factors $A_0B_0A_0$ and $B_0B_0B_0$ does not appear in the codewalks of square-free words by Lemma~\ref{l:sfwalks}. Hence, $B_0$'s always occur in pairs. Further, the factor $B_0A_0A_0B_0$ cannot appear far from the beginning of $\cwk(\U)$, because otherwise it will be uniquely extended to $A_0B_0B_0A_0A_0B_0B_0A_0$, which is the square of a closed codewalk, impossible by Lemma~\ref{l:sfwalks}. Observing that the factor $A_0A_0A_0A_0$ is also forbidden as the square of a closed codewalk, we see that there may be either one or three consecutive blocks $A_0$. Hence some suffix of $\cwk(\U)$ can be partitioned into the blocks $B_0B_0A_0A_0A_0=A_1$ and $B_0B_0A_0=B_1$. As in the base case, both blocks must appear infinitely often to prevent periodicity. For the general case $i>1$ the argument is essentially the same. The only difference is in proving the fact that $A_iB_iA_i$ and $B_iB_iB_i$ are forbidden: now $B_i$ is a prefix of $A_i$ by construction, so these two codewalks are extended to the right by $B_i$, which gives us the square of the closed codewalk $A_iB_i$ (resp., $B_iB_i$). The inductive step is finished.

Thus we know that $\cwk(\U)$ contains $A_i$ (and even $A_iA_i$) for any $i$. Clearly, $A_i$ is not closed, so it corresponds to a walk from the vertex 020 to 202 or vice versa. Hence $\cwk^{-1}(A_i)$ consists of two words, each one corresponds to a walk in one direction. But $A_i$'s in the factor $A_iA_i$ correspond to walks in opposite directions (the codewalk $A_iA_i$ is closed), so both words from $\cwk^{-1}(A_i)$ are factors of $\U$. Statement (i) is proved.

(ii) As shown in the proof of (i), each $A_i$ occurs infinitely often in $\cwk(\U)$. Hence $\ext(C) \supseteq \bigcup_{i\ge 0} \Fac(A_i)$. For the reverse inclusion, first note the following property implied by the proof of (i). There is a function $f(n)$ such that for every $\U$ of type 3 its suffix, equal to the product of the blocks $A_n$ and $B_n$, starts before the position $f(n)$. Let $V\in\ext(C)$ and let $n$ be such that $|V|\le |B_n|$.  By definition of two-sided extendable word, there exists a square-free infinite word $\U$ of type 3 such that $\cwk(\U)$ contains $V$ at position greater than $f(n)$. Then $V$ is a factor of one of the codewalks $A_nA_n$, $A_nB_n$, $B_nA_n$, $B_nB_n$. In each case, $V$ is a factor of $A_{n+2}$, and we have the desired inclusion $\ext(C) \subseteq \bigcup_{i\ge 0} \Fac(A_i)$.

(iii) Since $T$ is recurrent, its codewalk is recurrent as well, implying $\Fac(\cwk(\T))\subseteq \ext(C)$. The result now follows.
\end{proof}

For symmetric words, the square-free ternary infinite word of minimum subword complexity does exist. Recall that the Fibonacci word $\f$ is the fixed point of the binary morphism defined by $\phi(0)=01, \phi(1)=0$. We define the coding $\xi: (0\to 2, 1\to 1)$ and write $\f_{12}=\xi(\f)$. Now consider the \textit{1-2-bonacci word} $\F_{12}\in\Sigma_3^{\omega}$, which is the word beginning with 01 and having the codewalk $\f_{12}$. This word was introduced by Petrova \cite{Pet16}, who proved that $\F_{12}$ has critical exponent $11/6$ (reachable) and no length-5 factors of the form $abcab$. Also, $\F_{12}$ appeared to have a nice extremal property \cite[Proposition 13]{GaSh16}. 

\begin{theorem} \label{t:12bonacci}
The 1-2-bonacci word $\F_{12}$ has the minimum subword complexity over all symmetric square-free ternary infinite words. This complexity equals $6n-6$ for all $n\ge 2$.
\end{theorem}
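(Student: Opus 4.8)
The plan is to route everything through the first-difference function $D_{\u}(n)=p_{\u}(n{+}1)-p_{\u}(n)$. For a ternary square-free word $\u$ every factor ending in a letter $a$ can be right-extended only by the two letters different from $a$, so $\deg^+(w)\le 2$ for each factor $w$, and therefore $D_{\u}(n)$ equals exactly the number of right-special factors of length $n$. Since a square-free ternary word contains all six two-letter factors (Thue), $p_{\u}(2)=6$, so the whole statement reduces to counting right-special factors length by length. I would first establish the minimality inequality $p_{\u}(n)\ge 6n-6$ for every symmetric square-free $\u$ (the easy half), and then prove $p_{\F_{12}}(n)\le 6n-6$; as $\F_{12}$ is symmetric and square-free (the latter because its critical exponent $11/6$ is below $2$), the two bounds collapse to the claimed equality and simultaneously identify $\F_{12}$ as a minimizer.

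For the lower bound, let $\u$ be symmetric and square-free. Then $\Fac(\u)$, hence $\Spec(\u)$, is invariant under all six bijective codings of $\Sigma_3$. A one-line check shows that no nonempty factor of a square-free word is fixed by a nontrivial coding (a transposition fixes only a constant word, a $3$-cycle fixes nothing), so for every $n\ge 2$ the codings act freely on the length-$n$ special factors, which thus split into orbits of size exactly $6$. Because $\u$ is aperiodic, Morse--Hedlund gives $D_{\u}(n)\ge 1$; being a positive multiple of $6$ it is $\ge 6$. Summing from $n=2$ with $p_{\u}(2)=6$ yields $p_{\u}(n)\ge 6n-6$.

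For the exact value on $\F_{12}$ I would argue through its codewalk. Since $\f_{12}=\xi(\f)$ is a relabeling of the Fibonacci word it is Sturmian: it has exactly one right-special factor of each length, and these are nested as suffixes, so they grow to the left in a unique way (for instance the unique right-special symbol is $2$, since $\f_{12}$ contains $21$ and $22$ but not $11$). Now $\F_{12}$ is reconstructed deterministically from $\f_{12}$ by the jump rules recalled before Lemma~\ref{l:sfwalks}, so all branching in $\F_{12}$ comes from branching in the codewalk: after a jump $aba$ the choice of the next weight $1$ or $2$ gives the two continuations $abaca$ and $abacbc$, which first differ in their fifth letter. Hence every right-special factor of $\F_{12}$ ends at the letter $c$ immediately following a jump (the common block $abac$), and reading it from right to left recovers, block by block, a right-special factor of $\f_{12}$. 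Because the latter is unique of each length and extends leftward uniquely, the leftward extension of a right-special factor of $\F_{12}$ is forced at every letter; consequently there is at most one right-special factor of each length up to bijective coding, i.e.\ at most one orbit, so $D_{\F_{12}}(n)\le 6$ and $p_{\F_{12}}(n)\le 6n-6$. Together with the lower bound this gives $p_{\F_{12}}(n)=6n-6$.

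The main obstacle is the offset bookkeeping in this codewalk-to-word dictionary. A right-special factor of length $n$ ends at the fixed branch position (the letter $c$ just after a jump) and begins somewhere in the interior of an earlier segment; as $n$ increases by $1$ its left endpoint moves one letter at a time, alternately lying inside a segment, where the letter is forced by the two bordering jumps, and crossing a jump, where a new codewalk symbol is exposed. I must verify that at each crossing the exposed symbol is compelled to match the unique right-special codewalk factor of the appropriate length, so that the forced leftward extension never truly branches and no word-length is skipped; this is what makes the orbit count exactly one for every $n\ge 2$, the short cases $n=2,3$ being the suffixes $ac$ and $bac$ of $abac$ whose orbits already exhaust all six two- and three-letter patterns. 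A secondary point is the symmetry of $\F_{12}$ itself, needed to upgrade $D_{\F_{12}}(n)\le 6$ to equality: the coding group $S_3$ acts regularly on the six jumps, and the two jump-transitions ($aba\mapsto aca$ and $aba\mapsto cbc$) are involutions that generate all of $S_3$, so by recurrence one expects each codewalk factor to occur with every starting jump and hence all six codings of each factor to appear. Making this ``every starting jump'' step rigorous (a holonomy/transitivity argument for the jump coloring along $\f_{12}$) is the remaining gap to close.
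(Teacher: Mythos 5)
Your strategy coincides with the paper's own: the lower bound $p_{\u}(n)\ge 6n-6$ comes from the free action of the six bijective codings on the special factors of a symmetric square-free word together with aperiodicity, and the upper bound is obtained by pulling special factors of $\F_{12}$ back to special factors of the Sturmian codewalk $\f_{12}$. The lower-bound half of your argument is complete and correct. However, the two points you yourself flag as ``obstacles'' are genuine gaps, and they are precisely the two substantive steps of the paper's proof, so the proposal does not yet constitute a proof. For the ``offset bookkeeping'': the paper shows that a special factor $V$ of $\F_{12}$ with $|V|\ge 5$ has codewalk $xv\underline{1}$ with $x\in\{\varepsilon,\underline{1},\underline{2}\}$ and $v$ an $\f_{12}$-special factor; given a second special factor $U$ of the same length with codewalk $yu\underline{1}$, Sturmianity makes $u$ a suffix of $v$ (say $v=v'u$), and a short but essential case analysis --- $v'$ nonempty encodes at least two letters of $V$, $y$ encodes at most two letters of $U$, and the case $y=\underline{2}$ forces $v'$ to end in $2$ and hence encode at least three letters --- yields $|V|>|U|$ unless $v'=\varepsilon$. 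This is what guarantees ``at most one special factor per length up to coding'' and is where the real work lies; your description of the left endpoint sliding across segments is the right intuition but does not carry it out.

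The second gap is the symmetry of $\F_{12}$ itself. Your appeal to ``involutions generating $S_3$ plus recurrence'' is not sufficient: recurrence does not imply that a given codewalk factor is read starting from all six vertices of the $K_{3,3}$ graph, which is what you need both for symmetry and to upgrade $D_{\F_{12}}(n)\le 6$ to equality. The paper proves this transitivity explicitly: writing $f_i=\xi(\phi^i(0))$, it computes the endpoint of the walk labelled $f_i$ as a function of $i\bmod 6$ (equivalent to $21$, $212$, $2$, $12$, or $1$), observes that these walks from the vertex $010$ end at all five remaining vertices, and uses the fact that $f_i$ occurs as a prefix of $\f_{12}$ and immediately after every prefix $f_{i+k}$ to conclude that every factor of $\f_{12}$ is read from every vertex, hence all six codings of the corresponding word occur in $\F_{12}$. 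Some such explicit return-word/holonomy computation is unavoidable; without it neither the symmetry claim nor the exact value $6n-6$ is established.
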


\begin{proof}
Let $\u$ be a symmetric square-free ternary infinite word. Since $\u$ is aperiodic, it has a special factor of length $n$ for each $n\ge 0$. (In the ternary case, a word $v$ is called a $\u$-special factor if at least two of the words $v0,v1,v2$ are factors of $\u$.) If $v$ is $\u$-special, then the word $\zeta(v)$, where $\zeta$ is any bijective coding, is $\u$-special as well, because of the symmetry of $\u$. Thus, there are at least six $\u$-special factors of length $n$ for each $n\ge 2$. Together with the fact $p_{\u}(2)=6$ mentioned above, this gives the lower bound for the complexity of $\u$: $p_{\u}(n)\ge 6n-6$ for all $n\ge 2$. So we are going to prove that the 1-2-bonacci word is symmetric and its complexity matches this lower bound.

The codewalk $\f_{12}$ of the 1-2-bonacci word $\F_{12}$ has no 3's and thus corresponds to a walk in the subgraph of the $K_{33}$ graph (see Fig.~\ref{k33_12}). By definition of $\F_{12}$, this walk begins at the vertex 010. Let $v$ be a factor of $\f_{12}$. Let us write $f_i=\xi(\phi^i(0))$ for all $i\ge 0$. Then there is $i$ such that $f_i=uvw$ for some words $u$ and $w$. Note that $f_i$ occurs in $\f_{12}$ infinitely often, in particular, as a prefix and after each prefix $f_{i+k}$, where $k>0$. 
\begin{figure}[htb]
\centering
\includegraphics[trim=88 708 373 59,clip]{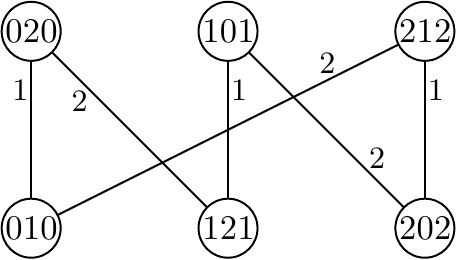}
\caption{\small The graph of jumps in the 1-2-bonacci word.}
\label{k33_12}
\end{figure}

We call two codewalks \textit{equivalent} (and write $u\sim v$) if the corresponding walks, beginning in the same vertex, end in the same vertex. For example, $21221\sim 2$, because the walk 1221 is closed. Similar to \cite{Pet16}, we observe that
$$
f_i\sim \begin{cases}
21,& \text{ if } i\bmod 6 =1;\\
212,& \text{ if } i\bmod 6 =2;\\
2,& \text{ if } i\bmod 6 =0 \text{ or } 3;\\
12,& \text{ if } i\bmod 6 =4;\\
1,& \text{ if } i\bmod 6 =5.\\
\end{cases}
$$
We note that the paths from the vertex 010, labeled by 21, 212, 2, 12, and 1, end in all five remaining vertices. Thus, when reading the codewalk of $F_{12}$, we read $f_i$ starting from each vertex. Reading the word $u$ from all vertices generates a bijection on the set of vertices; thus, we read $v$ from each vertex. Therefore, $\F_{12}$ contains all six factors with the codewalk $v$. Since $v$ is arbitrary, this means that $\F_{12}$ is symmetric.

\medskip
Finally we compute the subword complexity of $\F_{12}$. Consider a  $\F_{12}$-special factor $V$ such that $|V|\ge 5$ and $Va,Vb\in \Fac(\F_{12})$. The codewalk of $V$ has the form $xv\underline{1}$ for some $v\in\Fac(\f_{12})\backslash \{\varepsilon\}$, $x\in\{\varepsilon,\underline{1}, \underline{2}\}$. The factors $Va$ and $Vb$ of $\F_{12}$ have codewalks $xv1$ and $xv\underline{2}$. Hence both $v1$ and $v2$ are factors of $\f_{12}$, so $v$ is $\f_{12}$-special. If $U$ is another $\F_{12}$-special factor of length $|V|$, then, similarly, its codewalk is of the form $yu\underline{1}$ for some $\f_{12}$-special factor $u$ and $y\in\{\varepsilon, \underline{1}, \underline{2}\}$. Since $\f_{12}$ is a Sturmian word, it has only one special factor of each length. But every suffix of a special factor is special, so w.l.o.g.\ $u$ is a suffix of $v$. 

Let $v=v'u$ and assume $v'$ nonempty. The words $U$ and $V$ have suffixes of equal length encoded by $u\underline{1}$. Note that $v'$ encodes at least two letters of $V$ (just two if $v'=1$), while $y$ encodes at most two letters of $U$ (just two if $y=\underline{2}$). But $\f_{12}\in\{1,2\}^\omega$, so if $y=\underline{2}$ then $2u$ is $\f_{12}$-special and thus a suffix of $v$. Hence in this case the last letter in $v'$ is 2, implying that $v'$ encodes at least three letters of $V$. Therefore, $|V|>|U|$ in all cases, contradicting the choice of $U$. Then $v'=\varepsilon$ and $u=v$. Thus we proved that two $\F_{12}$-special factors of the same length have the same codewalk. Due to symmetry, this means that $\F_{12}$ has exactly six special factors of every length $n\ge 5$. Computing  $p_{\F_{12}}(2)=6$, $p_{\F_{12}}(3)=12$, $p_{\F_{12}}(4)=18$, $p_{\F_{12}}(5)=24$, we then have $p_{\F_{12}}(n)=6n-6$ for all $n\ge 2$, as desired.
\end{proof}

\begin{remark} \label{r:13bonacci}
As demonstrated in the proof of Theorem~\ref{t:12bonacci}, 6 is the minimal linear growth constant of an aperiodic symmetric infinite ternary word. However, there exist such words with linear growth constant 6 and critical exponent smaller than $11/6$. An example of such word is the 1-3-bonacci word $\F_{13}$ obtained similar to the 1-2-bonacci word: take the Fibonacci word $\f$, replace all 0's with 3's to get the codewalk $\f_{13}$ and take the word with this codewalk as $\F_{13}$. The critical exponent of $\F_{13}$ is $\frac {5+\sqrt{5}}4\doteq 1.8090\cdots$ \cite{Pet16}; the fact that $\F_{13}$ is symmetric and the equality $p_{F_{13}}(n)=6n$ for all $n\ge 5$ can be proved as in Theorem~\ref{t:12bonacci}.
\end{remark}

Remark~\ref{r:13bonacci} suggests the following question.

\begin{openquestion} \label{o:symm6}
What is the minimal critical exponent of a symmetric infinite ternary word with linear growth constant 6?
\end{openquestion}

As to the case of $(\frac 74)^+$-power-free words, our knowledge is quite limited. The two well-known such words are the Arshon word \cite{Ars37} and the Dejean word \cite{Dej72}. Both are symmetric and recurrent; the codewalk of the Arshon word is the concatenation of some finite prefix and a morphic image of the fixed point of the morphism $\eta$ defined by $\eta(0)=010, \eta(1)=011$ (see \cite[Lemma 2]{PeSh12}). This fixed point has subword complexity $2n$ for every $n\ge 1$. Thus, the Arshon word can be easily proved, similar to the proof of Theorem~\ref{t:12bonacci}, to have linear growth constant 12. We have not computed such a constant for the Dejean word, but there is a numerical evidence that it is not smaller than 12. Thus if the answer to Open Question~\ref{o:symm6} is greater than $7/4$, which looks plausible, then the linear growth constant of the Arshon word is the minimum possible. Two further questions are natural.

\begin{openquestion} 
Is there a symmetric $(\frac 74)^+$-power-free infinite ternary word of minimum subword complexity? If yes, is the Arshon word an example of such word?
\end{openquestion}

\begin{openquestion} 
What is the minimal linear growth constant of a $(\frac 74)^+$-power-free infinite ternary word?
\end{openquestion}

\subsection{Other ternary words} \label{ss:other3}

When the squares in ternary words are allowed, we can build words of smaller complexity. Here we consider the special case of words with the complexity upper bounded by the function $2n+1$. Let
\begin{equation} \label{e:G}
\G= 012020102012010201202\cdots
\end{equation}
be the fixed point of the morphism $\gamma$ defined by $0\to 01, 1\to 2, 2\to 02$. The word $\G$ is a recoding of the sequence \seqnum{A287104} from Sloane's {\it Encyclopedia}.

\begin{lemma} \label{l:G248}
$p_{\G}(n)=2n+1$ for all $n\ge 0$.
\end{lemma}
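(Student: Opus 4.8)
The plan is to compute the subword complexity of $\G$ by understanding its special factors via the structure of the morphism $\gamma$. Since $\G$ is a fixed point of $0\to 01, 1\to 2, 2\to 02$, I would first verify the claim for small $n$ directly: $p_{\G}(0)=1$, $p_{\G}(1)=3$ (all three letters occur), and then check $n=2,3,4$ by inspection of the prefix in \eqref{e:G}, confirming the pattern $2n+1$ begins immediately. Establishing $2n+1$ for all $n$ is equivalent, by the first-difference identity analogous to the one used earlier, to showing that $\G$ has exactly two $\G$-special factors of each length $n\ge 1$ (a factor $v$ being $\G$-special here means at least two of $v0,v1,v2$ occur in $\G$, contributing one extra to the difference $p_{\G}(n+1)-p_{\G}(n)$ per additional branch). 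More precisely, I want $p_{\G}(n+1)-p_{\G}(n)=2$ for all $n\ge 0$, which forces the special structure to be as tight as in a Sturmian-like word but with total branching $2$ at each length.

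The main approach would be to exploit the fact that $\gamma$ is a primitive (or at least suitably structured) morphism and track how special factors of $\G$ pull back under $\gamma$. The key observation is that each letter's image under $\gamma$ has a controlled length ($|\gamma(0)|=|\gamma(2)|=2$, $|\gamma(1)|=1$), and that $\G$ is the unique fixed point starting with $0$. I would set up a desubstitution argument: every sufficiently long factor $w$ of $\G$ has a unique factorization aligned to the $\gamma$-blocks, so that $w$ sits inside $\gamma(w')$ for a shorter factor $w'$ of $\G$ plus bounded boundary pieces. Branching of $w$ (i.e., which letters can follow it) is then governed by the branching of $w'$, because the letter following an occurrence of $w$ is determined by the $\gamma$-image of the letter following $w'$. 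This lets me argue that the number of right-extensions of factors is preserved or controlled under desubstitution, reducing the count of special factors at length $n$ to the count at a proportionally smaller length, and closing the recursion by the verified base cases.

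The hard part will be handling the boundary/synchronization details precisely: because $\gamma$ is non-uniform (the image of $1$ has length $1$ while the images of $0$ and $2$ have length $2$), the block decomposition of a factor is not automatically synchronized, and I must prove a synchronization lemma guaranteeing that a long enough factor of $\G$ determines its cutting points into $\gamma$-blocks uniquely. I expect the cleanest route is to identify the two special factors explicitly at each length and verify that each extends in exactly the claimed way, rather than arguing abstractly; concretely, I would determine the (at most two) right-special factors of each length and show each is genuinely $2$-way branching while all other factors are $1$-way, so the first difference is exactly $2$. An alternative, more mechanical route that sidesteps the synchronization bookkeeping is to use the \texttt{Walnut} prover: since $\G$ is automatic (being morphic with the given morphism, it is generated by a finite automaton reading a suitable numeration of $n$), one can in principle encode the predicate ``$v$ is a factor of $\G$'' and verify $p_{\G}(n+1)-p_{\G}(n)=2$ automatically. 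If the synchronization lemma proves delicate, I would fall back on proving the first-difference is exactly $2$ by combining an easy lower bound (at least two special factors of each length, forced by aperiodicity together with one extra branching lineage visible in the prefix) with a matching upper bound obtained from the desubstitution, and then invoke the summation formula to conclude $p_{\G}(n)=2+\sum_{1\le i<n}2 = 2n$, adjusting the additive constant against the base cases to land on $2n+1$.
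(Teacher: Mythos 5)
Your plan matches the paper's proof in essentially every respect: the paper also reduces the claim to showing there are exactly two $\G$-special factors of each length, checks small lengths by hand, shows the lower bound by pushing special factors forward through $\gamma$ (if $V0$ is special then $\gamma(V)01$ is, etc.), and obtains the upper bound by exactly the desubstitution/case analysis you describe, using the block structure of $\gamma(0)=01$, $\gamma(1)=2$, $\gamma(2)=02$ (and the absence of $21,22$) to resolve the synchronization issue you correctly flag. The only caveat is that your write-up stops at the strategy level and leaves the case analysis unexecuted, but the paper's own proof is equally terse on this point.
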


\begin{proof}
We use standard techniques (see, e.g., \cite{Cas97a}), so we try to keep the proof short. It is sufficient to prove that there are exactly two $\G$-special words of length $n$ for all $n\ge 1$. For small $n$, one can check by exhaustive search that there is a unique special word ending in $0$ and a unique such word ending in $1$ (a $\G$-special word cannot end in 2 because $21,22\notin\Fac(\G)$). Let $V0$ be special; then $V01,V02\in\Fac(\G)$, implying that $\gamma(V)012,\gamma(V)0102\in\Fac(\G)$ and thus $\gamma(V)01$ is special. Similarly, if $U1$ is special, then $U10,U12\in\Fac(\G)$; $\gamma(U)201,\gamma(V)202\in\Fac(\G)$ and thus $\gamma(U)20$ is special. Since each suffix of a special word is special, there are special words of every length ending in 0 and in 1. Now assume that the lemma is false; then for some $n$ one has $D_{\G}(n)>2$, $D_{\G}(1) = \cdots = D_{\G}(n-1)=2$.

Some case analysis is needed; all cases are similar, so we consider one of them. Assume that two special words of length $n$ end with 0. Since their suffixes are special, and only one special word of length $n-1$ ends with 0, these two words are $aV0$ and $bV0$, where $a,b\in\Sigma_3$. Let $a=0,b=1$ (the other case is $a=1,b=2$). Then we can write $V=2V'$. We have 
\begin{align*}
02V'01,\ 02V'02,\ 12V'01,\ 12V'02&\in\Fac(\G) \text{\quad and hence}\\
2\gamma^{-1}(V')0,\ 2\gamma^{-1}(V')2,\ 1\gamma^{-1}(V')0,\ 1\gamma^{-1}(V')2&\in\Fac(\G)
\end{align*}
Then $2\gamma^{-1}(V'),1\gamma^{-1}(V')$ are two special words of length $<n$, ending with the same letter 1; this is impossible by the choice of $n$. Studying all cases in the same way, we reach the same contradiction. Thus the lemma holds.
\end{proof}

\begin{theorem} \label{t:G248}
The critical exponent of the word $\G$ is $2+\frac 1{\lambda^2-1}=2.4808627 \cdots$, where $\lambda=1.7548777\cdots$ is the real zero of the polynomial $x^3 -2x^2 +x - 1$. 
\end{theorem}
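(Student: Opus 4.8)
The plan is to compute the critical exponent as the supremum of the exponents of the maximal repetitions (runs) occurring in $\G$, and to read off the exact value from the spectral data of $\gamma$. The incidence matrix of $\gamma$ is $M=\bigl(\begin{smallmatrix}1&0&1\\ 1&0&0\\ 0&1&1\end{smallmatrix}\bigr)$, whose characteristic polynomial is exactly $x^3-2x^2+x-1$; thus $\lambda$ is its Perron eigenvalue and governs the growth $|\gamma^n(a)|\sim C_a\lambda^n$, with the ratios $|\gamma^n(0)|:|\gamma^n(1)|:|\gamma^n(2)|$ tending to the left Perron eigenvector $(1,\lambda-1,\lambda(\lambda-1))$. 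Writing $a_n=\gamma^n(0)$, $b_n=\gamma^n(1)$, $c_n=\gamma^n(2)$, these satisfy $a_{n+1}=a_nb_n$, $b_{n+1}=c_n$, $c_{n+1}=a_nc_n$, and I would use these relations to locate the relevant repetitions explicitly.

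For the lower bound I would track a single self-similar family of runs. Starting from the square $2020$ (period $2$) in $\G$ and repeatedly applying $\gamma$ while taking the maximal periodic extension, one obtains runs $R_n$ of period $p_n=|c_n|$ (the values $7,12,21,37,65,\dots$, which grow like $\lambda^n$) and length $2p_n+t_n$ for an overhang $t_n$. Because $\gamma$ is non-uniform, the overhang obeys a recursion of the shape $t_{n+1}=\phi(w_n)+e_{n+1}$, where $\phi(u)=|\gamma(u)|$, $w_n$ is the length-$t_n$ prefix of one period of $R_n$, and $e_{n+1}\in\{0,1,2\}$ is a boundary correction equal to the length of the common prefix (resp.\ suffix) of the two $\gamma$-images that collide at the right (resp.\ left) end of $R_n$. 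Lifting this to Parikh vectors, $\psi(w_{n+1})=M\psi(w_n)+\psi(s_{n+1})$ with bounded forcing $s_{n+1}$, I can isolate the dominant ($\lambda$) component of $t_n$; the numerics $\tfrac27,\tfrac4{12},\tfrac9{21},\tfrac{16}{37},\tfrac{30}{65},\dots$ are consistent with $t_n/p_n\to 1/(\lambda^2-1)$, which would give $\sup_n(\text{exponent of }R_n)=2+\tfrac1{\lambda^2-1}$.

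For the upper bound I would show that no run does better. Since $M^3>0$, the morphism $\gamma$ is primitive, and $\G$ is aperiodic, so by Moss\'e's theorem $\gamma$ is recognizable: there is a synchronization constant $L$ beyond which every factor has a unique $\gamma$-desubstitution. Hence any run of period $p>L$ is, up to a boundary correction of bounded length, the $\gamma$-image of a run of strictly smaller period; iterating, all runs organize into finitely many $\gamma$-orbits descending to a finite list of short-period runs, which are checked directly. Along each orbit the exponent changes only through the non-uniform distortion $\phi$ and the bounded boundary term, exactly as in the lower-bound recursion, so the supremum of exponents along every orbit is at most the limit computed above, with equality only for the orbit seeded by $2020$. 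The low complexity $p_\G(n)=2n+1$ from Lemma~\ref{l:G248} keeps this bookkeeping finite: there are only two right-special factors of each length, and correspondingly very few run shapes to analyze.

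The hard part will be the upper bound, and specifically making the desubstitution quantitatively tight. Two effects must be controlled at once: the non-uniform distortion (the exponent is $\phi(\text{overhang})/\phi(\text{period})$, not a ratio of combinatorial lengths, and these differ because $\gamma(1)$ has length $1$ while $\gamma(0),\gamma(2)$ have length $2$), and the boundary corrections $e_n$, which are only $O(1)$ per step but are $\lambda$-amplified by every later application of $\gamma$ and therefore contribute at leading order to the limiting ratio. Extracting the exact constant $1/(\lambda^2-1)$ — rather than a mere inequality — requires summing these amplified corrections along the extremal orbit and proving that the $2020$-orbit strictly dominates all others; this is where I expect to spend most of the effort.
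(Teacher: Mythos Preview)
Your plan is essentially the paper's own proof: apply Krieger's method, organize the runs of $\G$ into $\gamma$-orbits, analyze the orbit seeded by $2020$ via the Perron--Frobenius eigenvalue $\lambda$ of the incidence matrix to obtain the limit $2+1/(\lambda^2-1)$, and argue that no other orbit exceeds this. One correction: the orbit seeded by the run $201201$ (period $3$, at position $8$) also has limiting exponent exactly $2+1/(\lambda^2-1)$---the paper shows its recursion reduces to the same formula as the $2020$-orbit---so your claim that equality holds \emph{only} for the $2020$-orbit is not right, though this does not change the critical exponent and only means your upper-bound case analysis must treat two extremal orbits rather than one.
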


\begin{proof}
The critical exponent of $\G$ can be computed by Krieger's method \cite{Kri07}. 
We recall the necessary tools suitable for analyzing $\G$ specifically, rather than in full generality.

For a word $w\in\Sigma_k^*$, we let $|w|_a$ denote the number of occurrences of the letter $a$ in $w$. The \textit{Parikh vector} of $w$ is the vector $\vec{P}(w)=(|w|_0,\ldots,|w|_{k-1})$. By \textit{norm} of a vector we mean the sum of its coordinates; so $\lVert\vec{P}(w)\rVert= |w|$. If $w$ is a prefix of $x^\omega$ for some word $x$, we say that $w$ has \textit{period} $|x|$. In this case, all factors of $w$ of length $|x|$ share the same Parikh vector $\vec{P}(x)$, so we can speak about ``Parikh vector of the period''. If $|x|$ is the minimal period of $w$, we call $x$ the \textit{root} of $w$. The exponent of $w$ then can be written as
$$
\exp(w)=\frac{|w|}{|x|}=\frac{\lVert\vec{P}(w)\rVert}{\lVert\vec{P}(x)\rVert}.
$$
The matrix $A_f$ of a morphism $f: \Sigma_k^*\to \Sigma_m^*$ is a nonnegative integer $k\times m$ matrix, the $i$'th row of which is the Parikh vector of $f(i{-}1)$,  where $i=1,\ldots,k$. For example, the morphism $\gamma$ has the matrix 
$$
A=A_\gamma=\begin{pmatrix}
1&1&0\\0&0&1\\1&0&1
\end{pmatrix}
$$
One has $\vec{P}(f(w))=\vec{P}(w)\cdot A_f$. Note that the characteristic polynomial of $A$ is $x^3-2x^2+x-1$, so the maximal (and unique) real eigenvalue of $A$ is $\lambda$.

A \textit{run} in a finite or infinite word $w$ is an occurrence of a factor $v$ of $w$ such that (a) $\exp(v)\ge 2$ and (b) this occurrence cannot be extended in $w$ to a longer factor with the same minimal period. For example, the word $\G$ has run $2020$ at position 2 with period 2, run $02010201$ at position 3 with period 4, and run $2010201201020120$ at position 4 with period 7; see \eqref{e:G}. For $\G$, as for any word containing squares, the critical exponent equals the supremum of exponents of its runs. Since $\G$ is a fixed point of a morphism, its runs can be grouped into infinite series in the following way: 
\begin{itemize}
\item[-] take a run $V$ at position $i$ with root $X$ (let $\G=UV\cdots$, $|U|=i$); 
\item[-] take the occurrence of $\gamma(V)$ at position $|\gamma(U)|$ and extend it to a run with period $|\gamma(X)|$;
\item[-] take the obtained run as $V$ and repeat.
\end{itemize}
For example, the runs mentioned above form the beginning of a series:
$$
\begin{array}{lclclcl}
2020&\to& 02010201&\to& \pmb{2}01020120102012\pmb{0}&\to&\cdots\\
\text{at }2&&\text{at }3&&\text{at }4\\
U=01&&\gamma(U)=012&&\gamma^2(U)=01202\\
X=20&&\gamma(X)=0201&&\gamma^2(X)=0102012\\
&&\text{no extensions}&& \text{extended left by }2, \text{right by }0
\end{array}
$$
In the case of $\G$, the left (resp., right) extension is a longest common suffix (resp., prefix) of $\gamma$-images of corresponding letters. Thus, the left extension is either $2$ (the common suffix of $\gamma(1)$ and $\gamma(2)$) or $\varepsilon$, and the right extension is either $0$ (the common prefix of $\gamma(0)$ and $\gamma(2)$) or $\varepsilon$. It is possible to compute the exponents of any run in a series in a uniform way. Namely, if one has a series $\{V_m\}_0^\infty$ such that $V_0$ has root $X$, then the exponent of each $V_m$ can be computed by the following formula:
\begin{equation} \label{e:krieger}
\exp(V_m)=
\frac{\lVert\vec{P}(V_0)\cdot A^m+\sum_{i=1}^{m}\vec{P}_i\cdot A^{m-i}\rVert}
{\lVert\vec{P}(X)\cdot A^m\rVert}\ ,
\end{equation}
where $\vec{P}_i$ is the sum of the Parikh vectors of the words which were added to $\gamma(V_{i-1})$ to get $V_i$. Consider the series introduced above, with $V_0=2020$. From definition of $\gamma$ it is easy to find that $\vec{P}_i = (0,0,0)$ for odd $i$ and $\vec{P}_i = (1,0,1)$ for even $i$. Observing that $\vec{P}(X) = (1,0,1)$, $\vec{P}(V_0) = 2\vec{P}(x)$, we simplify \eqref{e:krieger} to get
\begin{equation} \label{e:krieger2}
\exp(V_m)= 2+
\frac{\lVert (1,0,1)\cdot \big(A^{m-2}+ A^{m-4}+\cdots +A^{m-2\lfloor m/2\rfloor} \big)\rVert}
{\lVert (1,0,1)\cdot A^m\rVert}\ .
\end{equation}
One can check that the sequence $\{\exp(V_m)\}_0^\infty$ is strictly increasing, so that its supremum equals its limit. The limit can be computed by standard machinery of Perron-Frobenius theory. We describe the idea, omitting the plain calculus. For large $m$, the vector $(1,0,1)\cdot A^m$ is very close to the eigenvector of the matrix $A$ corresponding to its maximal eigenvalue $\lambda$; so the multiplication of this vector by $A$ corresponds, up to a small error, to its multiplication by $\lambda$. Next note that if we multiply the matrix in the numerator of \eqref{e:krieger2} by $(A^2-I)$, where $I$ is the identity matrix, we obtain either $A^m-I$ or $A^m-A$, depending on the parity of $m$. Hence, as $m\to\infty$, the numerator multiplied by $(\lambda^2-1)$ approaches the denominator. Therefore, 
$$
\lim_{m\to\infty} \exp(V_m)=2+\frac 1{\lambda^2-1},
$$
as required. In the same way, we consider the series starting with the run 201201 at position 8. For this run, $\vec{P}_i = (1,0,0)$ for odd $i$ and $\vec{P}_i = (0,0,1)$ for even $i$. Thus we get the following equality instead of \eqref{e:krieger2}:
\begin{equation} \label{e:krieger3}
\exp(V_m)= 2+
\frac{\lVert (1,0,0)\cdot \big(A^{m-1}+ A^{m-3}+\cdots \big) +  (0,0,1)\cdot \big(A^{m-2}+ A^{m-4}+\cdots \big)\rVert}
{\lVert (1,1,1)\cdot A^m\rVert}\ .
\end{equation}
Observing that $(1,0,0)A+(0,0,1)=(1,1,1)$, we can reduce \eqref{e:krieger3} to \eqref{e:krieger2}. (The particular nonnegative vector does not matter; what matters is that the vector in the numerator and the denominator is the same.) So we get the same limit as above.

For all other series of runs, the initial run is long enough to ensure that the vector $\sum_{i=1}^{m}\vec{P}_i\cdot A^{m-i}$ has much smaller norm than $\vec{P}(X)\cdot A^m$. 
The theorem is proved.
\end{proof}

\begin{conjecture} \label{jef}
Among all ternary words with subword complexity bounded above by $2n+1$, the word $\G$ has the lowest possible critical exponent.   
\end{conjecture}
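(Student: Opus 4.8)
The plan is to recast the conjecture as a sharp lower bound: every infinite ternary word $\u$ with $p_{\u}(n)\le 2n+1$ for all $n$ has critical exponent $\mathrm{ce}(\u)\ge 2+\frac1{\lambda^2-1}$, with equality forcing $\Fac(\u)=\zeta(\Fac(\G))$ for some bijective coding $\zeta$. Since decreasing the complexity removes special factors and tends to lengthen repetitions, the infimum of $\mathrm{ce}$ should be approached inside the class of words with $p_{\u}(n)=2n+1$; so the first step is a monotonicity reduction, showing that if $p_{\u}(n)<2n+1$ for infinitely many $n$ then $\mathrm{ce}(\u)$ already exceeds the target, which leaves only the critical case $p_{\u}(n)=2n+1$ for all large $n$.

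In the critical case I would analyze the special-factor structure via the Rauzy graphs. Here $D_{\u}(n)=p_{\u}(n+1)-p_{\u}(n)=2$ for all large $n$, so at each length there are either two right-special factors of degree $2$ (as for $\G$, whose two special factors end in $0$ and $1$; see Lemma~\ref{l:G248}) or a single right-special factor of degree $3$, the Arnoux--Rauzy situation. Tracking the bispecial factors and the return words to them should yield an $S$-adic (desubstitution) description of $\u$ as the image of an infinite composition $\sigma_{i_1}\sigma_{i_2}\cdots$ of substitutions from a finite admissible set, with $\gamma$ among them and $\G$ corresponding to the stationary expansion $\gamma^{\omega}$. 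The runs of $\u$ then group into series exactly as in the proof of Theorem~\ref{t:G248}, and the exponent of the $m$-th run of a series is computed by the analogue of~\eqref{e:krieger} in which the power $A^m$ of the single matrix is replaced by the product $A_{\sigma_{i_1}}\cdots A_{\sigma_{i_m}}$ of the matrices of the applied substitutions.

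The heart of the matter is then an optimization over all admissible directive sequences $(\sigma_{i_j})$, with the goal of proving that the infimum of $\mathrm{ce}$ is attained by the stationary sequence $\gamma^{\omega}$. Morally, $\gamma$ is the admissible substitution whose matrix has the smallest Perron eigenvalue compatible with complexity $2n+1$, so that any departure from the all-$\gamma$ expansion either inflates a period more slowly than the surrounding repetition, and thereby raises some run exponent, or inserts an Arnoux--Rauzy step, which one checks separately to force a strictly larger critical exponent (the Arnoux--Rauzy branching is bounded below by the Tribonacci value, which exceeds $2.48$). Making this rigorous amounts to an extremality principle for the limit in~\eqref{e:krieger2}: among products of the finitely many admissible matrices, the constant product $A^m$ minimizes the limiting ratio $\lim_m \exp(V_m)$. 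This is the precise ternary analogue of the classical fact that the all-ones (golden-ratio) continued fraction minimizes the critical exponent among Sturmian words, with Rauzy-induction substitutions playing the role of the continued-fraction algorithm.

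I expect the main obstacle to be twofold. Establishing the finite $S$-adic description uniformly over every complexity-$\le 2n+1$ ternary word, covering the mixed and Arnoux--Rauzy regimes and correctly bookkeeping the left/right extension data that feeds the correction terms $\vec{P}_i$ in~\eqref{e:krieger}, is delicate, since one must enumerate the admissible substitutions and their composition constraints completely. The deeper difficulty, however, is the optimization: ruling out every non-stationary directive sequence requires comparing matrix products under different orderings, where cross terms in $\sum_i \vec{P}_i A^{m-i}$ might lower an individual ratio even while the denominator grows, so one has to show that no admissible interleaving beats $\gamma^{\omega}$ globally. I expect this extremality step, rather than the structural classification, to be the genuine difficulty, mirroring the fact that minimality of the Fibonacci critical exponent among Sturmian words, though classical, is itself a nontrivial theorem.
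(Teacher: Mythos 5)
The statement you are addressing is Conjecture~\ref{jef}: the paper gives no proof of it and explicitly leaves it open (the authors only establish the complexity of $\G$ in Lemma~\ref{l:G248} and its critical exponent in Theorem~\ref{t:G248}). So there is no argument in the paper to compare yours against, and the question is whether your proposal closes the gap on its own. It does not: it is a credible research programme, but both of its load-bearing steps are described rather than carried out. The $S$-adic classification of all ternary words with $p_{\u}(n)\le 2n+1$ (enumerating the admissible substitutions, their composition constraints, and the extension data feeding the correction vectors $\vec{P}_i$ in~\eqref{e:krieger}) is left entirely to future work, and the extremality of the stationary directive sequence $\gamma^{\omega}$ among all admissible matrix products --- which you correctly identify as the heart of the matter --- is only motivated by analogy with the Sturmian/Fibonacci case, not proved. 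A conjecture plus a plan for proving it is still a conjecture.

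Two specific steps deserve flagging beyond the admitted gaps. First, your opening ``monotonicity reduction'' --- that $p_{\u}(n)<2n+1$ for infinitely many $n$ already forces $\mathrm{ce}(\u)>2+\frac{1}{\lambda^2-1}$ --- is asserted on the heuristic that lower complexity lengthens repetitions, but this is exactly the kind of claim the whole conjecture turns on; the sub-$2n+1$ regime contains, e.g., all words of complexity $2n$ (for which even the binary case required the separate analysis of Theorem~\ref{t:52}) and all quasi-Sturmian words, and nothing in the paper or in your sketch handles the intermediate complexity classes. Second, the dismissal of the Arnoux--Rauzy branch by appeal to ``the Tribonacci value, which exceeds $2.48$'' is an unproved minimality statement of the same nature as the one you are trying to establish; it cannot simply be ``checked separately'' without a theorem that the Tribonacci word minimizes the critical exponent among ternary Arnoux--Rauzy words, which is itself a nontrivial result that would need to be proved or precisely cited. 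In short: the architecture is reasonable and consistent with how Theorem~\ref{t:G248} computes $\exp(V_m)$ via~\eqref{e:krieger2}, but no part of the conjecture is actually established by what you have written.
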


\subsection{Binary words}

When we switch from $(\frac 73)$-power-free to $(\frac 73)^+$-power-free infinite binary words, the Thue--Morse word loses its status as the word of minimum complexity. Let us give an example of a $(\frac 73)^+$-power-free infinite word with subword complexity incomparable to $p_{\t}(n)$. Consider the morphism $g: \Sigma_3\to \Sigma_2$ defined by the rules
\begin{align*}
0 &\rightarrow 01100100110\,1001\,0110\,1001\,1001, \\
1 &\rightarrow 01100100110\,1001\,0110\,0110\,1001, \\
2 &\rightarrow 01100100110\,1001\,1001\,0110\,1001.
\end{align*}
It maps square-free ternary words to $(\frac 73)^+$-power-free binary words (see \cite[Section 3]{Shur00}). 

\begin{theorem}
Let $\g=g(\T)$, where the morphism $g$ and the ternary Thue word $\T$ are defined above. Then $p_{\g}(n)<p_{\t}(n)$ for infinitely many values of $n$.
\end{theorem}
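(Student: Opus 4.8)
The plan is to avoid computing $p_{\g}$ exactly: I will bound it from above by a crude ``covering'' argument and then compare that bound against the \emph{exact} complexity of $\t$ at the lengths where $p_{\t}(n)/n$ is largest. Since the bound is an upper bound on $p_{\g}$, finding lengths where it drops below $p_{\t}(n)$ is enough to prove the theorem.

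First I would record the covering bound. As $g$ is $27$-uniform, $\g$ is the concatenation of blocks $g(T_1)g(T_2)\cdots$, each of length $27$. Any factor $w$ of $\g$ with $|w|=n$ begins at some offset $r\in\{0,1,\ldots,26\}$ inside one block and spans at most $\lceil n/27\rceil+1$ consecutive blocks; hence $w$ is completely determined by $r$ together with the underlying factor $U$ of $\T$ of length $\lceil n/27\rceil+1$ labelling those blocks. This yields the unconditional estimate
$$
p_{\g}(n)\ \le\ 27\cdot p_{\T}\bigl(\lceil n/27\rceil+1\bigr)\ =\ 27\cdot p_{\t}\bigl(\lceil n/27\rceil+2\bigr),
$$
where the last equality is the identity $p_{\T}(m)=p_{\t}(m+1)$ ($m\ge 2$) recalled above. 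In particular $p_{\g}$ is linear, but the main use of the estimate is comparative, and it needs neither injectivity nor recognizability of $g$.

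Next I evaluate both sides at the peaks of $p_{\t}(n)/n$, namely $n=3\cdot 2^i$. From \eqref{e:complTM} (first branch, with $k=i+1$) one gets $p_{\t}(3\cdot 2^i)=10\cdot 2^i-4$, whose growth ratio tends to $10/3$. On the other side $\lceil n/27\rceil+2=2^i/9+O(1)$, and since $2^{i-4}<2^i/9<2^{i-3}$ with $2^i/9$ lying just past the internal peak $3\cdot 2^{i-5}$ of its octave, the \emph{second} branch of \eqref{e:complTM} applies and gives $p_{\t}(2^i/9+O(1))=(2/9+1/8)\,2^i+O(1)=\tfrac{25}{72}\,2^i+O(1)$. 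Multiplying by $27$ produces $p_{\g}(3\cdot 2^i)\le \tfrac{75}{8}\,2^i+O(1)=9.375\cdot 2^i+O(1)$. Because $9.375<10$, for all large $i$ the gap $0.625\cdot 2^i$ swamps the $O(1)$ corrections, so $p_{\g}(3\cdot 2^i)<p_{\t}(3\cdot 2^i)$; there being infinitely many such $i$, the theorem follows.

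Conceptually the argument is short, and the only real labor is the bookkeeping with the two branches of \eqref{e:complTM}. The step I expect to be the crux is confirming the numerical inequality $\tfrac{75}{8}<10$: it rests on the arithmetic accident that $27=3^3$, so that dividing the peak length $3\cdot 2^i$ by the morphism length lands $2^i/9$ at a point where $p_{\t}(m)/m\approx 3.125$ on the log-periodic Thue--Morse curve, strictly below the peak ratio $10/3$, after which the factor $27/9=3$ keeps the product below $10$. Had this scaling instead placed $2^i/9$ near a peak of $p_{\t}$, the crude bound would not beat $p_{\t}$ at these lengths and one would have to locate a more favorable sequence $n$; so the robustness of the chosen lengths is exactly what must be checked.
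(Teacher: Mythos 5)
Your proof is correct. It follows the same basic strategy as the paper's proof --- bound $p_{\g}(n)$ by (number of offsets) $\times$ (number of $\T$-preimages of the relevant length), then exploit the log-periodic oscillation of $p_{\t}$ via formula \eqref{e:complTM} --- but the two ingredients are instantiated differently. The paper picks $n=27\cdot 2^k-7$ and uses the specific structure of $g$ (the images of the three letters differ only in a length-$8$ ``identifying'' window) to argue that a length-$n$ factor determines its length-$2^k$ preimage, yielding the slightly sharper bound $27\,p_{\T}(2^k)=81\cdot 2^k$ against $p_{\t}(n)=86\cdot 2^k-16$. You instead use the generic covering bound $p_{\g}(n)\le 27\,p_{\T}(\lceil n/27\rceil+1)$, which needs nothing about $g$ beyond $27$-uniformity, and you choose $n=3\cdot 2^i$, the peaks of $p_{\t}(n)/n$, getting $9.375\cdot 2^i+O(1)$ against $10\cdot 2^i-4$. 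I checked the branch bookkeeping: $2^i/9$ does fall in the second branch of its octave ($3\cdot 2^{i-5}<2^i/9<2^{i-3}$, with $\lceil 2^i/9\rceil+1\le 2^{i-3}$ once $i\ge 8$), so the constant $27(2/9+1/8)=75/8$ is right and the $O(1)$ terms (at most $108$ versus a gap of $0.625\cdot 2^i$) are absorbed for all $i\ge 8$. Your version is marginally more elementary and robust (it would apply verbatim to any uniform morphic image of $\T$ of the right length), at the cost of a slightly weaker numerical margin; the paper's version extracts a bit more from the structure of $g$ but is tied to it.
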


\begin{proof}
Let $m=2^k$ for some $k\ge 2$ and compare the number of factors of length $n=27m-7$ in $\g$ and $\t$. Note that images of letters under $g$ differ only by the factor of length 8 at position 15. Hence a length-$n$ factor of $\g$ contains exactly $m$ such ``identifying'' factors (one of them, possibly, only partially). Thus every length-$n$ factor of $\g$ is uniquely indentified by its $g$-preimage of length $m$ and its ininial position inside the $g$-image of a letter. So $p_{\g}(n)\le 27\cdot p_{\T}(m)=27\cdot 3m=81\cdot 2^k$. By \eqref{e:complTM} we have $p_{\t}(n)=2(n-1)+2^{k+5}=86\cdot 2^k-16$. Since $5\cdot 2^k>16$, we obtain $p_{\g}(n)<p_{\t}(n)$.
\end{proof}

\begin{openquestion} \label{o:lessTM}
What is the minimun value of $\alpha$ such that some $\alpha$-power-free infinite binary word $\u$ has smaller complexity than the Thue-Morse word? (Recall that ``smaller'' means $p_{\u}(n)\le p_{\t}(n)$ for all $n$ and $p_{\u}(n)< p_{\t}(n)$ for some $n$.)
\end{openquestion}

\smallskip
It is known \cite{CaLu00} that the critical exponent of a Sturmian word is at least $(5+\sqrt{5})/2 \doteq 3.61803\cdots$, and the minimum is reached by the Fibonacci word $\f$ defined above. 

\begin{remark}
The words of subword complexity $n+c$ for any integer constant $c>1$ were characterized by Cassaigne \cite[Proposition 8]{Cas97c}  as having the form $uf(\v)$, where $u$ is a finite word, $\v$ is a Sturmian word and $f$ is a morphism. So it is not hard to prove that these words have the same minimum critical exponent $(5+\sqrt{5})/2$.
\end{remark}

So, for the critical exponents smaller than $(5+\sqrt{5})/2$ we look at the infinite words with linear growth constant 2. As Theorem~\ref{t:52} below shows, such words can have the critical exponent as small as $5/2$. This gives an upper bound for the value of $\alpha$ in Open Question~\ref{o:lessTM}.

\begin{remark}
By backtracking, one can prove that the longest binary words avoiding $5/2$-powers and with subword complexity $\leq 2n$ are of length $38$. They are
\begin{align*}
00110011010011001001101001100100110010&,\\ 
00110011010011001001101001100100110011&,
\end{align*}
and their reversals and complements.
\end{remark}

\begin{theorem} \label{t:52}
Let $\tau: \Sigma_3 \to \Sigma_2$ be the morphism defined by 
$0 \rightarrow 0 ,\ 1 \rightarrow 01 ,\ 2 \rightarrow 011$
and $\G$ be the word defined in Section~\ref{ss:other3}. Then
$$\tau(\G) = 0010110011001001100101100100110010110011001 \cdots $$
has the lowest critical exponent among all binary words with subword complexity $\leq 2n$.  It (i) avoids $(\frac{5}{2})^+$-powers and (ii) has subword complexity exactly $2n$ for all $n>0$.
\end{theorem}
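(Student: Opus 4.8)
The plan is to treat the three assertions---the complexity formula (ii), the power-avoidance (i), and the extremality---in that order, since (i) and (ii) together with a short combinatorial lower bound will yield the main claim. The structural fact underlying everything is that $\tau(a)=01^{a}$, so each image block begins with the unique $0$ it contains; hence every occurrence of $0$ in $\tau(\G)$ marks a block boundary, $\tau$ is injective, and the block factorization of any factor of $\tau(\G)$ is recovered by cutting before each $0$. In particular all maximal runs of $0$'s and of $1$'s have length at most $2$, since a run of length $3$ would force $000$ or $111$, i.e.\ $00$ or $11$ in $\G$, which $\gamma$ never produces.

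For (ii) I would compute the first difference $D_{\tau(\G)}$ through \eqref{e:CD} by counting right-special factors. Using the synchronization, a right-special factor must leave its final block ``open'': a short analysis of the run at the right end shows that such a factor is either the single letter $1$, or ends in $0$ (with the preceding letter not $0$), or ends in $01$, the trailing $011$ and $00$ cases being non-special because $0111$ and $000$ do not occur. Writing the factor as $1^{p}\tau(X)r$ with $r\in\{0,01\}$ and $X$ the word of completed blocks, the requirement that both one-letter extensions occur translates, via synchronization, into $X$ being right-extendable in $\G$ by two letters in the corresponding way. By the analysis in the proof of Lemma~\ref{l:G248}, $\G$ has exactly one special factor ending in $0$ and one ending in $1$ of each length; the ``ends in $0$'' family of $\tau(\G)$ corresponds to one of these and the ``ends in $01$'' family to the other. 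The delicate remaining point is the phase $p$: I would show that for each length $m$ exactly one admissible pair $(p,X)$ arises in each family, so $D_{\tau(\G)}(m)=2$ for every $m\ge 1$ and hence $p_{\tau(\G)}(n)=2n$. The bookkeeping of $p$ against the lengths $|\tau(X)|$, which grow by $1$, $2$, or $3$ as the special factors of $\G$ are prepended one letter at a time, is where the real work of this part lies.

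For (i) I would bound the critical exponent by pulling repetitions back through $\tau$. If $\tau(\G)$ contained a factor of period $q$ and length exceeding $\tfrac52 q$, then for $q$ above a small threshold the synchronization forces the period to be block-aligned, so the repetition is the $\tau$-image of a repetition of $\G$ extended by at most one boundary block on each side; the $\G$-repetition has exponent $<\tfrac52$ by Theorem~\ref{t:G248}, and I would show, via the same Perron--Frobenius/Krieger bookkeeping used there, that the boundary contributions raise the exponent to at most, and in the limiting series exactly, $\tfrac52$. Repetitions of small period (below the threshold) constitute a finite check. I expect this step---controlling how the block boundaries inflate the exponent of a pulled-back $\G$-repetition, and verifying that the supremum is \emph{precisely} $\tfrac52$ rather than merely some value below $3$---to be the main obstacle of the whole proof.

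Finally, for the extremality I would combine the pieces. By (ii) the word $\tau(\G)$ lies in the class of binary words with $p(n)\le 2n$, and by (i) its critical exponent is at most $\tfrac52$. For the matching lower bound it suffices that no \emph{infinite} binary word with $p(n)\le 2n$ avoids $5/2$-powers; this follows from the backtracking result recorded in the preceding remark, namely that every $5/2$-power-free binary word of complexity $\le 2n$ has length at most $38$. Consequently any infinite binary word of complexity $\le 2n$ contains a factor of exponent $\ge\tfrac52$, so its critical exponent is $\ge\tfrac52$; in particular $\tau(\G)$ attains the minimum value $\tfrac52$, which proves the theorem.
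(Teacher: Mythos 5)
Your proposal is correct and follows essentially the same route as the paper: part (ii) reduces, via the synchronization of $\tau$ (each $0$ marking a block boundary), to the uniqueness of $\G$-special factors ending in a given letter from Lemma~\ref{l:G248}; part (i) pulls runs back to the two series of runs of $\G$ and reuses the Krieger/Perron--Frobenius computation of Theorem~\ref{t:G248} with boundary corrections; and the extremality follows from the length-38 backtracking remark exactly as you argue. The only cosmetic difference is that the paper organizes (ii) as a minimal-counterexample contradiction (two same-length special factors sharing a length-$(n-1)$ suffix yield two $\G$-special factors with the same last letter), which sidesteps the phase bookkeeping you flag as the hard part of your version.
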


\begin{proof}
In the proof we refer to the properties of the word $\G$ (Lemma~\ref{l:G248} and Theorem~\ref{t:G248}) and their proofs. For (ii), we check by hand that for small $n$ the word $\tau(\G)$ has two special words, one ending with 0 and the other ending with 1. As in the proof of Lemma~\ref{l:G248}, we then assume that statement  (ii) is false, choose $n$ such that $D_{\tau(\G)}(n)>2$, $D_{\tau(\G)}(n{-}1)=\cdots = D_{\tau(\G)}(1)=2$, and do case analysis. There are two $\tau(\G)$-special length-$n$ words ending with the same letter; these words have common suffix of length $n-1$ by the choice of $n$. Consider the case where this suffix equals $0v0$ for some $v\in\Fac(\tau(\G))$. From the condition
$$
00v00,00v01,10v00,10v01\in\Fac(\tau(\G))
$$
we conclude that $0v=\tau(V)$ for some $V\in \Fac(\G)$, and all words $0V0,0V2,1V0,1V2$ are factors of $\G$ (the second conclusion uses the fact that $\G$ has no factor 21). Then both $0V$ and $1V$ are $\G$-special, contradicting the fact, shown in the proof of Lemma~\ref{l:G248}, that $\G$ has at most one special word of a given length ending with a given letter. The cases where the common suffix has the form $0v1,1v0$, or $1v1$ are similar and imply the same contradiction. Hence $D_{\tau(\G)}(n)=2$ for all $n\ge1$, and then $p_{\tau(\G)}(n)=2n$.

\medskip
For (i), we check that $\tau(\G)$ contains the $(\frac 52)$-power $0110011001$, and all other runs with small periods have smaller exponents. For large periods, we should check the exponents of runs which are obtained as images of runs from  two series described in the proof of Theorem~\ref{t:G248}. The images of runs in $\G$ are extended by runs in $\tau(\G)$ by 1 or 2 letters on the right ($\tau$-images of all letters begin with 0, $\tau$-images of 1 and 2 begin with 01) and by 0 or 1 letters two the left ($\tau$-images of 1 and 2 end with 1). For the series of runs in $\G$ starting with $2020$ exactly two letters are added to each $\tau(V_m)$ to make a run in $\tau(G)$. So if we write $v_m$ for the run in $\tau(\G)$ obtained from the $\tau$-image of $V_m$ and $B$ for the matrix of $\tau$, we get from \eqref{e:krieger2}
$$
\exp(v_m)= 2+
\frac{\lVert (1,0,1)\cdot \big(A^{m-2}+ A^{m-4}+\cdots +A^{m-2\lfloor m/2\rfloor} \big)\cdot B\rVert +2}
{\lVert (1,0,1)\cdot A^m\cdot B\rVert}\ .
$$
It is not hard to check that $\exp(v_0)=5/2$ and $\exp(v_m)< 5/2$ for all $m>0$.  Similar computation can be performed for the series of runs in $\G$ starting with $201201$; here $\exp(v_m)< 5/2$ for all $m$. As a result, we conclude that $\tau(\G)$ has critical exponent $5/2$, as required.
\end{proof}


\section{Large subword complexity in big languages}

If a language $L_{k,\alpha}$ has an exponential growth function, then it seems quite natural that there would be infinite $\alpha$-power-free words over $\Sigma_k$ having exponential subword complexity. For example, Currie and Rampersad \cite[Prop.~9]{Currie&Rampersad:2010} gave an example of a squarefree word over $\Sigma_3$ having exponential subword complexity.

Additional examples of such words can be provided using some standard techniques. Below we give the examples for the minimal binary and minimal ternary power-free languages of exponential growth.

\begin{theorem} \label{t:expex}
\leavevmode
\begin{enumerate}[(a)]
\item There is an infinite binary $(\frac 73)^+$-power-free word having exponential subword complexity.
\item There is an infinite ternary $(\frac 74)^+$-power-free word having exponential subword complexity.
\end{enumerate}
\end{theorem}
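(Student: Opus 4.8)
The plan is to produce, in each case, a morphism $h$ that maps the appropriate exponential-growth power-free language into the target language while being injective on that language, and then to transfer the exponential growth of the source to the image word. For part (a), I would choose a source that is an infinite $(\frac 73)^+$-power-free binary word $\mathbf{w}$ whose factor language $\Fac(\mathbf{w})$ already has exponential growth rate; such words exist precisely because $L_{2,(7/3)^+}$ has exponential growth (this is the first language past the polynomial plateau described in the introduction, and the Restivo–Salemi classification quoted there gives exponential growth for $\alpha > 7/3$). Concretely, since the language itself is exponential, one can pick an infinite word $\mathbf{w}$ that is \emph{recurrent} and contains, as factors, a set of blocks of exponential cardinality — for instance by building $\mathbf{w}$ so that $\Fac(\mathbf{w})$ equals the whole extendable language, or by a diagonal construction threading exponentially many distinct blocks into a single aperiodic word. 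The complexity $p_{\mathbf{w}}(n)$ is then exponential by construction, and $\mathbf{w}$ is itself the desired word, so part (a) reduces to exhibiting one such $\mathbf{w}$.

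For part (b), the cleanest route is to reuse the encoding machinery already developed: I would exploit the morphism $g: \Sigma_3 \to \Sigma_2$ of the previous subsection (mapping square-free ternary words to $(\frac 73)^+$-power-free binary words) together with an analogous morphism producing ternary $(\frac 74)^+$-power-free words. For the ternary statement I would invoke the Currie–Rampersad example cited just above: there is a square-free ternary word of exponential complexity. The idea is then to pass from square-free to $(\frac 74)^+$-power-free by composing with a suitable power-free morphism $h: \Sigma_3 \to \Sigma_3$ that sends square-free words to $(\frac 74)^+$-power-free words and is injective up to a bounded ambiguity. If $\mathbf{s}$ is a square-free ternary word with exponential complexity, I would argue that $h(\mathbf{s})$ is $(\frac 74)^+$-power-free and that $p_{h(\mathbf{s})}(n)$ is bounded below by $p_{\mathbf{s}}(cn)$ for some constant $c$, using the ``identifying factor'' counting argument employed in the preceding theorem: distinct preimage blocks of length $\sim n/|h(a)|$ force distinct image blocks of length $n$, so the image inherits exponential growth.

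The key technical step in both parts is the \emph{lower bound} on the complexity of the image word, i.e., showing that the morphism does not collapse exponentially many distinct source blocks into polynomially many image blocks. This is handled by the standard synchronization/recognizability argument: for a non-erasing morphism $h$ with uniform (or boundedly varying) block lengths, a factor of $h(\mathbf{s})$ of length $n$ determines its preimage factor of length roughly $n/\ell$ up to a bounded number $O(\ell)$ of phase choices, so $p_{h(\mathbf{s})}(n) \ge p_{\mathbf{s}}(\lfloor n/\ell \rfloor - O(1))$, which is exponential in $n$. I expect this synchronization step — verifying that the chosen morphism is injective on factors up to bounded phase, so that the count genuinely transfers — to be the main obstacle, since it requires checking that the block boundaries of $h$ can be recovered from the image (as in the ``identifying factor of length 8'' device used for $g$); once recognizability is established, the exponential lower bound is immediate and the power-freeness of the image follows from the stated properties of the morphisms.
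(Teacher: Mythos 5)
There are genuine gaps in both parts, and they are of different kinds. In part (a) your argument is circular: from the fact that the \emph{language} $L_{2,(7/3)^+}$ has exponential growth you infer that some single infinite word in it has exponential subword complexity, ``by a diagonal construction threading exponentially many distinct blocks into a single aperiodic word.'' But threading many extendable power-free blocks into one infinite power-free word requires being able to connect $u$ to $v$ by some $w$ with $uwv$ still power-free --- this is exactly the Restivo--Salemi problem, which the paper treats as open beyond the polynomial plateau (Theorem~\ref{t:bigmax}, Corollary~\ref{c:RS}, Conjecture~\ref{c:RSall}). So the existence statement you start from is precisely what has to be proved, and exponential growth of a factorial language does not by itself yield an infinite word whose factor set is exponential. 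In part (b) the proposed morphism $h:\Sigma_3\to\Sigma_3$ sending square-free words to $(\frac74)^+$-power-free words cannot exist: square-free ternary words may contain factors $xyx$ with $|y|$ small relative to $|x|$ (e.g., $abcbacabcba$, of exponent $11/6>7/4$), and for any non-erasing $h$ the image $h(x)h(y)h(x)$ has exponent at least $1+|h(x)|/|h(xy)|$, which stays above $7/4$. This is why the source word must already avoid powers near the target threshold; your plan never controls the critical exponent of the Currie--Rampersad word. Your final synchronization/recognizability paragraph is fine as far as it goes, but it only addresses the lower bound on the image complexity, not the two existence/power-freeness issues above.

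The paper's route avoids both problems by first going \emph{up} in alphabet size: take a known power-free word over $\Sigma_k$ (a square-free ternary word for (a); Pansiot's $(\frac75)^+$-power-free word over $\Sigma_4$ for (b)) and split one letter into two variants according to an arbitrary exponential-complexity word over a two-letter alphabet. Power-freeness is preserved trivially, since identifying the two variants is a coding under which any forbidden power in the new word projects to one in the original; exponential complexity is injected for free. Then map back down with a known power-freeness-preserving uniform morphism (Karhum\"aki--Shallit $\Sigma_4\to\Sigma_2$ for (a), Ochem $\Sigma_5\to\Sigma_3$ for (b)), which preserves exponential complexity by the synchronization argument you describe. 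If you want to salvage your write-up, replace both of your constructions with this letter-splitting step; the rest of your machinery then applies.
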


\begin{proof}
\leavevmode
\noindent (a) First, create an infinite square-free word over $\Sigma_4$ with exponential subword complexity. For this, take an infinite square-free word $\u$ over $\Sigma_3$ and an infinite word $\v$ of exponential complexity over $\{2,3\}$. For each $i\ge 1$, replace the $i$'th occurrence of  symbol $2$ in $\u$ with the $i$'th symbol of $\v$. The resulting word obviously satisfies the desired properties. Now apply the $21$-uniform morphism $h: \Sigma_4^* \rightarrow \Sigma_2^*$ from \cite[Lemma 8]{Karhumaki&Shallit:2004}. The lemma guarantees that the image of a square-free word is $(7/3)^+$-power-free, and every uniform injective morphism preserves the property of having exponential subword complexity.

\medskip

\noindent (b) Start with an infinite $(7/5)^+$-power-free word over $\Sigma_4$, which exists by Pansiot's result \cite{Pansiot:1984c}. As in (a), replace the occurrences of $3$ in this word by an infinite word over $\{ 3, 4\}$ with exponential subword complexity, getting an infinite $(7/5)^+$-power-free word over $\Sigma_5$ with exponential subword complexity. Now apply the morphism of Ochem \cite[Theorem 4.2]{Ochem:2006}.  The result is guaranteed to be $(7/4)^+$-free and to have exponential subword complexity.
\end{proof}

As usual, in such examples the growth rate of subword complexity is barely above 1. Are there words having the subword complexity comparable to the growth function of the whole language? It turns out that this problem is closely related to an old problem by Restivo and Salemi. In \cite{ReSa85a} they posed the following problem: \textit{given two square-free words $u,v\in\Sigma_3^*$, provide an algorithm deciding whether there is a word $w\in\Sigma_3^*$ such that $uwv$ is square-free}. The problem can be generalized to any language $L_{k,\alpha}$, and is still open except for the case of small binary languages (due to the existence of factorizations of type \eqref{e:73fact}, it is easy to connect any right extendable $u$ to any left extendable $v$ by an appropriate Thue-Morse factor). 

Clearly, the interesting part of the problem is formed by the case where $u$ is right extendable and $v$ is left extendable. To the best of our knowledge, there are no known tuples $(k,\alpha, u,v)$ such that $u$ is right extendable in $L_{k,\alpha}$, $v$ is left extendable in $L_{k,\alpha}$, and no word of the form $uwv$ belongs to $L_{k,\alpha}$. For our purposes, we restrict ourselves to the consideration of two-sided extendable words.

We say that a language $L_{k,\alpha}$ \textit{has the Restivo-Salemi property} if for every $u,v\in\ext(L_{k,\alpha})$ there is a word $w$ such that $uwv\in\ext(L_{k,\alpha})$.

\begin{theorem} \label{t:bigmax}
A power-free language $L_{k,\alpha}$ has the Restivo-Salemi property if and only if all words from $\ext(L_{k,\alpha})$ are factors of some $\alpha$-power-free infinite recurrent $k$-ary word $\u$.
\end{theorem}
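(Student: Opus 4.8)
The plan is to prove the two directions of the equivalence separately, with the forward direction ($\Rightarrow$) requiring the bulk of the work. Recall that we must show: the Restivo--Salemi property holds iff there is a single $\alpha$-power-free infinite recurrent word $\u$ whose factor set contains all of $\ext(L_{k,\alpha})$.

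I would begin with the easier direction ($\Leftarrow$). Suppose such a recurrent word $\u$ exists with $\ext(L_{k,\alpha})\subseteq\Fac(\u)$. Given any $u,v\in\ext(L_{k,\alpha})$, both are factors of $\u$, and since $\u$ is recurrent, each factor occurs infinitely often. In particular, pick an occurrence of $u$ in $\u$ and then a later occurrence of $v$; the word $w$ sitting strictly between that occurrence of $u$ and that later occurrence of $v$ yields $uwv\in\Fac(\u)$. Because $\u$ is recurrent and infinite, $uwv$ is itself two-sided extendable inside $\Fac(\u)\subseteq L_{k,\alpha}$, so $uwv\in\ext(L_{k,\alpha})$. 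This establishes the Restivo--Salemi property.

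For the harder direction ($\Rightarrow$), assume $L_{k,\alpha}$ has the Restivo--Salemi property; I must manufacture a single recurrent word absorbing all of $\ext(L_{k,\alpha})$. The natural approach is an enumeration-and-splicing construction. Enumerate $\ext(L_{k,\alpha})=\{u_1,u_2,u_3,\ldots\}$. Using the Restivo--Salemi property repeatedly, build a nested sequence of two-sided-extendable words: having produced a word $z_m\in\ext(L_{k,\alpha})$ containing $u_1,\ldots,u_m$ as factors, apply the property to glue in $u_{m+1}$, obtaining $z_{m+1}=z_m\,w\,u_{m+1}\in\ext(L_{k,\alpha})$ for a suitable connecting word $w$, so that $z_m$ is a prefix of $z_{m+1}$ and $u_{m+1}\in\Fac(z_{m+1})$. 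To guarantee \emph{recurrence}, I would interleave the construction so that each $u_i$ is reinserted infinitely often: rather than adding each $u_i$ just once, cycle through the list, at stage $m$ ensuring the current word contains a fresh occurrence of every $u_j$ with $j\le m$. Taking the limit of the prefixes $z_1\prec z_2\prec\cdots$ defines an infinite word $\u$; since each $z_m\in L_{k,\alpha}$ and $\alpha$-power-freeness is a factorial (prefix-closed) condition, $\u$ is $\alpha$-power-free, every $u_i\in\Fac(\u)$, and the interleaving forces every $u_i$ to recur infinitely often, hence $\u$ is recurrent.

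The main obstacle I anticipate lies in simultaneously securing recurrence and the invariant that the splicing stays inside $\ext(L_{k,\alpha})$ at every finite stage. The Restivo--Salemi property as stated only lets us connect two given two-sided-extendable words by \emph{some} $w$ with $uwv\in\ext(L_{k,\alpha})$; it does not immediately hand us control over where $u$ sits inside the result, so I must verify that each intermediate $z_m$ remains two-sided extendable (which is exactly what lets the next splice go through) and that the diagonal interleaving genuinely produces infinitely many occurrences of every $u_i$ rather than stalling on one element. A clean bookkeeping scheme---e.g.\ at stage $m$ appending a copy of $u_{(m\bmod t)+1}$ for the appropriate counter $t$---should resolve this, but the care is in checking that the limit word is recurrent and not merely contains each $u_i$ at least once. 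One should also note that $\ext(L_{k,\alpha})$ is nonempty and countable here, which is automatic since $L_{k,\alpha}\subseteq\Sigma_k^*$ is countable and, for the $\alpha$ under consideration, infinite.
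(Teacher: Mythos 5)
Your proposal is correct and follows the same overall strategy as the paper. For the backward direction, both arguments locate an occurrence of $u$ followed by a later occurrence of $v$ in the recurrent word $\u$ and use recurrence (plus the fact that $\u$ is infinite) to conclude $uwv\in\ext(L_{k,\alpha})$. For the forward direction, both iteratively splice the countably many elements of $\ext(L_{k,\alpha})$ into a single word via the Restivo--Salemi property, maintaining the invariant that each intermediate word lies in $\ext(L_{k,\alpha})$ so that the next splice is legal (and note your worry about ``where $u$ sits'' is unfounded: the property delivers $uwv$ with $u$ literally a prefix). The one genuine divergence is how recurrence of the limit word is secured. You dovetail, re-appending each $u_i$ at infinitely many stages; this works, but to conclude recurrence you also need that \emph{every} factor of $\u$, not only the designated $u_i$, recurs --- this follows because $\ext(L_{k,\alpha})$ is factorial, so every factor of $\u$, being a factor of some $z_m\in\ext(L_{k,\alpha})$, is itself one of the $u_i$; you should state this. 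The paper instead inserts each element of $\ext(L_{k,\alpha})$ only once (always the radix-least one not yet present) and obtains recurrence for free from a structural fact: for every $v\in\ext(L_{k,\alpha})$ and every bound $N$ there is a word $x$ with $|x|>N$ and $xv\in\ext(L_{k,\alpha})$, so $xv$, and hence an occurrence of $v$ at position greater than $N$, appears in $\u$. Your version is more self-contained, avoiding this auxiliary fact about $\ext$, at the cost of slightly more delicate bookkeeping; both routes are sound.
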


An already mentioned result of \cite{Shur08b} says that $L$ and $\ext(L)$ have the same growth rate, so Theorem~\ref{t:bigmax} implies the following.

\begin{corollary} \label{c:RS}
If a power-free language $L_{k,\alpha}$ possesses the Restivo-Salemi property, then there is an $\alpha$-power-free infinite $k$-ary word $\u$ with subword complexity having the same growth rate as $L_{k,\alpha}$.
\end{corollary}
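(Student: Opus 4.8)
The final statement, Corollary~\ref{c:RS}, is a short consequence of Theorem~\ref{t:bigmax} together with the growth-rate result of \cite{Shur08b}, so the plan is to dispatch the corollary first and then explain how I would prove the theorem on which it rests, which is where the real work lies. For the corollary: assuming the Restivo--Salemi property, Theorem~\ref{t:bigmax} supplies an $\alpha$-power-free infinite recurrent word $\u$ with $\ext(L_{k,\alpha})\subseteq\Fac(\u)$. Since $\u$ is $\alpha$-power-free, every factor of $\u$ is $\alpha$-power-free, whence $\Fac(\u)\subseteq L_{k,\alpha}$; thus $\ext(L_{k,\alpha})\subseteq\Fac(\u)\subseteq L_{k,\alpha}$. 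Growth rate is monotone under language inclusion, and by \cite{Shur08b} the outer two languages $\ext(L_{k,\alpha})$ and $L_{k,\alpha}$ share a growth rate, so the sandwiched language $\Fac(\u)$ has that same growth rate. That is, $p_{\u}(n)$ has the same growth rate as $L_{k,\alpha}$, which is exactly the corollary.

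It remains to sketch Theorem~\ref{t:bigmax}. The easy direction is to assume such a word $\u$ exists and recover the Restivo--Salemi property. Given $u,v\in\ext(L_{k,\alpha})\subseteq\Fac(\u)$, recurrence of $\u$ provides an occurrence of $u$ followed, at a later position, by an occurrence of $v$, and the material in between is a word $w$ with $uwv\in\Fac(\u)$. I would then observe that every factor of a recurrent infinite word is two-sided extendable in its own factor language: right extendability is automatic for any infinite word, and left extendability follows because the factor recurs at positions tending to infinity. Since $\Fac(\u)\subseteq L_{k,\alpha}$, extendability there upgrades to extendability in $L_{k,\alpha}$, so $uwv\in\ext(L_{k,\alpha})$, as required.

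The harder direction, building $\u$ from the Restivo--Salemi property, I would handle by a chaining construction. First I record two structural facts: $\ext(L_{k,\alpha})$ is factorial, since any factor of a two-sided extendable word is again two-sided extendable; and it is countable, so it can be enumerated and interleaved into a sequence $e_1,e_2,\dots$ in which each of its members appears infinitely often (for instance via the triangular listing of an enumeration $z_1,z_2,\dots$). Starting from $y_0=\varepsilon$ and assuming inductively that $y_i\in\ext(L_{k,\alpha})$, the Restivo--Salemi property gives a glue word $w_i$ with $y_iw_ie_{i+1}\in\ext(L_{k,\alpha})$, and I set $y_{i+1}=y_iw_ie_{i+1}$. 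Each $y_i$ is a prefix of $y_{i+1}$, the lengths increase without bound, and the limit is an infinite word $\u$. As every factor of $\u$ lies in some $y_i\in L_{k,\alpha}$, the word $\u$ is $\alpha$-power-free; and since each $e_{i+1}$ is a suffix of $y_{i+1}$, every member of $\ext(L_{k,\alpha})$ occurs in $\u$, giving $\ext(L_{k,\alpha})\subseteq\Fac(\u)$.

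The main obstacle is recurrence: a priori the glue words $w_i$ might inject factors that never recur. This is precisely where factoriality of $\ext(L_{k,\alpha})$ is decisive. Since each $y_i\in\ext(L_{k,\alpha})$ and this language is factorial, every factor of every $y_i$---hence every factor of $\u$---already belongs to $\ext(L_{k,\alpha})$, so in fact $\Fac(\u)=\ext(L_{k,\alpha})$. Each member of $\ext(L_{k,\alpha})$ equals $e_i$ for infinitely many $i$, and these occurrences sit at positions tending to infinity, so every element of $\ext(L_{k,\alpha})$, i.e.\ every factor of $\u$, occurs infinitely often; hence $\u$ is recurrent, which finishes the theorem. (The degenerate cases where $\ext(L_{k,\alpha})$ is empty or finite are easy to treat separately; for the power-free languages under study it is infinite, so $\u$ is genuinely infinite and recurrent.)
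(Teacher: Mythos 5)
Your derivation of the corollary is correct and is essentially the paper's: Theorem~\ref{t:bigmax} sandwiches $\Fac(\u)$ between $\ext(L_{k,\alpha})$ and $L_{k,\alpha}$, and the result of \cite{Shur08b} that these two outer languages share a growth rate forces $p_{\u}(n)$ to have that growth rate as well. Your supplementary sketch of Theorem~\ref{t:bigmax} is also sound and differs from the paper only in bookkeeping: you enumerate $\ext(L_{k,\alpha})$ with each word repeated infinitely often and use factoriality of $\ext$ to get recurrence, whereas the paper glues in the radix-smallest missing word and obtains recurrence from the existence of arbitrarily long left extensions $xv\in\ext(L_{k,\alpha})$; both variants work.
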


\begin{proof}[Proof of Theorem~\ref{t:bigmax}]
For the forward implication we endow $\Sigma_k^*$ with the radix order (the words are ordered by length, and the words of equal length are ordered lexicographically) and build the word $\u$ by induction. As the base case, we build the prefix $u_0=0$. For the inductive step, assume that the prefix $u_n$ was constructed so far. Let $v_n$ be the smallest in radix order word from $\ext(L_{k,\alpha})$ that is not a factor of $u_n$. Then we take $w_n$ such that $u_nw_nv_n\in \ext(L_{k,\alpha})$ and put $u_{n+1}=u_nw_nv_n$. The resulting word $\u$ is $\alpha$-power-free by construction. Further, every word $v\in  \ext(L_{k,\alpha})$ is a factor of some $u_n$ and thus of $\u$. Finally, for an arbitrary $v\in  \ext(L_{k,\alpha})$ and every $n$, there is a word $x$ such that $|x|>|u_n|$ and $xv \in  \ext(L_{k,\alpha})$; since $\u$ contains the factor $xv$, there is an occurrence of $v$ in $\u$ outside the prefix $u_n$. Hence $v$ occurs in $\u$ infinitely many times. Thus $\u$ is recurrent and we proved this implication.

Now turn to the backward implication. For arbitrary words $u,v\in \ext(L_{k,\alpha})$ each of them occurs in $\u$ infinitely often, so we can find a factor of the form $uwv$. This factor also occurs in $\u$ infinitely often, allowing us to find arbitrarily long words $x,y$ such that $xuwvy$ is a factor of $\u$. Then $uwv\in \ext(L_{k,\alpha})$. Hence we proved the Restivo-Salemi property for $L_{k,\alpha}$.
\end{proof}

\begin{remark}
It is worth mentioning that for small binary languages Theorem~\ref{t:bigmax} works in an extremal form. Since $2^+\le\alpha\le 7/3$ implies $\ext(L_{2,\alpha})=\Fac(\t)$, the language $L_{2,\alpha}$ trivially has the Restivo-Salemi property; as we know from Corollary~\ref{c:73TM}, \textit{all} $\alpha$-power-free infinite binary words contain all words of $\ext(L_{2,\alpha})$ as factors.
\end{remark}

The following conjecture is based on extensive numerical studies.

\begin{conjecture} \cite[Conjecture 1]{Sh09dlt} \label{c:RSall}
All power-free languages satisfy the Restivo-Salemi property.
\end{conjecture}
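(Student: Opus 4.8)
The plan is to prove Conjecture~\ref{c:RSall} by establishing the \emph{connection property} directly: for every admissible pair $(k,\alpha)$ and every $u,v\in\ext(L_{k,\alpha})$, to produce a word $w$ with $uwv\in\ext(L_{k,\alpha})$ (the degenerate cases, where no infinite $\alpha$-power-free word exists, hold vacuously). The first step I would take is to recast this in symbolic-dynamical terms. Let $X_{k,\alpha}\subseteq\Sigma_k^{\mathbb Z}$ be the subshift of all bi-infinite $\alpha$-power-free words. A K\"onig's-lemma compactness argument shows that the factor set of $X_{k,\alpha}$ is exactly $\ext(L_{k,\alpha})$: a finite word is two-sided extendable in $L_{k,\alpha}$ iff it occurs in some bi-infinite $\alpha$-power-free word. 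Under this dictionary the connection property is precisely \emph{topological transitivity} of $X_{k,\alpha}$, so the conjecture becomes the single clean statement that every nonempty power-free subshift is topologically transitive. As a bonus, a transitive point fed into the forward implication of Theorem~\ref{t:bigmax} yields the universal recurrent word of Corollary~\ref{c:RS}, but the direct target is transitivity.

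To attack transitivity I would approximate $X_{k,\alpha}$ from above by shifts of finite type. For each $N$ let $X^{(N)}$ forbid exactly those $\alpha$-powers of period at most $N$; these are finitely many words of bounded length, so $X^{(N)}$ is an SFT, and clearly $X_{k,\alpha}=\bigcap_N X^{(N)}$. The plan is to show that each $X^{(N)}$ is irreducible and, crucially, that any two of its factors $u,v$ can be joined by a bridge $w$ whose length is bounded by a function of $|u|,|v|$ that is \emph{uniform in} $N$. Irreducibility of the SFTs in the remaining (exponential-growth) regime should follow from the positive growth rate of $L_{k,\alpha}$: the admissible right extensions of $u$ and the left extensions of $v$ of a given length are each exponentially numerous, so one should be able to match a right extension of $u$ against a left extension of $v$ on a long common overlap. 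Granting the uniform bridge bound, a diagonal/compactness argument would promote the connection from each $X^{(N)}$ to the intersection $X_{k,\alpha}$, giving transitivity and hence the conjecture.

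The hard part will be the uniform bridge bound, and this is exactly where the problem resists proof. Power-freeness is a long-range constraint: a bridge $w$ that avoids short repetitions locally can still create an arbitrarily long repetition by interacting with a distant portion of $u$ or $v$, and transitivity is notoriously \emph{not} preserved under the inverse limit $\bigcap_N X^{(N)}$, so controlling all periods simultaneously is the crux rather than a technicality. It is instructive that the only settled instances are the small binary languages $2^+\le\alpha\le 7/3$, where $\ext(L_{2,\alpha})=\Fac(\t)$ by Corollary~\ref{c:73TM} and the $\mu$-factorization of Lemma~\ref{l:73fact}, so the connection collapses to joining two Thue--Morse factors by a Thue--Morse factor; in the exponential regime there is no comparable rigidity. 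Accordingly I would pursue the SFT-approximation-plus-uniform-bridge line as the most promising route, while being candid that identifying the right substitute for the $\mu$-factorization—the structural device that would supply the uniform bound—is the essential obstacle that keeps the statement a conjecture rather than a theorem.
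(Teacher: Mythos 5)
This statement is a \emph{conjecture} in the paper: the authors offer no proof of it, only numerical evidence, the reformulation in Theorem~\ref{t:bigmax} (the Restivo--Salemi property for $L_{k,\alpha}$ is equivalent to all of $\ext(L_{k,\alpha})$ occurring in a single recurrent $\alpha$-power-free word), and the resulting Corollary~\ref{c:RS}. So there is no paper proof to measure your attempt against, and, as you yourself concede in your final paragraph, your proposal is not a proof either: it is a plan whose decisive step is explicitly left open. For the purposes of this exercise that is a genuine gap, and it sits exactly where you say it does.

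Concretely, two links in your chain are unsupported. First, the irreducibility of each SFT approximation $X^{(N)}$ is not a consequence of the exponential growth of $L_{k,\alpha}$: a factorial language can have exponential growth while splitting into several mutually unreachable ``components'' (growth of $\ext(L)$ controls cardinality, not connectivity), so ``matching a right extension of $u$ against a left extension of $v$ on a long common overlap'' is essentially the connection property you are trying to prove, not a tool for proving it. Second, even granting irreducibility of every $X^{(N)}$ with a bridge bound uniform in $N$, the passage to $X_{k,\alpha}=\bigcap_N X^{(N)}$ requires that the bridges can be chosen \emph{consistently} as $N\to\infty$; a diagonal argument extracts a limit word containing $u$, but not obviously one in which the chosen occurrence of $u$ is still followed by $v$ after a bounded gap, since the bridge witnessing period-$\le N$ freeness may have to change entirely when $N$ increases. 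Your own remark that transitivity is not preserved under such inverse limits is precisely the obstruction. The reformulation as topological transitivity of the power-free subshift is a clean and correct dictionary (and is consistent with Theorem~\ref{t:bigmax}), but the proposal does not advance beyond restating the conjecture in that language.
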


As an approach to Conjecture~\ref{c:RSall}, we suggest the following.

\begin{openquestion}
Prove the converse of Corollary~\ref{c:RS}.
\end{openquestion}

\bibliographystyle{new}
\bibliography{abbrevs,sc2}

\end{document}